\crefname{section}{Section}{Sections}
\crefname{subsection}{Subsection}{Subsections}
\crefname{appendix}{Appendix}{Appendix}
\crefname{figure}{Figure}{Figures}
\crefname{table}{Table}{Tables}
\crefname{property}{Property}{Properties}
\crefname{theorem}{Theorem}{Theorem}
\crefname{criterion}{criterion}{criteria}
\newtheorem{theorem}{Theorem}
\newtheorem{lemma}[theorem]{Lemma}
\newtheorem{property}[theorem]{Property}
\newtheorem{corollary}[theorem]{Corollary}
\newtheorem{remark}{Remark}
\newcommand\bA{{\bf A}}
\newcommand\bD{{\bf D}}
\newcommand\bM{{\bf M}}
\newcommand\bR{{\bf R}}
\newcommand\bG{{\bf G}}
\newcommand\bLambda{{\bf \Lambda}}
\newcommand\bbR{\mathbb{R}}
\newcommand\bbN{\mathbb{N}}
\newcommand\bbH{\mathbb{H}}
\newcommand\bbS{\mathbb{S}}
\newcommand\bg{\boldsymbol{g}}
\newcommand\be{\boldsymbol{e}}
\newcommand\ba{\boldsymbol{a}}
\newcommand\bb{\boldsymbol{b}}
\newcommand\bp{\boldsymbol{p}}
\newcommand\bc{\boldsymbol{c}}
\newcommand\bw{\boldsymbol{w}}
\newcommand\bx{\boldsymbol{x}}
\newcommand\bF{{\boldsymbol{F}}}
\newcommand\bU{{\boldsymbol{U}}}
\newcommand\bOmega{\bm{\Omega}}
\newcommand\bxi{\bm{\xi}}
\newcommand\dd{\,\mathrm{d}}
\newcommand\PN{{$P_N$ }}
\newcommand\MN{{$M_N$ }}
\newcommand\SN{{$S_N$ }}
\newcommand\MP[1]{{${M\!P}_{#1}$ }}
\newcommand\MPN{{\MP{\!N}}}
\newcommand\Mone{{$M_1$ }}
\newcommand\HMP[1]{{${H\!M\!P}_{#1}$ }}
\newcommand\HMPN{{\HMP{\!N}}}
\newcommand\mE{{\mathcal{E}}}
\newcommand\mS{{\mathcal{S}}}
\newcommand\mP{\mathcal{P}}
\newcommand\mPn{\tilde{\mathcal{P}}}
\newcommand\mK{{\mathcal{K}}}
\newcommand\mI{{\mathcal{I}}}
\newcommand\mKn{{\tilde{\mK}}}
\newcommand\weight{\omega^{[\bc_0]}}
\newcommand\weightoned{\omega^{[c_0]}}
\newcommand\pl{\phi^{[\bc_0]}}
\newcommand\ploned{\phi^{[c_0]}}
\newcommand\Pl{\Phi^{[\bc_0]}}
\newcommand\Ploned{\Phi^{[c_0]}}
\newcommand\weightn{\tilde{\omega}^{[\bc_0]}}
\newcommand\weightnoned{\tilde{\omega}^{[c_0]}}
\newcommand\Pln{\tilde{\Phi}^{[\bc_0]}}
\newcommand\Plnoned{\tilde{\Phi}^{[c_0]}}
\newcommand\pln{\tilde{\phi}^{[\bc_0]}}
\newcommand\plnoned{\tilde{\phi}^{[c_0]}}
\newcommand\spaceH{\bbH^{[\bc_0]}}
\newcommand\spaceHoned{\bbH^{[c_0]}}
\newcommand\spaceHn{\tilde{\bbH}^{[\bc_0]}}
\newcommand\spaceHnoned{\tilde{\bbH}^{[c_0]}}
\newcommand\pd[2]{\dfrac{\partial {#1}}{\partial {#2}}}
\newcommand\inner[2]{\left\langle{#1},{#2}\right\rangle_{\spaceH_N}}
\newcommand\innern[2]{\left\langle{#1},{#2}\right\rangle_{\spaceHn_N}}
\newcommand\moment[1]{{\langle#1\rangle}}
\numberwithin{equation}{section}
\newcommand\delete[1]{}
\title{A Nonlinear Moment Model for Radiative Transfer Equation}
\author{ Ruo Li\thanks{CAPT, LMAM \& School of Mathematical Sciences,
    Peking University, Beijing, China, email: {\tt
      rli@math.pku.edu.cn}},~~
      Peng Song\thanks{Institute of Applied Physics and Computational
      Mathematics, Beijing, China, email: {\tt
      song\_peng@iapcm.ac.cn}} 
      ~and~ Lingchao Zheng\thanks{School of
    Mathematical Sciences, Peking University, Beijing, China, email:
    {\tt lczheng@pku.edu.cn}} }
\date{\today}
\begin{document}
\maketitle

\begin{abstract}
  We derive a nonlinear moment model for radiative transfer equation
  in 3D space, using the method to derive the nonlinear moment model
  for the radiative transfer equation in slab geometry. The resulted 3D
  \HMPN model enjoys a list of mathematical advantages, including
  global hyperbolicity, rotational invariance, physical wave speeds,
  spectral accuracy, and correct higher-order Eddington
  approximation. Simulation examples are presented to validate the new
  model numerically.

  \vspace*{4mm}
  \noindent {\bf Keywords:} Radiative transfer equation; moment
  method; nonlinear model; global hyperbolicity.

\end{abstract}


\section{Introduction}
The radiative transfer equation (RTE) depicts the motion of photons
and their interaction with the background medium. It has lots of
applications, such as radiation astronomy \cite{mihalas1978stellar},
optical imaging \cite{klose2002optical, tarvainen2005hybrid}, neutron
transport in reactor physics \cite{pomraning1973equations,
  duderstadt1979transport}, light transport in atmospheric radiative
transfer \cite{marshak20053d} and heat transfer
\cite{koch2004evaluation}. Due to the integro-differential form and
the high-dimensionality of RTE, how to develop efficient methods to
solve it numerically is an important but challenging topic. So far,
the commonly used numerical methods can be categorized into two groups: the
probabilistic methods, like the direct simulation Monte Carlo (DSMC) method
\cite{fleck1971, Bird, hayakawa2007coupled, densmore2012hybrid,
  abdikamalov2012new}, and the deterministic methods
\cite{broadwell1964study, larsen2010advances,
  Stamnes1988Electromagnetic, jeans1917stars, Davison1960on,
  dubroca1999theoretical, minerbo1978maximum,
  alldredge2016approximating, fan2018fast, MPN, HMPN}, such as the
discrete ordinates method ($S_N$) \cite{broadwell1964study,
  larsen2010advances, Stamnes1988Electromagnetic}, the moment methods
\cite{jeans1917stars, Davison1960on, dubroca1999theoretical,
  alldredge2016approximating, MPN, HMPN} and etc.

The discrete ordinates method ($S_N$) is one of the most popular
numerical methods to simulate the RTE, which solve the RTE along with
a discrete set of angular directions from a given quadrature
set. However, the \SN model assumes that the particles can only move
along the directions in the quadrature set, thus once the coordinate
system is rotated, the results of the \SN model can be different. The
lack of rotational invariance results in numerical artifacts, known as
\emph{ray effects} \cite{larsen2010advances}. 

In order to reduce the complexity of the RTE, the moment method
focuses on the evolution of a finite number of moments of the specific
intensity, which avoids the high-dimensionality of directly solving the
RTE. Since the governing equation of a lower order moment commonly
contains higher order moments, the moment system is often not
automatically closed. Hence one has to take a \emph{moment closure} to
close the moment system. A practical method for the moment closure is
to construct an ansatz to approximate the specific intensity. The
pioneer works in moment method include the spherical harmonics method
($P_N$) \cite{pomraning1973equations} and the maximum entropy method
($M_N$) \cite{levermore1996moment, dubroca1999theoretical,
  minerbo1978maximum}. The \PN model constructs the ansatz using
spherical harmonic polynomials. It can be regarded as a polynomial
expansion of the specific intensity around the equilibrium, which is
a constant function. One of the flaws is that the resulting system may
lead to nonphysical oscillations, or even worse, negative particle
concentration \cite{brunner2001one, brunner2005two,
  mcclarren2008solutions}. The \MN model constructs the ansatz using
the principle of maximum entropy, as the maximum entropy closure for
Boltzmann equation \cite{levermore1996moment,
  dubroca1999theoretical}. Unfortunately, no explicit expression of
the moment closure for \MN model can be given when the order
$N \geq 2$. To implement the model numerically, one has to solve an
ill-conditioned optimization problem to obtain an approximate moment
closure. This almost prohibits the application of the \MN model.

Recently, a nonlinear moment model (called the \MPN model) was
proposed in \cite{MPN} for the RTE in slab geometry. This model takes
the ansatz of the \Mone model (the first order \MN model) as the
weight function, then constructs the ansatz by expanding the specific
intensity around the weight function in terms of orthogonal
polynomials in the velocity variables. Numerical examples in
\cite{MPN} demonstrated a quite promising performance as an improved
approximation of the intensity in comparison of the \PN model. 
The \MPN model was further improved in \cite{HMPN} by a
globally hyperbolic regularization following the framework developed
in \cite{Fan, Fan_new, framework, Fan2015}. We note that the
regularization in \cite{HMPN} is a subtle modification of the work in
\cite{framework} instead of a direct application. Otherwise, the
resulting system may change the \Mone model, which leads to a wrong
higher-order Eddington approximation. Eventually, the \HMPN model was
proposed in \cite{HMPN} with not only global hyperbolicity, but also a
physical higher-order Eddington approximation.

Encouraged by the elegant mathematical structure and the promising
numerical performance of the \HMPN model for RTE in slab geometry, we
in this paper try to extend the method to derive the \HMPN model for 3D
problems. The steps of the extension are clear while there are still
numerous difficulties. Fortunately, the 3D \Mone model is explicit,
which allow us to construct the ansatz to approximate the specific
intensity using again the weighted polynomials with the weight
function is the ansatz of the \Mone model. To construct the function space of
the weighted polynomials, we need to give the orthogonal polynomial basis
with respect to the weight function. For the slab geometry \cite{MPN},
this can be implemented by a simple Gram-Schmidt orthogonalization.
For the 3D case, we have to use quasi-orthogonal polynomials rather
than orthogonal polynomials. Otherwise, it can be extremely
involving to accomplish the calculation, which makes further analysis
to the resulted model prohibited. We propose a procedure to make a
quasi Gram-Schmidt orthogonalization to obtain the quasi-orthogonal
polynomials in explicit expressions. This provides us a 3D \MPN model
in explicit formation, which can be mathematically analyzed. To achieve
global hyperbolicity, we still adopt the method in \cite{HMPN} to
regularize the 3D \MPN model. Quite smoothly a globally hyperbolic 3D
\HMPN model is eventually attained with a list of fabulous
mathematical natures inherited from its 1D counterpart for the slab
geometry. The resulted model is rotational invariant, with wave speeds
not greater than that of light, spectral approximation accuracy, and
correct higher-order Eddington approximation. We carry out preliminary
numerical simulating using an abruptly splitting scheme to validate the
new 3D \HMPN model. Some numerical examples on typical problems are
presented with satisfactory performance.

The rest of this paper is arranged as follows. In
\Cref{sec:preliminiaries}, we briefly introduce the moment methods for
RTE, and review how the \MPN and the \HMPN model in slab geometry were
derived in \cite{MPN, HMPN}. In \cref{sec:model}, we derive the 3D
\MPN model and prove that the model is rotational invariant. The
hyperbolic regularization is applied to give the \HMPN model in
\cref{sec:hyper}. The model is analyzed in detail therein. In
\cref{sec:num}, we introduce the numerical scheme to carry out
numerical simulations and present some numerical examples. The paper
is then ended with a short conclusion remarks.


\section{Preliminary} \label{sec:preliminiaries}

To model radiative transfer, the governing equation is a
time-dependent equation of the \emph{specific intensity} $I$ as
\begin{equation} \label{eq:radiativetransfer}
  \frac{1}{c}\pd{I}{t}+\bOmega\cdot \nabla_{\bx} I = \mS(I),
\end{equation}
where $c$ is the speed of light, and the specific intensity
$I=I(t, \bx; \bOmega, \nu)$ depends on time $t\in\bbR^+$, the spatial
coordinate of the photon $\bx\in\bbR^3$, the velocity direction
$\bOmega \in \bbS^2$ and the frequency $\nu \in \mathbb{R}^+$. In this
paper, our study omits the independent variable $\nu$ that $I$ is a
function of $t$, $\bx$ and $\bOmega$ only. The right hand side
$\mS(I)$ denotes the actions by the background medium on the
photons. A form of $\mS(I)$ adopted commonly was given in \cite{Bru02,
  McClarren2008Semi} as
\begin{equation} \label{eq:source_pre}
  \mS(I) = -\sigma_t I +
  \frac{1}{4\pi}ac\sigma_a T^4 + \frac{1}{4\pi}\sigma_s \int_{\bbS^2} I
  \dd\bOmega + \frac{s}{4\pi},
\end{equation}
where $a$ is the radiation constant, and $s=s(t, \bx)$ is an isotropic
external source of radiation.  The scattering coefficient $\sigma_s$,
the absorption coefficient $\sigma_a$, and the material temperature $T(t, \bx)$
depend on time $t$ and the spatial position $\bx$.
The total opacity coefficient is $\sigma_t=\sigma_a+\sigma_s$. 

In case that the problems slab geometry and spherical symmetric
geometry are considered, the 3D RTE \eqref{eq:radiativetransfer} can
be simplified to 1D problem. Precisely, in the slab geometry, the
specific intensity depends only upon the single spatial coordinate $z$
and the single angular coordinate $\arccos\mu$, the angle between
$\bOmega$ and the $z$-axis. Then the specific intensity becomes
$I = I(z, \mu)$, and \eqref{eq:radiativetransfer} is simplified as
\begin{equation}\label{eq:slab-geometry}
  \dfrac{1}{c}\pd{I}{t} + \mu \pd{I}{z} = \mS(I).
\end{equation}
The spherical geometry with perfect symmetry is a slightly more
complicated case. The RTE, where the specific intensity depends upon
only on the distance from the origin $r=\Vert \bx\Vert$, and the
angular variable $\arccos\mu$, which is the angle between $\bOmega$
and $\bx$. In this case, $I=I(r,\mu)$, and the 3D RTE
\eqref{eq:radiativetransfer} is simplified as
\begin{equation}\label{eq:spherical-geometry}
  \dfrac{1}{c}\pd{I}{t} + \mu \pd{I}{r} +\dfrac{1-\mu^2}{r} \pd{I}{\mu} =
  \mS(I).
\end{equation}
In \cite{alldredge2016approximating, MPN, HMPN, LI2020285}, 
for slab geometry and spherical symmetric geometry, 
some nonlinear moment models had been derived with
global hyperbolicity and promising performance in handling problems
with fair extreme specific intensity functions. The major aim of this
paper is to develop models for 3D problems with similar techniques.

At first, we define the moments of the 3D specific intensity. Let
$\alpha\in\bbN^3$ be a 3D multi-index, i.e.
$\alpha = (\alpha_1, \alpha_2, \alpha_3)^T$, $\alpha_1,\alpha_2,\alpha_3\in\bbN$. 
We define a function of
$t$, and $\bx$, denoted by $\moment{I}_{\alpha}(t, \bx)$, as
\begin{equation} \label{eq:momentsdefine} \moment{I}_{\alpha}(t, \bx)
  \triangleq \int_{\bbS^2} \bm\Omega^{\alpha} I(t, \bx;
  \bm\Omega)\dd\bm\Omega, \quad \alpha \in \bbN^3,
\end{equation} 
where
$\bOmega^{\alpha} = \Omega_1^{\alpha_1} \Omega_2^{\alpha_2}
\Omega_3^{\alpha_3}$. We call that $\moment{I}_{\alpha}$ is {\it the
  $\alpha$-th moment} of the specific intensity $I$.

Notice that $\bOmega \in \bbS^2$ implies 
$\Vert \bOmega \Vert = 1$, thus one has
\[
  \sum_{d=1}^{3} \moment{I}_{\alpha+2e_d} = \moment{I}_{\alpha},
  \qquad \forall \alpha\in\bbN^3,
\]
where $e_d$ represents the multi-index whose $d$-th index is 1, and
the else two indexes are 0. Therefore, we only need to consider these
moments $\moment{I}_{\alpha}$, $\alpha \in \mI$, where $\mI$ is a set
of multi-indexes, defined by
\begin{equation}
  \label{eq:setmultiindex}
  \mI \triangleq \left\{ \alpha: \alpha\in\bbN^3, \alpha_3\leq 1 \right\}. 
\end{equation}
The {\it order of the multi-index} $\alpha$ is defined as
$|\alpha|=\sum_{d=1}^{3} \alpha_d$, and we denote that
$\mI_N \triangleq \left\{ \alpha: \alpha \in \mI, |\alpha| \leq N
\right\}$. It is clear that once
$\left\{ \moment{I}_{\alpha} : \alpha \in \mI_N \right\}$ is
determined, one can obtain
$\left\{\moment{I}_{\alpha} : \alpha \in \bbN^3, |\alpha| \leq N
\right\}$. This allows us to discuss $\moment{I}_{\alpha}$ for
$\alpha\in\mI$ only to derive reduced models.

Multiplying \eqref{eq:radiativetransfer} by $\bOmega^{\alpha}$, and
taking the integration with respect to $\bOmega$ over $\bbS^2$, one can
have
\begin{equation} \label{eq:momentequations}
  \frac{1}{c}\pd{\moment{I}_{\alpha}}{t} + \sum_{d=1}^{3}\pd{
    \moment{I}_{\alpha+e_d}}{x_d} = \moment{\mS(I)}_{\alpha}, \quad
  {\alpha} \in \mI.
\end{equation}

In order to derive a moment model for \eqref{eq:radiativetransfer}, we
first truncate the system by discarding all the governing equations of
high order moments $\moment{I}_{\alpha}$, where $|\alpha|>N$, for a
given integer $N \in \bbN$. The truncated moment system is
\begin{equation} \label{eq:truncatedmomentequations} 
  \frac{1}{c}\pd{\moment{I}_{\alpha}}{t}
  + \sum_{d=1}^{3}\pd{ \moment{I}_{\alpha+e_d}}{x_d} = \moment{\mS(I)}_{\alpha},
  \quad {\alpha} \in \mI_N.
\end{equation}
However, the governing equations of $\moment{I}_{\alpha}$,
$| \alpha | = N$, involve three $N+1$ order moments
$\moment{I}_{\alpha+e_d}$ for $d=1,2,3$, thus the truncated system
\eqref{eq:truncatedmomentequations} is not closed.  Therefore, we need to
determine all these moments,
$\left\{ \moment{I}_{\alpha}, \alpha \in \mI_{N+1} \right\}$ to make
this truncated system \eqref{eq:truncatedmomentequations} closed.

We divide the moments in
$\left\{ \moment{I}_{\alpha}, \alpha \in \mI_{N+1} \right\}$ into two
parts: lower-order moments (also referred as {\it known moments} later
on), and higher-order moments (also referred as {\it unknown moments}
later on).
\begin{equation}
  \begin{array}[H]{ccc}
    \moment{I}_{\alpha}, \alpha\in\mI_{N+1}, |\alpha|\leq N
    & \qquad \qquad &\moment{I}_{\alpha}, \alpha\in\mI_{N+1}, |\alpha|=N+1\\
    \Downarrow & & \Downarrow \\
    \text{lower-order moments}&&\text{higher-order moments}\\
    \text{({\it known moments})}&& \text{({\it unknown moments})}
  \end{array}
\end{equation}
The aim of the so-called \emph{moment closure} to this system is to
approximate the unknown moments as functions of known moments, saying
to give a formulation as
\begin{equation}\label{eq:momentclosure}
  \moment{I}_{\alpha} \approx E_{\alpha} = E_{\alpha}(\moment{I}_{\beta},
  \beta\in \mI_N ), \quad \text{for }
  \alpha\in\mI_{N+1}, |\alpha|=N+1.
\end{equation}

To achieve this goal, a practical approach is to construct an ansatz
for the specific intensity. Precisely, let $E_{\alpha}$,
$\alpha\in\mI_N$, be the known moments for a certain unknown specific
intensity $I$. Then one may propose an expression
$\hat{I}(\bOmega;E_{\alpha}, \alpha\in\mI_N)$, called an \emph{ansatz}
to approximate $I$, such that
\begin{equation} \label{eq:momentrelation}
  \moment{\hat{I}(\cdot;E_{\alpha},\alpha\in\mI_{N})}_{\alpha} = E_{\alpha},
  \quad \alpha\in\mI_{N}.
\end{equation}
Often we require that $\hat{I}$ is uniquely determined by the
consistency relations \eqref{eq:momentrelation}. With $\hat{I}$ given,
the higher-order moments of $I$ are then approximated by the
higher-order moments of $\hat{I}$, i.e.,
\begin{equation}\label{eq:closure}
  E_{\alpha} = \moment{\hat{I}(\cdot; E_{\beta}, \beta\in\mI_{N})}_{\alpha}, \quad 
  \alpha\in\mI_{N+1}, |\alpha|=N+1.
\end{equation}
Therefore, one may take the closed moment system
\begin{equation}
  \label{eq:momentsystem}
\frac{1}{c} \pd{E_\alpha}{t} +\sum_{d=1}^{3} \pd{E_{\alpha+e_d}}{x_d} = \langle \mS(\hat{I})
\rangle_{\alpha}, \quad \alpha\in\mI_{N},
\end{equation}
as the reduced model to approximate the original RTE, where
$E_{\alpha+e_d}$ are functions of $E_{\alpha}$, $\alpha\in\mI_{N}$,
defined in \eqref{eq:momentclosure} and \eqref{eq:closure}.

Many existing models can be regarded as consequences using this moment
closure approach. For example, the \PN model \cite{jeans1917stars},
the \MN model \cite{levermore1996moment, dubroca1999theoretical}, the
positive \PN model \cite{hauck2010positive}, the $B_2$ model
\cite{alldredge2016approximating}, and the \MPN model \cite{MPN} are
in this fold. The \MPN model we proposed in \cite{MPN}, and then
improved in \cite{HMPN} as the \HMPN model, is limited for problems in
slab geometry, where it exhibits satisfactory numerical performance
for some standard benchmarks. To extend the method therein to 3D RTE,
below we first briefly review the methods in \cite{MPN, HMPN} to
derive models in slab geometry to clarify our idea.

The \MPN model derived in \cite{MPN} for RTE in slab geometry is based
on a method to combine the \PN model and the \MN model, which was
implemented by expanding the specific intensity around the ansatz of
the \Mone model in terms of orthogonal polynomials. The ansatz of the
\Mone model in slab geometry is
\begin{equation}\label{eq:mone_1dmpn}
  \hat{I}_{M_1}= \frac{\varepsilon}{(1+c_0\mu)^4}, 
\end{equation}
where $\varepsilon$ and $c_0$ are determined by the $0$-th moment
$E_0$ and $1$-th moment $E_1$, formulated as
\begin{equation} \label{eq:alphaandsigma} c_0 =
  -\frac{3E_1/E_0}{2+\sqrt{4-3(E_1/E_0)^2}},\quad \varepsilon =
  \frac{3(1-c_0^2)^3}{2(3+c_0^2)}.
\end{equation}
Then the specific intensity is approximated by a weighted
polynomial, with the weight function
\begin{equation}\label{eq:weight_1dmpn}
  \weightoned(\mu)= \frac{1}{(1+c_0\mu)^4}.
\end{equation}
The function space of the weighted polynomials is
\begin{equation}
  \label{eq:spaceH_1dmpn}
  \spaceHoned_{N} \triangleq \left\{ \weightoned\sum_{k=0}^{N}g_k\mu^k \right\}.
\end{equation}
Then the ansatz of the \MPN model is written as  
\begin{equation}\label{eq:ansatz_1dmpn}
  \hat{I}(\mu; E_0, \cdots, E_N) \triangleq \sum_{i=0}^N f_i \Ploned_i(\mu)
  \in\spaceHoned_{N},
  \quad \Ploned_i(\mu) = \ploned_i(\mu)\weightoned(\mu), 
\end{equation}
where $\Ploned_i(\mu)=\ploned_i(\mu)\weightoned(\mu)$,
$i = 0,1,\cdots,N$, are the basis functions, $\ploned_i(\mu)$ are
orthogonal polynomials with respect to the weight function, and $f_i$
are the expansion coefficients.

The orthogonal polynomials $\ploned_k(\mu)$ can be calculated by a
simple Gram-Schmidt orthogonalization, formulated as
\begin{equation}\label{eq:Gram-Schmidt_1dmpn}
  \ploned_0(\mu) = 1, \quad \ploned_j(\mu) = \mu^j - \sum_{k=0}^{j-1}
  \frac{\mK_{j,k}} {\mK_{k,k}} \ploned_k(\mu), \quad j\geq 1,
\end{equation}
where
$\mK_{j,k} = \int_{-1}^{1} \mu^j\ploned_k(\mu)\weightoned(\mu)\dd\mu$,
calculated by
\begin{equation}\label{eq:Gram-Schmidt_mk_1dmpn}
  \mK_{0,0}=\moment{\weightoned(\mu)}_{0},\quad
  \mK_{i,j}=\moment{\weightoned(\mu)}_{i+j} - \sum_{k=0}^{j-1}
  \frac{\mK_{j,k} \mK_{i,k}}{\mK_{k,k}}, \quad 1\leq j\leq i.
\end{equation}

Furthermore, by 
\begin{equation}\label{eq:fk_Ek_1dmpn}
  f_0 = \dfrac{E_0}{\mK_{0,0}},\quad
  f_i = \frac{1}{\mK_{i,i}} \left(E_i - \sum_{j=0}^{i-1}
    \mK_{i,j} f_j\right),\quad 1\leq i\leq N,
\end{equation}
one can determine the coefficients $f_k$, and the ansatz
$\hat{I}(\mu;E_0,E_1,\cdots E_N)$. Finally, we have the moment closure
as
\begin{equation}\label{eq:closureMPN_1dmpn}
E_{N+1} = \sum_{k=0}^N f_k \mK_{N+1,k}.
\end{equation}
Meanwhile, in the viewpoint of orthogonal projection, if we define the
orthogonal projection to function space $\spaceHoned_{N}$,
\begin{equation}
  \label{eq:projection_1dmpn}
  \mP_N: f = \sum_{k=0}^{+\infty}f_k\Ploned_k(\mu) \rightarrow 
  \mP_N f = \sum_{k=0}^{N}f_k\Ploned_k(\mu),
\end{equation}
then the \MPN moment system can be written as  
\begin{equation}
  \label{eq:mpn_system_projection_1dmpn}
  \dfrac{1}{c}\mP_N \pd{\mP_N I}{t} + \mP_N \mu \pd{\mP_N I}{z} =
  \mP_N \mS(\mP_N I).
\end{equation}

The weight function \eqref{eq:weight_1dmpn} permits the \MPN model to
approximate a strongly anisotropic distribution with very
high accuracy. In \cite{HMPN}, the \MPN model was further improved by
a hyperbolic regularization, which provides global hyperbolicity. 
To achieve the required hyperbolicity, the model
reduction framework in \cite{framework, Fan2015} suggests adding one
more projection between the operators $\mu\cdot$ and $\pd{\cdot}{z}$
in \eqref{eq:mpn_system_projection_1dmpn} to regularize the \MPN model
to be globally hyperbolic, and the resulting model is
\begin{equation}\label{eq:ms_framework_1dmpn}
  \frac{1}{c}\mP_N\pd{\mP_N I}{t} + \mP_N\mu\mP_N\pd{\mP_N I}{z} =
  \mP_N\mS(\mP_N I).
\end{equation} 
An interesting point observed in \cite{HMPN} is that this
regularization changes the \MPN model when $N=1$. It is definitely
inappropriate that the \Mone model is changed by the
regularization. In order to fix this defect, another weight function 
is introduced as 
\begin{equation}\label{eq:weightn_1dmpn}
  \weightnoned=\frac{1}{(1+c_0\mu)^5},
\end{equation}
and then a new function space is defined as
\begin{equation}\label{eq:spaceHn_1dmpn}
  \spaceHnoned_N \triangleq \left\{ \weightnoned \sum_{k=0}^{N}
    g_k\mu^k\right\}.
\end{equation}
In this subspace, one can define
the orthogonal polynomials $\plnoned_k(\mu)$ and the basis function
$\Plnoned_k(\mu)$, $0\leq k\leq N$. Furthermore, a new projection $\mPn_N$ is defined as
\begin{equation}
  \label{eq:projection_1dhmpn}
  \mPn_N: f = \sum_{k=0}^{+\infty}f_k\Plnoned_k(\mu) \rightarrow 
  \mPn_N f = \sum_{k=0}^{N}f_k\Plnoned_k(\mu).
\end{equation}
The new hyperbolic regularization in \cite{HMPN} adds one more
projection $\mPn_N$ to give the \HMPN model formulated as
\begin{equation}\label{eq:hmpn_system_projection_1dmpn}
  \frac{1}{c}\mPn_N\pd{\mP_N I}{t} + \mPn_N\mu\mPn_N\pd{\mP_N I}{z}
  = \mPn_N\mS(\mP_N I).
\end{equation} 
It was revealed that the \HMPN model enjoys some desired properties,
such as
\begin{property}\label{pro:hmpn_1dmpn}
  \begin{enumerate}
  \item The \HMPN model is globally hyperbolic.
    \item The characteristic speeds of the \HMPN model lie in
      $[-c, c]$.
    \item The regularization vanishes for the case $N=1$.
    \item Between the \MPN model and the \HMPN model, the governing
      equation of $E_k$, $k=0,\dots,N-1$ is not changed.
  \end{enumerate}
\end{property}


\section{Moment Model Reduction}\label{sec:model}

In this section, we adopt the strategy introduced in
\Cref{sec:preliminiaries} to derive a \MPN type model for the 3D RTE
at first.

\subsection{Formal derivation}
\label{sec:Gram-Schmidt_orthogonalization}
Let us start with the \Mone model in 3D case. The ansatz of specific
intensity is
\begin{equation}
  \label{eq:ansatzofM1}
  \hat{I}_{M_1}(\bm\Omega; E_0, E_{e_1}, E_{e_2}, E_{e_3}) =
  \dfrac{\varepsilon}{(1+\bc_0\cdot\bOmega)^4}.
\end{equation}
with the known moments $E_0$, $E_{e_1}$, $E_{e_2}$, and $E_{e_3}$. We
denote them as $\bm E_0=(E_0)^T$, and
$\bm E_1=(E_{e_1},E_{e_2},E_{e_3})^T$, respectively. Direct
calculation yields that 
\begin{equation}
  \label{eq:bc0_expression}
  \bc_0 = \dfrac{-2+\sqrt{4-3(\Vert \bm
      E_1\Vert/E_0)^2}}{\Vert \bm E_1\Vert / E_0} \dfrac{\bm E_1}{\Vert
    \bm E_1\Vert},
\end{equation}

Following \cite{MPN}, we approximate the specific intensity with a
weighted polynomial, and the weight function is chosen as the ansatz
of the \Mone model, as \eqref{eq:ansatzofM1}. For simplicity, we take
the weight function as
\[
  \weight(\bm\Omega) = \dfrac{1}{(1+\bc_0\cdot\bOmega)^4}.
\]
Then the function space of weighted polynomials is defined as
\begin{equation}
  \label{eq:spaceH}
  \spaceH_N \triangleq \left\{ \weight(\bOmega) \sum_{\alpha\in\mI_{N}} 
    g_{\alpha}\bm\Omega^{\alpha} \right\}.
\end{equation}
The ansatz of the 3D \MPN model $\hat{I}(\bOmega)$ is chosen to
satisfy
\[
  \hat{I}\in\spaceH_N,\quad \langle \hat{I} \rangle_{\alpha} =
  E_{\alpha}, \quad \alpha \in \mI_{N}.
\] 

In order to determine the ansatz $\hat{I}$, we first rewrite $\hat{I}$
as
\begin{equation}
  \label{eq:ansatz}
  \hat{I}(t,\bx;\bOmega) 
  = \hat{I}(t; \bm{E})
  = \sum_{\alpha\in\mI_{N}} f_{\alpha}(t,\bx)
  \pl_{\alpha}(\bOmega)\omega^{[\bc_0(t,\bx)]}(\bOmega) \in \spaceH_N,
\end{equation}
where $\pl_{\alpha}, \alpha\in\mI_{N}$ are quasi-orthogonal
polynomials with respect to the weight function $\weight$, and
$f_{\alpha}$ are the corresponding coefficients. On the
quasi-orthogonal polynomials, we define the inner product
\begin{equation}
  \label{innerH}
  \inner{f}{g} \triangleq \int_{\bbS^2} fg\weight\dd\bOmega,
\end{equation}
then the quasi-orthogonal polynomials $\pl_{\alpha}$ satisfy that
\begin{equation}
  \label{eq:orthogonalpolynomial}
\inner{\pl_{\alpha}}{\pl_\beta} = \int_{\bbS^2} \pl_{\alpha}(\bOmega)\pl_{\beta}(\bOmega)
  \weight(\bOmega) \dd\bOmega = 0, \quad
  \text{when } |\alpha|\neq |\beta|.  
\end{equation}
Moreover, the quasi-polynomial $\pl_{\alpha}$ satisfies that its only
$|\alpha|$-order term is $\bOmega^{\alpha}$, i.e.  
\begin{equation}
  \label{eq:monicpolynomial}
\pl_{\alpha} = \bOmega^{\alpha} + \sum_{\beta\in\mI_{|\alpha|-1}}
h_{\alpha,\beta}\bOmega^{\beta}.
\end{equation}
For later usage, we denote the quasi-orthogonal functions 
\begin{equation}
  \label{eq:orthogonalfunctions}
  \Pl_{\alpha} = \pl_{\alpha}\weight.
\end{equation}
\begin{remark}
  Let us remark that $\pl_{\alpha}$, $\alpha\in\mI$ are quasi-orthogonal
  polynomials, rather than orthogonal polynomials. In other words,
  $\inner{\pl_{\alpha}}{\pl_{\beta}} $ can be non-zero for
  $|\alpha|=|\beta|$ and $\alpha\neq \beta$.
\end{remark}

Let us construct these quasi-orthogonal polynomials $\pl_{\alpha}$ by
the Gram-Schmidt orthogonalization. This Gram-Schmidt
orthogonalization is different from the 1D case used in \cite{MPN}. We
denote $\mE_{\alpha}$ as the moments of the weight function
$\mE_{\alpha} \triangleq \langle \weight\rangle_{\alpha}$ for later
usage.

In \eqref{eq:bc0_expression}, we denote $\bm E_{0}$ and $\bm E_{1}$ as
the vector of moments of order $0$ and order $1$.  In the following
discussion, we will continue to use this notation, i.e. we denote
$\bm E_k$ as the vector of moments of order $k$, whose dimension is
$2k+1$.  Similarly, we can denote $\bm{\pl}_k$, $\bm{\Pl}_k$, $\bm f_k$ and
$\bm \mE_k$. Using this notation, the moment closure
\[
  E_{\alpha}(E_\beta, \beta\in\mI_{N}), \quad |\alpha| = N+1,
  \alpha\in\mI
\]
can be rewritten as 
\[  
  \bm E_{N+1}(\bm E_k,\ 0\leq k\leq N).
\]
We denote by $\bm\varphi_k$ to be the vector of $\bOmega^{\alpha}$,
$\alpha\in\mI$ and $|\alpha|=k$. Using these notations, the
ansatz \eqref{eq:ansatz} is rewritten as 
\begin{equation}
  \label{eq:ansatz_vectorform}
  \hat{I} = \sum_{k=0}^{N} \bm f_k \bm\Pl_k = \weight\sum_{k=0}^{N} \bm f_k \bm\pl_k.
\end{equation}
We denote the inner product as
\[
  \mK_{\alpha, \beta} \triangleq
  \inner{\bOmega^{\alpha}}{\pl_{\beta}}, 
\]
according to the properties of the quasi-orthogonal polynomials
$\pl_{\alpha}$ \eqref{eq:orthogonalpolynomial} and
\eqref{eq:monicpolynomial}, we have
\[ 
  \mK_{\alpha,\beta} = 0, \text{ when } |\alpha|<|\beta|; \quad
  \mK_{\alpha,\beta} = \inner{\pl_{\alpha}}{\pl_{\beta}}
  =\mK_{\beta,\alpha}, \text{ when } |\alpha|=|\beta|.
\]
Let us denote the matrix composed of
\[ 
  \mK_{\alpha,\beta}, \quad \alpha,\beta\in\mI_{N}, |\alpha|=i,
  |\beta|=j,
\]
as $\bm\mK_{i,j}$, whose dimension is $(2i+1)\times(2j+1)$. It is not
difficult to show that
\begin{enumerate}
  \item $\bm\mK_{i,j} = 0$, when $i<j$.
  \item $\bm\mK_{k,k}$ is symmetric and positive definite. 
\end{enumerate}
Then, by the Gram-Schmidt orthogonalization, the calculation of the
quasi-orthogonal polynomials can be written as
\begin{equation}
  \label{eq:Gram-Schmidt}
  \bm\pl_{j} = \bm\varphi_j - \sum_{k=0}^{j-1} 
  \bm\mK_{j,k}\bm\mK_{k,k}^{-1} \bm\pl_{k}, \quad j\geq 0.
\end{equation}
Taking the inner product of $\bm\varphi_i$ and the transpose of
\eqref{eq:Gram-Schmidt}, one can derive the recursion relation of
coefficients $\mK_{\alpha, \beta}$,
\begin{equation}
  \label{eq:mK_Gram-Schmidt}
  \bm\mK_{i,j} = \inner{\bm\varphi_i}{\bm\varphi_j^T} - 
  \sum_{k=0}^{j-1} \bm\mK_{i,k}\bm\mK_{k,k}^{-1}\bm\mK_{j,k}^T.
\end{equation}
Therefore, once $\mE_{\alpha}$ are calculated, $\mK_{\alpha,\beta}$ is
determined by \eqref{eq:mK_Gram-Schmidt}, and $\pl_{\alpha}$ is
determined by \eqref{eq:Gram-Schmidt}. We note that all these
formulations are explicit. The details of the calculation of
$\mE_{\alpha}$ are presented in \Cref{sec:calculationofmoments}.

Since the quasi-orthogonal polynomials have been calculated, one can
simply determine the coefficients $f_{\alpha}$ by the constraints of
the known moments. To be precise,
$\langle \hat{I}\rangle_{k} = \bm E_{k}$ implies that
\begin{equation}
  \label{eq:contraint_Ek}
  \sum_{j=0}^{k} \bm\mK_{k,j}\bm f_j = \bm E_{k}, \quad 0\leq k \leq
  N,
\end{equation}
thus the coefficients $f_{\alpha}$, $\alpha\in\mI_{N}$ are obtained by
\begin{equation}
  \label{eq:coefficientscalcul}
  \bm f_k = \bm\mK_{k,k}^{-1}\left( \bm{E}_k 
  - \sum_{j=0}^{k-1} \bm\mK_{k,j}\bm f_j \right),\quad 0\leq k\leq N. 
\end{equation}
Eventually, the moment closure of the \MPN model is given as
\begin{equation}
  \label{eq:momentclosure_vector}
  \bm E_{N+1} = \moment{\hat{I}}_{N+1} = \sum_{j=0}^{N} \bm \mK_{N+1,j} \bm f_j.
\end{equation}
Substituting \eqref{eq:momentclosure_vector} into
\eqref{eq:momentsystem}, a closed moment system is attained. We 
refer this model as the {\it 3D \MPN moment system} later on.
\begin{remark}
  \label{rem:f1}
  Notice that $\bc_0$ and the weight function $\weight$ are determined
  by $\bm E_0$ and $\bm E_1$, therefore
  \[
    \dfrac{\moment{\hat{I}}_{e_d}}{\moment{\hat{I}}_0} =
    \dfrac{E_{e_d}}{E_0} =
    \dfrac{\moment{\weight}_{e_d}}{\moment{\weight}_0} =
    \dfrac{\mK_{e_d,0}}{\mK_{0,0}} ,\quad d=1,2,3.
  \]
  Simple calculation yields 
  \[
    f_0 = E_0/\mK_{0,0},\quad E_{e_d} - \mK_{e_d,0}f_0 = 0, \quad
    d=1,2,3,
  \]
  which tells that 
  \begin{equation}
    \label{eq:f1}
    f_{e_d} = 0, \quad  d= 1,2,3.
  \end{equation}
\end{remark}

It is essential for a reduced model to preserve the Galilean
invariance of 3D RTE \eqref{eq:radiativetransfer}. It is often trivial
to preserve the reduced model to be invariant under a translation. However, only
both with translational invariance and rotational invariance, the
reduced model in 3D is Galilean invariant. Unfortunately, the
rotational invariance is not usually preserved by the model reduction
of 3D RTE automatically. For instance, the extensively used $S_N$
model, due to the lack of the rotational invariance, leads to the
so-called {\it ray-effect} in numerical simulations, which is regarded
as its major flaw \cite{larsen2010advances}.

Thanks to the rotational invariance of the weight function (the 3D \Mone
model), the rotational invariance of the 3D \MPN model is trivial to be
verified. We denote the orthogonal coordinate system as
$(\be_x, \be_y, \be_z)$, and the coordinate of an element $\bm p$ is
$\bx = (\bm p\cdot \be_x, \bp\cdot\be_y, \bp\cdot\be_z)^T$. After a
given rotation, the orthogonal coordinate system becomes
$(\overline{\be_x}, \overline{\be_y}, \overline{\be_z})$, and there is
a constant $3\times 3$ orthogonal matrix $\bG_1$, with
$\text{Det}(\bG_1)=1$, satisfies that
$(\overline{\be_x}, \overline{\be_y}, \overline{\be_z})=(\be_x, \be_y,
\be_z)\bG_1^T$, then the coordinate of the same element $\bp$ is given
by
$\overline{\bx} = (\bp\cdot\overline{\be_x}, \bp\cdot\overline{\be_y},
\bp\cdot\overline{\be_z})^T = \bG_1 \bx$. Then the moments in the
rotated system can be written as a linear combination of the moments
in the original system. For any given $k\in\bbN$, there exists a
non-singular $(2k+1)\times(2k+1)$ matrix $\bG_k$, satisfying that 
\[
  \overline{\bm\varphi_k} = \bG_k \bm\varphi_k, \quad k\geq 0,
\]
where $\bm\varphi_k$ and $\overline{\bm\varphi_k}$ are the vectors of
the $k$-th moment of the weight function, defined in
\Cref{sec:Gram-Schmidt_orthogonalization}, 
in the original system and the rotated system,
respectively.
Furthermore, in the rotated system, the moments are denoted as
$\overline{\bm E_k}$, $0\leq k\leq N$. One can directly verify that
\[
  \overline{\bm E_k} = \bG_k \bm E_k, \quad 0\leq k\leq N.
\]
Let us give a lemma at first.
\begin{lemma}
  \label{lem:rotationalinvariance_lemma}
  \[
    \mathcal{D}_{\bG_1 \bx} (\bG_{k+1} \bm{E}_{k+1}) = \bG_k
    \mathcal{D}_{\bx} (\bm{E}_{k+1}), \quad k\geq 0,
  \]
  where $\mathcal{D}_{\bx}(\bm{E}_{k+1})$ is defined as a $(2k+1)$-vector 
  corresponding to $\bm{E}_k$. More accurately, if the
  $\mathcal{N}(\alpha,k)$-th element of $\bm E_k$ is $E_\alpha$,
  $|\alpha|=k$, then
  the $\mathcal{N}(\alpha,k)$-th element of
  $\mathcal{D}_{\bx}(\bm{E}_{k+1})$ is
  $\displaystyle \sum_{d=1}^{3} \pd{E_{\alpha+e_d}}{x_d}$.
\end{lemma}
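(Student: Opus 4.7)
The statement is essentially that the contraction $\sum_d \partial_{x_d}$ mapping symmetric rank-$(k+1)$ tensor fields to symmetric rank-$k$ tensor fields commutes with the simultaneous action of a spatial rotation on coordinates and on tensor components. My plan is to lift the identity to full symmetric tensors indexed by all $\alpha\in\bbN^3$ with $|\alpha|=k$ (rather than the reduced set $\mI_k$), carry out the short tensor-calculus verification there, and then restrict to the independent components indexed by $\mI_k$.

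First I would express the moments as symmetric tensors: for $|\alpha|=k$ the moment $\moment{I}_{\alpha}=\int_{\bbS^2}\bOmega^{\alpha}I\dd\bOmega$ is, with appropriate multiplicity, the $(i_1,\ldots,i_k)$-entry of the symmetric tensor $T_{i_1\cdots i_k}=\int_{\bbS^2}\Omega_{i_1}\cdots\Omega_{i_k} I\dd\bOmega$ where $\alpha_d=\#\{j:i_j=d\}$. Under $\overline{\bOmega}=\bG_1\bOmega$ this tensor transforms as
\begin{equation*}
\overline{T}_{i_1\cdots i_k}=\sum_{j_1,\ldots,j_k}(\bG_1)_{i_1 j_1}\cdots(\bG_1)_{i_k j_k}\,T_{j_1\cdots j_k},
\end{equation*}
and by the construction of $\bG_k$ via $\overline{\bm\varphi_k}=\bG_k\bm\varphi_k$, reading this transformation off on the independent components indexed by $\mI_k$ is precisely the already-established action $\bm E_k\mapsto\bG_k\bm E_k$.

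Next I would apply the chain rule. Orthogonality of $\bG_1$ gives $\bx=\bG_1^T\overline{\bx}$, whence $\partial/\partial\overline{x}_d=\sum_j(\bG_1)_{dj}\,\partial/\partial x_j$. Substituting into the divergence in the rotated frame and using $\sum_d(\bG_1)_{dj}(\bG_1)_{dm}=\delta_{jm}$ collapses the $d$ and $m$ sums:
\begin{equation*}
\sum_d\pd{\overline{T}_{i_1\cdots i_k d}}{\overline{x}_d}=\sum_{l_1,\ldots,l_k}(\bG_1)_{i_1 l_1}\cdots(\bG_1)_{i_k l_k}\sum_j\pd{T_{l_1\cdots l_k j}}{x_j},
\end{equation*}
which is exactly the $(i_1,\ldots,i_k)$-component of $\bG_k\mathcal{D}_{\bx}(\bm E_{k+1})$. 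Restricting the index tuple $(i_1,\ldots,i_k)$ to those encoding multi-indices in $\mI_k$ yields the stated identity.

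The main obstacle I anticipate is bookkeeping rather than analysis: one must verify that $\bG_k$, defined on $\mI_k$ through $\overline{\bm\varphi_k}=\bG_k\bm\varphi_k$, agrees with the restriction of the full symmetric-tensor action to the independent components, after redundant multi-indices have been eliminated using $\sum_d\moment{I}_{\alpha+2e_d}=\moment{I}_{\alpha}$ coming from $\Vert\bOmega\Vert=1$. Once this consistency is settled, the identity reduces, as above, to $\bG_1^T\bG_1=I$ combined with the chain rule, with no further analytic ingredient needed.
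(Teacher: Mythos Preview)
Your proposal is correct and follows essentially the same path as the paper: both arguments reduce the identity to the chain rule together with $\bG_1^{T}\bG_1=I$. The only difference is cosmetic---the paper stays in the reduced index set $\mI_k$ throughout, starting from the defining relation $(\bG_1\bOmega)^{\alpha}=\sum_{\beta\in\mI_k}(\bG_k)_{\mathcal{N}(\alpha,k),\mathcal{N}(\beta,k)}\,\bOmega^{\beta}$ (as a function identity on $\bbS^2$) and multiplying by $(\bG_1\bOmega)^{e_d}$, which sidesteps the symmetric-tensor lift and the bookkeeping concern you anticipated.
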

\begin{proof}
  Noticing $\overline{\bm \varphi_k} = \bG_k{\bm \varphi_k}$, and the
  $\mathcal{N}(\alpha,k)$-th element of $\bm\varphi_k$ is
  $\bm\Omega^{\alpha}$, we have  
  \begin{equation}
    (\bG_1\bOmega)^{\alpha} = \sum_{|\beta|=k,
      \beta\in\mI}\bG_{k,\mathcal{N}(\alpha,k),
      \mathcal{N}(\beta,k)}\bOmega^{\beta},
    \quad |\alpha|=k, \alpha\in\mI,
    \label{eq:RI_omega}
  \end{equation}
  and $\overline{\bm E_k} = \bG_k \bm E_k$ can be rewritten as
  \begin{equation}
    \overline{E_{\alpha}} = \sum_{|\beta|=k,
      \beta\in\mI}\bG_{k,\mathcal{N}(\alpha,k),
      \mathcal{N}(\beta,k)}E_{\beta},
    \quad |\alpha|=k, \alpha\in\mI.
    \label{eq:RI_E}
  \end{equation}
  Moreover, multiplying $(\bG_1\bOmega)^{e_d}$ on \eqref{eq:RI_omega}, one
  has
  \[
    \begin{aligned}
      (\bG_1\bOmega)^{\alpha+e_d} = & \sum_{|\beta|=k,
        \beta\in\mI}\bG_{k,\mathcal{N}(\alpha,k),
        \mathcal{N}(\beta,k)}\bOmega^{\beta} (\bG_1 \bOmega)^{e_d} \\
      = &\sum_{l=1}^{3}\sum_{|\beta|=k, \beta\in\mI}\bG_{k,
      \mathcal{N}(\alpha,k),\mathcal{N}(\beta,k)}
      \bG_{1,d,l}\bOmega^{\beta+e_l}, \quad |\alpha|=k, \alpha\in\mI,
    \end{aligned}
  \]
  which is equivalent to
  \[
    \overline{E_{\alpha+e_d}} = \sum_{l=1}^{3}\sum_{|\beta|=k,
    \beta\in\mI}\bG_{k,\mathcal{N}(\alpha,k),\mathcal{N}(\beta,k)}
    \bG_{1,d,l}E_{\beta+e_l}, \quad |\alpha|=k, \alpha\in\mI, d=1,2,3.
  \]
  Noticing $\overline{x_d} = \sum_{s=1}^{3} \bG_{1,d,s}x_s$ and
  $\bG_1$ is a orthogonal matrix,  one has 
  $x_s = \sum_{d=1}^{3} \bG^T_{1,s,d} \overline{x_d}
  = \sum_{d=1}^{3} \bG_{1,d,s} \overline{x_d}$, and 
  \begin{equation}
    \begin{aligned}
      \sum_{d=1}^{3}\pd{\overline{E_{\alpha+e_d}}}{\overline{x_d}} &=
      \sum_{d=1}^{3}\sum_{s=1}^{3}
      \pd{\left(\sum_{l=1}^{3}\sum_{|\beta|=k,
      \beta\in\mI}\bG_{k,\mathcal{N}(\alpha,k),\mathcal{N}(\beta,k)}
          \bG_{1,d,l}E_{\beta+e_l}\right)}
          {x_s} \pd{x_s}{\overline{x_d}} \\
      &=
      \sum_{|\beta|=k,\beta\in\mI}\sum_{d,l,s=1}^{3}\bG_{k,\mathcal{N}
        (\alpha,k),\mathcal{N}(\beta,k)}\bG_{1,d,l}
      \pd{E_{\beta+e_l}}{x_s} \bG_{1,d,s} \\
      &= \sum_{l=1}^{3}\sum_{|\beta|=k,\beta\in\mI}
      \bG_{k,\mathcal{N}(\alpha,k),\mathcal{N}(\beta,k)}
      \pd{E_{\beta+e_l}}{x_l}.
    \end{aligned}
    \label{eq:RI_lemma_res}
  \end{equation}
  Compare \eqref{eq:RI_lemma_res} with \eqref{eq:RI_E}, one can yield
  \[
    \mathcal{D}_{\bG_1 \bx} (\bG_{k+1} \bm{E}_{k+1}) = \bG_k
    \mathcal{D}_{\bx} (\bm{E}_{k+1}), \quad k\geq 0.
  \]
\end{proof}
For $0\leq k\leq N-1$, the 3D \MPN system for the $k$-th order moment
can be written as 
\begin{equation}
  \dfrac{1}{c}\pd{\bm{E}_k}{t} + \mathcal{D}_{\bm x}(\bm E_{k+1}) =
  \bm S_k,
\end{equation}
and 
\begin{equation}
  \dfrac{1}{c}\pd{\overline{\bm{E}_k}}{t} 
  + \mathcal{D}_{\overline{\bm x}}(\overline{\bm{E}_{k+1}}) = \overline{\bm{S}_k}, 
  \label{eq:rotationalinvariance_eq}
\end{equation}
in the original coordinate system and rotated coordinate system,
respectively, where $\bm S_k$ and $\overline{\bm S_k}$ are the vectors
of $S_{\alpha}$, $|\alpha|=k$ in the original system and rotated system.
According to \Cref{lem:rotationalinvariance_lemma} and noticing
$\overline{\bm{E}_k} = \bG_k \bm E_{k}$, $\overline{\bm x}=\bG_1 \bm
x$, we have that \eqref{eq:rotationalinvariance_eq} is
equivalent to 
\[
  \dfrac{1}{c}\bG_k\pd{\bm E_k}{t}
  +\bG_k \mathcal{D}_{\overline{\bm x}}
  (\overline{\bm E_{k+1}}) = \bG_k \overline{\bm S_k},
\]
thus we obtain the rotational invariance for $0\leq k\leq N-1$.
Meanwhile, the governing equations of the $N$-th moment in the original
system and rotated system are
\begin{equation}
  \dfrac{1}{c}\pd{\bm{E}_N}{t} + \mathcal{D}_{\bm x}(\bm E_{N+1}) =
  \bm S_N,
\end{equation}
and 
\begin{equation}
  \dfrac{1}{c}\pd{\overline{\bm{E}_N}}{t} 
  + \mathcal{D}_{\overline{\bm x}}(\bm{E}_{N+1}(\overline{\bm{E}_k},
  0\leq k \leq N)) =
  \overline{\bm{S}_N}.
\end{equation}
Therefore in order to obtain the rotational invariance, 
one only needs to show that based on the
rotated known moments $\overline{\bm E_k}$, $0\leq k\leq N$, the
moment closure of the 3D \MPN model would give the rotated $(N+1)$-th
moments $\overline{\bm E_{N+1}}$, i.e.
\[
  \bm E_{N+1}(\overline{\bm E_{k}}, 0\leq k\leq N) = \overline{\bm
    E_{N+1}(\bm E_{k}, 0\leq k\leq N)} = \bG_{N+1} \bm E_{N+1}.
\]
Apart from the moments $\overline{\bm \varphi_k}$ and
$\overline{\bm E_k}$, in the following discussion, we denote the
variables with an overline $\overline{*}$ as the variables in the rotated
coordinate system, such as $\overline{\bc_0}$, $\overline{\weight}$,
$\overline{\bm f_k}$, and $\overline{\bm \mK_{i,j}}$. Then we arrive
the following theorem:
\begin{theorem} \label{thm:rotationalinvariance} The 3D \MPN model is
  rotational invariant.
\end{theorem}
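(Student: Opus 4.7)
The plan is to show that every ingredient entering the moment closure transforms covariantly under the rotation $\bG_1$, so that the closure map commutes with the rotation. By the preceding discussion, this is exactly what remains: the governing equations for $\bm E_k$, $0 \leq k \leq N-1$, have already been handled via \Cref{lem:rotationalinvariance_lemma}, so the only thing to verify is that
\[
  \bm E_{N+1}(\overline{\bm E_{k}},\, 0\leq k\leq N) = \bG_{N+1}\,\bm E_{N+1}(\bm E_{k},\,0\leq k\leq N).
\]

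First, I would check that $\bc_0$ transforms as a vector, i.e., $\overline{\bc_0} = \bG_1 \bc_0$. This is immediate from the explicit formula \eqref{eq:bc0_expression}, since $\overline{\bm E_1} = \bG_1 \bm E_1$ and $\bG_1$ is orthogonal so $\|\overline{\bm E_1}\| = \|\bm E_1\|$. As a consequence, the weight function satisfies the pointwise relation $\overline{\weight}(\bG_1\bOmega) = \weight(\bOmega)$, because $(1 + \overline{\bc_0}\cdot \bG_1\bOmega)^4 = (1 + \bc_0\cdot\bOmega)^4$ by orthogonality of $\bG_1$.

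Next, I would push this invariance through the Gram--Schmidt construction. By the change-of-variables $\bOmega \mapsto \bG_1 \bOmega$ on $\bbS^2$ together with $\overline{\bm\varphi_k}(\bG_1\bOmega) = \bG_k \bm\varphi_k(\bOmega)$, the moments of the weight transform as $\overline{\bm\mE_k} = \bG_k \bm\mE_k$, and more generally
\[
  \innern{\overline{\bm\varphi_i}}{\overline{\bm\varphi_j}^T} = \bG_i \inner{\bm\varphi_i}{\bm\varphi_j^T} \bG_j^T.
\]
Then an induction on $j$ using the recursion \eqref{eq:mK_Gram-Schmidt} and \eqref{eq:Gram-Schmidt} shows $\overline{\bm\mK_{i,j}} = \bG_i \bm\mK_{i,j}\bG_j^T$ and $\overline{\bm\pl_k}(\bG_1\bOmega) = \bG_k \bm\pl_k(\bOmega)$. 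The inductive step combines the induction hypothesis with orthogonality of $\bG_k$ (so that $\overline{\bm\mK_{k,k}}^{-1} = \bG_k \bm\mK_{k,k}^{-1}\bG_k^T$) and the transformation law of $\bm\mE$.

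Finally, the coefficient recursion \eqref{eq:coefficientscalcul}, combined with $\overline{\bm E_k} = \bG_k \bm E_k$ and the transformation of $\bm\mK_{i,j}$ just established, yields $\overline{\bm f_k} = \bG_k \bm f_k$ by another induction. Substituting into the closure \eqref{eq:momentclosure_vector} gives
\[
  \bm E_{N+1}(\overline{\bm E_k},\,0\leq k\leq N) = \sum_{j=0}^{N}\overline{\bm\mK_{N+1,j}}\,\overline{\bm f_j} = \bG_{N+1}\sum_{j=0}^{N}\bm\mK_{N+1,j}\bm f_j = \bG_{N+1}\bm E_{N+1},
\]
completing the argument. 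The main technical obstacle is the bookkeeping in the double induction propagating covariance through the Gram--Schmidt recursion; once the base case $\overline{\bm\mK_{0,0}} = \bG_0 \bm\mK_{0,0}\bG_0^T = \mK_{0,0}$ is noted (here $\bG_0 = 1$), the inductive step is essentially linear algebra and the rest of the theorem follows.
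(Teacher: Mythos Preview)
Your proposal follows essentially the same route as the paper's proof: establish $\overline{\bc_0}=\bG_1\bc_0$ and hence $\overline{\weight}(\bG_1\bOmega)=\weight(\bOmega)$, push this through the Gram--Schmidt recursion \eqref{eq:mK_Gram-Schmidt} by induction to get $\overline{\bm\mK_{i,j}}=\bG_i\bm\mK_{i,j}\bG_j^T$, then through \eqref{eq:coefficientscalcul} for the coefficients, and finally read off the closure \eqref{eq:momentclosure_vector}. Structurally this is exactly the paper's argument.

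There is, however, one slip in your justification of the inductive step. The matrices $\bG_k$ for $k\geq 2$ are \emph{not} orthogonal: they encode the action of the rotation on the monomial basis $\{\bOmega^{\alpha}:|\alpha|=k,\alpha\in\mI\}$, which is not an orthonormal basis, so in general $\bG_k^T\neq\bG_k^{-1}$. Consequently your formula $\overline{\bm\mK_{k,k}}^{-1}=\bG_k\bm\mK_{k,k}^{-1}\bG_k^T$ is wrong; the correct inverse is $\overline{\bm\mK_{k,k}}^{-1}=\bG_k^{-T}\bm\mK_{k,k}^{-1}\bG_k^{-1}$. Fortunately this does not break the argument: in the recursion \eqref{eq:mK_Gram-Schmidt} the product $\overline{\bm\mK_{i,k}}\,\overline{\bm\mK_{k,k}}^{-1}\,\overline{\bm\mK_{j,k}}^T$ still telescopes to $\bG_i\bm\mK_{i,k}\bm\mK_{k,k}^{-1}\bm\mK_{j,k}^T\bG_j^T$ because the interior factors $\bG_k^T\bG_k^{-T}$ and $\bG_k^{-1}\bG_k$ cancel. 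Carrying this correction into \eqref{eq:coefficientscalcul} gives $\overline{\bm f_k}=\bG_k^{-T}\bm f_k$ (rather than $\bG_k\bm f_k$), and then $\sum_j\overline{\bm\mK_{N+1,j}}\,\overline{\bm f_j}=\bG_{N+1}\sum_j\bm\mK_{N+1,j}\bm f_j$ exactly as you want. So the strategy is right; only the bookkeeping around $\bG_k^{-1}$ versus $\bG_k^T$ needs tightening.
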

\begin{proof}
  First, we consider the rotational invariance of the weight function.
  According to the expression of $\bc_0$ \eqref{eq:bc0_expression}, we
  have for the rotated system, $\overline{\bc_0} = \bG_1 \bc_0$, thus
  the weight function is
  \[
    \overline{\weight}(\bOmega)=\omega^{[\overline{\bc_0}]}(\bOmega) =
    \dfrac{1}{(1+\overline{\bc_0}\cdot\bOmega)^4}.
  \]
  Notice that
  $\overline{\bc_0}\cdot{\bG_1\bOmega}=\bc_0^T\bG_1^T\bG_1\bOmega =
  \bc_0\cdot\bOmega$, we have
  \begin{equation}
    \label{eq:weight_rotated}
    \overline{\weight}(\bG_1{\bOmega})=\weight(\bOmega).
  \end{equation}
  Moreover, on the moments of the weight function, we have 
  \begin{equation} 
    \label{eq:moments_rotated}
    \int_{\bbS^2} \bm\varphi_i \bm \varphi_j^T \overline{\weight(\bOmega)} \dd\bOmega 
    = \int_{\bbS^2} \bG_i \bm\varphi_i \bm\varphi_j^T \bG_j^T
    \weight(\bOmega) \dd\bOmega =
    \bG_i\inner{\bm\varphi_i}{\bm\varphi_j^T}\bG_j^T,
  \end{equation}
  where the first equality is according to \eqref{eq:weight_rotated}
  and using a variable substitution (replace $\bOmega$ by
  $\bG_1\bOmega$).  Therefore, according to \eqref{eq:mK_Gram-Schmidt}
  and \eqref{eq:moments_rotated}, using mathematical methods of
  induction, we have the coefficient matrix in the new coordinate
  system $\overline{\bm\mK_{i,j}}$ satisfies that
  \[
    \overline{\bm \mK_{i,j}} = \bG_i \bm\mK_{i,j} \bG_j^T.
  \]
  Analogously, by \eqref{eq:coefficientscalcul}, we have $\overline{\bm f_k} = \bG_k \bm f_k$. 
  Then by \eqref{eq:momentclosure_vector}, the moment closure is
  $\bG_{N+1}\bm E_{N+1} = \overline{\bm E_{N+1}}$, which ends the
  proof.
\end{proof}

\subsection{Implementation details}
\label{sec:calculationofmoments}

\paragraph{Calculation of $\mE_{\alpha}$}
In \Cref{sec:Gram-Schmidt_orthogonalization}, the calculation of the
Gram-Schmidt orthogonalization requires a practical method to obtain
$\mE_{\alpha}$. However, the calculation of $\mE_{\alpha}$ is not
trivial. In this subsection, we present the way to calculate
$\mE_{\alpha}$.

Let $\bc = \dfrac{\bc_0}{\Vert \bc_0\Vert}$, and $\ba, \bb, \bc$ is a
unit orthogonal basis of $\bbR^3$, then we can decompose $\bOmega$
under this basis, i.e.
\[
  \bOmega\cdot\ba = \sin\theta\cos\varphi,\quad \bOmega\cdot\bb =
  \sin\theta\sin\varphi,\quad \bOmega\cdot\bc = \cos\theta, \quad
  0\leq \theta<\pi,\ 0\leq\varphi<2\pi,
\]
then we have
$\bOmega = (\bOmega\cdot\ba)\ba + (\bOmega\cdot\bb)\bb +
(\bOmega\cdot\bc)\bc$, and
\[
  \mE_{\alpha} = \int_{\bbS^2} \bOmega^{\alpha}
  \weight(\bOmega)\dd\bOmega = 
  \int_{0}^{2\pi}\int_{0}^{\pi}
  \dfrac{\Omega_1^{\alpha_1}\Omega_2^{\alpha_2}\Omega_3^{\alpha_3}}
  {(1+\Vert \bc_0\Vert\cos\theta)^4}\sin\theta\dd\theta\dd\varphi.
\]
Notice that 
\[
  \begin{aligned}
    &\Omega_1 = a_1\sin\theta\cos\varphi + b_1\sin\theta\sin\varphi +
    c_1\cos\theta,\\
    &\Omega_2 = a_2\sin\theta\cos\varphi+ b_2\sin\theta\sin\varphi +
    c_2\cos\theta,\\
    &\Omega_3 = a_3\sin\theta\cos\varphi+ b_3\sin\theta\sin\varphi +
    c_3\cos\theta,\\
  \end{aligned}
\]
we have
\[
  \begin{aligned}
    \Omega_k^{\alpha_k} =&
    (a_k\sin\theta\cos\varphi+b_k\sin\theta\sin\varphi+c_k\cos\theta)^{\alpha_k}
    \\
    =& \sum_{n_k\leq m_k\leq \alpha_k}
    \dfrac{\alpha_k!}{(m_k-n_k)!n_k!(\alpha_k-m_k)!}
    (a_k\sin\theta\cos\varphi)^{m_k-n_k}
    (b_k\sin\theta\sin\varphi)^{n_k}
    (c_k\cos\theta)^{\alpha_k-m_k}  \\
    =& \sum_{n_k\leq m_k\leq \alpha_k}
    \dfrac{\alpha_k!}{(m_k-n_k)!n_k!(\alpha_k-m_k)!}  a_k^{m_k-n_k}
    b_k^{n_k} c_k^{\alpha_k-m_k}
    \sin^{m_k}\theta\cos^{\alpha_k-m_k}\theta
    \cos^{m_k-n_k}\varphi\sin^{n_k}\varphi.
  \end{aligned}
\]
Collecting the terms, we have
\[
  \begin{aligned}
    \bm\Omega^{\alpha} =& \prod_{k=1}^{3} \Omega_k^{\alpha_k}
    =\sum_{n\leq m\leq |\alpha|} C^{\alpha_1,\alpha_2,\alpha_3}_{m,n}
    \sin^{m}\theta
    \cos^{|\alpha|-m}\theta\sin^n\varphi\cos^{m-n}\varphi,
  \end{aligned}
\]
where $C^{\alpha_1, \alpha_2, \alpha_3}_{m,n}$ are coefficients.
Precisely, denote
$C^{\alpha}_{m,n} = C^{\alpha_1, \alpha_2, \alpha_3}_{m,n}$, then one
can obtain the recursion relationship of $C^{\alpha}_{m,n}$,
formulated as
\[
  C_{m,n}^{\alpha} = c_k C_{m,n}^{\alpha-e_k} +
  a_kC_{m-1,n}^{\alpha-e_k} +b_kC_{m-1,n-1}^{\alpha-e_k}, \quad
  k=1,2,3,
\]
Furthermore, we can obtain $\mE_{\alpha}$ by 
\[ 
  \mE_{\alpha} =\sum_{n\leq m\leq |\alpha|}
  {C^{\alpha_1,\alpha_2,\alpha_3}_{m,n}} \int_{0}^{\pi}
  \dfrac{\sin^{m+1}\theta\cos^{|\alpha|-m}\theta}{(1+c_0\cos\theta)^4}\dd\theta
  \int_{0}^{2\pi} \sin^{n}\varphi\cos^{m-n}\varphi\dd\varphi.
\]
Direct calculation yields
\[
  I_{|\alpha|,m}=\int_{0}^{\pi}
  \dfrac{\sin^{m+1}\theta\cos^{|\alpha|-m}\theta}{(1+c_0\cos\theta)^4}\dd\theta
  =\int_{-1}^{1} \dfrac{(1-\mu^2)^{m/2}\mu^{|\alpha|-m}}{(1+c_0\mu)^4}
  \dd\mu = \sum_{l=0}^{m/2}\binom{m/2}{l}(-1)^l M_{|\alpha|-m+2l},
\]
where $M_s = \int_{-1}^{1} \dfrac{\mu^s}{(1+c_0\mu)^4}\dd\mu$ has
already been calculated in the 1D \MPN model \cite{MPN}. On the other
hand,
\[
  J_{n,m-n}=\int_{0}^{2\pi} \sin^n\varphi\cos^{m-n}\varphi\dd\varphi=
  \left\{
    \begin{aligned}
      &\dfrac{2\Gamma\left( \frac{1+m-n}{2} \right)\Gamma\left(
          \frac{1+n}{2}
        \right)}{\Gamma\left( \frac{m+2}{2}\right)}, \quad
      & m,n \text{ are even},\\
      &0,\quad & \text{otherwise.}
    \end{aligned}
  \right. 
\]
Moreover, when $m$ and $n$ are even, 
\[ 
  J_{n,m-n} = \dfrac{2\Gamma\left( \frac{1+m-n}{2} \right)\Gamma\left(
      \frac{1+n}{2} \right)}{\Gamma\left( \frac{m+2}{2}\right)} = 2\pi
  \frac{(m-n-1)!!(n-1)!!}{2^{\frac{m-n}{2}}2^{\frac{n}{2}}(\frac{m}{2})!}
  = 2\pi
  \frac{(m-n-1)!!(n-1)!!}{2^{\frac{m}{2}}\left(\frac{m}{2}\right)!}.
\]
Therefore, we have
\[
  \mE_{\alpha} = \sum_{n\leq m\leq |\alpha|}
  C^{\alpha_1,\alpha_2,\alpha_3}_{m,n}I_{|\alpha|,m}J_{n,m-n}.
\]

\begin{remark}
  The choice of $\ba$, $\bb$, and $\bc$ can be casual.  For example,
  if
  $\bc = (\sin\theta_0 \cos\varphi_0, \sin\theta_0\sin\varphi_0,
  \cos\theta_0)^T$, then $\ba$, $\bb$, and $\bc$ can be chosen as
  \[
    [\ba,\bb,\bc] = \left[
      \begin{array}[H]{ccc}
        \sin\varphi_0&\cos\theta_0\cos\varphi_0&\sin\theta_0\cos\varphi_0\\
        -\cos\varphi_0&\cos\theta_0\sin\varphi_0&\sin\theta_0\sin\varphi_0\\
        0&-\sin\theta_0&\cos\theta_0
      \end{array}
    \right].
  \]
\end{remark}

\paragraph{Interpolation}
In \Cref{sec:Gram-Schmidt_orthogonalization}, a Gram-Schmidt
orthogonalization is adopted to evaluate the moment closure. However,
the cost of the orthogonalization is $O(N^6)$, which is a difficult
issue in the numerical simulations. In \cite{MPN}, thanks to the linear
dependence of the moment closure on $E_k$, $2\leq k\leq N$, an
interpolation is applied to reduce the cost. In the 3D \MPN model, the
dependence of the moment closure $\bm{E}_{N+1}$ upon $\bm{E}_{k}$,
$2\leq k\leq N$ is linear. Precisely, we have
\[
  \bm{E}_{N+1} = \sum_{k=0}^{N} {\bf C}_k(\bc_0) \bm{E}_{k},
\]
where ${\bf C}_k(\bc_0)$ is a $(2N+3)\times(2k+1)$ matrix, which
depends on $\bc_0$ only. Therefore, a naive idea is to divide the
unit sphere $\bbS^2$ into several parts, and apply the interpolation.
However, in order to get a sufficient accuracy, dividing the unit
sphere will cost a lot, thus we need another interpolation method.

Noticing the rotational invariance of the 3D \MPN model and the \HMPN
model, we can calculate the moment closure $\overline{\bm{E}_{N+1}}$
corresponding to $\overline{\bc_0}={\bf G}_1\bc_0$, satisfying that
$\overline{\bc_0}$ is parallel to $\be_z$. The procedure of the
evaluation of the moment closure $\bm{E}_{N+1}$ can be written as
\begin{enumerate}
  \item Calculate $\bm{c}_0$, $\overline{\bc_0}$.
  \item Calculate the matrix ${\bf G}_k$, $1\leq k\leq N+1$, 
    whose cost is $O(N^3)$.
  \item Calculate $\overline{\bm{E}}_{k}$, $0\leq k\leq N$ by 
    \[
      \overline{\bm{E}}_k = {\bf G}_k \bm{E}_k,
    \]
    and the cost is $O(N^3)$.
  \item Calculate $\overline{\bm{E}_{N+1}}$ by interpolation, the cost
    is $O(N^3)$.
  \item Calculate $\bm{E}_{N+1}$ by 
    \[
      \bm{E}_{N+1} = {\bf G}_{N+1}^{-1} \overline{\bm{E}_{N+1}}, 
    \]
    the cost is $O(N^2)$.
\end{enumerate}
Therefore, by this interpolation, we reduce the cost of the evolution
of the moment closure to $O(N^3)$, which broadens the application of
the \MPN model.


\section{Hyperbolic Regularization} \label{sec:hyper} 

Similar as the 1D \MPN model, the 3D \MPN model needs to be regularized
to achieve global hyperbolicity. We follow the idea in \cite{HMPN} to
propose a novel hyperbolic regularization for the 3D \MPN model.

At first, we introduce the orthogonal projection to the function space
$\spaceH_N$ in \eqref{eq:spaceH},
\begin{equation}
  \label{eq:orthogonalprojection}
  \mP_N: f = \sum_{\alpha\in\mI} f_{\alpha}\pl_{\alpha} \longrightarrow
  \mP_N f = \sum_{\alpha\in\mI_{N}} f_\alpha \pl_{\alpha}.
\end{equation}
Then the 3D \MPN moment system can be formulated as
\begin{equation}
  \label{eq:MPNprojectionform}
  \dfrac{1}{c}\mP_N\pd{\mP_N I}{t} + \sum_{d=1}^{3} \mP_N\Omega_d
  \pd{\mP_N I}{x_d} = \mP_N \mS(\mP_N I). 
\end{equation}
Using the similar method in \cite{HMPN}, we introduce a new weight function 
\begin{equation}
  \label{eq:weightn}
  \weightn(\bOmega) = \dfrac{1}{(1+\bc_0\cdot\bOmega)^5},
\end{equation}
and the relationship between the weight function $\weight$ and the new
weight function $\weightn$ is 
\[  
   \weight(\bOmega) = (1+\bc_0\cdot\bOmega)\weightn(\bOmega),\quad 
   \nabla_{\bc_0}\weight(\bOmega) = -4\bOmega\weightn(\bOmega) 
\]
Based on the new weight function, we can also define the function
space $\spaceHn_N$, the quasi-orthogonal polynomials $\pln_{\alpha}$,
$\alpha\in\mI_{N}$, the quasi-orthogonal functions $\Pln_{\alpha}$,
the coefficients $\mKn_{\alpha,\beta}$, and the orthogonal projection
$\mPn_N$.

Then, according to the hyperbolic regularization used in \cite{HMPN}, the 3D \HMPN model 
can be written as
\begin{equation}
  \label{eq:HMPNprojectionform}
  \dfrac{1}{c}\mPn_N\pd{\mP_N I}{t} + \sum_{d=1}^{3} \mPn_N\Omega_d
  \mPn_N\pd{\mP_N I}{x_d} = \mPn_N \mS(\mP_N I). 
\end{equation}

Let us show the details of the regularization. In order to calculate
the difference between the \HMPN model \eqref{eq:HMPNprojectionform}
and the \MPN model \eqref{eq:MPNprojectionform}, here we first
introduce a lemma.
\begin{lemma} 
  \label{lem:hyper1}
  \[ 
    \Pl_{\alpha} \in \spaceH_N \subset \spaceHn_{N+1},\quad
    \pd{\Pl_{\alpha}}{c_{0,d}} \in \spaceHn_{N+1},\quad |\alpha|\leq
    N,\ d=1,2,3.
  \]
\end{lemma}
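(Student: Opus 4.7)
The plan is to verify the three claims in order, using the algebraic relations
$\weight=(1+\bc_0\cdot\bOmega)\weightn$ and $\partial_{c_{0,d}}\weight=-4\Omega_d\weightn$ stated just before the lemma, together with the sphere identity $\Omega_1^2+\Omega_2^2+\Omega_3^2=1$ (which is what forces the multi-index set $\mI$ to restrict $\alpha_3\le 1$).

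First I would handle $\Pl_\alpha\in\spaceH_N$. By \eqref{eq:monicpolynomial} and \eqref{eq:orthogonalfunctions}, $\Pl_\alpha=\pl_\alpha\weight$ where $\pl_\alpha=\bOmega^\alpha+\sum_{\beta\in\mI_{|\alpha|-1}}h_{\alpha,\beta}\bOmega^\beta$ is a linear combination of $\bOmega^\beta$ with $\beta\in\mI_{|\alpha|}\subset\mI_N$; this lies in $\spaceH_N$ by the defining formula \eqref{eq:spaceH}. Next, for $\spaceH_N\subset\spaceHn_{N+1}$, take an arbitrary $f=\weight\sum_{\alpha\in\mI_N}g_\alpha\bOmega^\alpha\in\spaceH_N$ and rewrite it using $\weight=(1+\bc_0\cdot\bOmega)\weightn$ as $f=\weightn\bigl(1+\sum_{d=1}^3 c_{0,d}\Omega_d\bigr)\sum_{\alpha\in\mI_N}g_\alpha\bOmega^\alpha$. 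The parenthesized factor is an ordinary polynomial in $(\Omega_1,\Omega_2,\Omega_3)$ of total degree at most $N+1$; using $\Omega_3^2=1-\Omega_1^2-\Omega_2^2$ to eliminate any occurrence of $\Omega_3$ to a power $\ge 2$ turns it into a linear combination of $\bOmega^\beta$ with $\beta\in\mI_{N+1}$, proving the inclusion.

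For the derivative claim, I would apply the product rule to $\Pl_\alpha=\pl_\alpha\weight$, obtaining
\[
  \pd{\Pl_\alpha}{c_{0,d}} = \left(\pd{\pl_\alpha}{c_{0,d}}\right)\weight \;+\; \pl_\alpha\pd{\weight}{c_{0,d}} = \left(\pd{\pl_\alpha}{c_{0,d}}\right)\weight - 4\Omega_d\pl_\alpha\weightn.
\]
For the first term, differentiating \eqref{eq:monicpolynomial} shows that $\partial_{c_{0,d}}\pl_\alpha$ is a polynomial in $\bOmega$ supported on multi-indices in $\mI_{|\alpha|-1}\subset\mI_N$ (the leading monomial $\bOmega^\alpha$ is independent of $\bc_0$, so only the lower-order coefficients $h_{\alpha,\beta}$ contribute); hence $(\partial_{c_{0,d}}\pl_\alpha)\weight\in\spaceH_N\subset\spaceHn_{N+1}$ by the inclusion just established. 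For the second term, $\Omega_d\pl_\alpha$ is a polynomial of total degree at most $|\alpha|+1\le N+1$ in $\bOmega$, which after the same $\Omega_3^2=1-\Omega_1^2-\Omega_2^2$ reduction lies in the span of $\{\bOmega^\beta:\beta\in\mI_{N+1}\}$; multiplication by $\weightn$ therefore places it in $\spaceHn_{N+1}$.

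The only subtle point, and the one I would be most careful about, is the reduction to multi-indices in $\mI$ (i.e., enforcing $\beta_3\le 1$) in the two places where the total degree grows by one; this is what makes the enlargement from $\spaceH_N$ to $\spaceHn_{N+1}$ (rather than $\spaceHn_N$) necessary, and it must be done consistently in both the inclusion step and the $\Omega_d\pl_\alpha\weightn$ term. Everything else is a mechanical product-rule and degree count.
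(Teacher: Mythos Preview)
Your proposal is correct and follows essentially the same route as the paper: both use $\weight=(1+\bc_0\cdot\bOmega)\weightn$ and $\partial_{c_{0,d}}\weight=-4\Omega_d\weightn$, apply the product rule to $\Pl_\alpha=\pl_\alpha\weight$, and then do a degree count. The paper treats only the extremal case $|\alpha|=N$ and suppresses the $\Omega_3^2=1-\Omega_1^2-\Omega_2^2$ reduction you spell out; one small quibble is that the need for $\spaceHn_{N+1}$ rather than $\spaceHn_N$ comes simply from the degree bump when multiplying by $(1+\bc_0\cdot\bOmega)$ or $\Omega_d$, not from the $\mI$-reduction itself (which preserves total degree).
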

\begin{proof}
  In this proof, we only consider the situation when $|\alpha|=N$. We
  have
  \[ 
    \Pl_{\alpha} = \weight \pl_{\alpha} = \weightn \pl_{\alpha}
    (1+\bc_0\cdot \bOmega),
  \]
  \[
    \pd{\Pl_{\alpha}}{c_{0,d}} = \pd{\weight}{c_{0,d}}\pl_{\alpha}
    +\weight \pd{\pl_{\alpha}}{c_{0,d}} = -4\Omega_d \weightn
    \pl_{\alpha} + (1+\bc_0\cdot \bm\Omega)\weightn
    \pd{\pl_{\alpha}}{c_{0,d}}.
  \]
  We have $\weightn \pl_{\alpha} \in \spaceHn_N$ and
  $(1+\bc_0\cdot\bm\Omega)\weightn\pd{\pl_{\alpha}}{c_{0,d}} \in
  \spaceHn_N$, thus
  \[
    \Pl_{\alpha}-\mPn_N\Pl_{\alpha} =
    (I-\mPn_N)(\weightn\bc_0\cdot\bOmega\pl_{\alpha}) =\sum_{d=1}^{3}
    c_{0,d}\Pln_{\alpha+e_d},
  \]
  \[
    \pd{\Pl_{\alpha}}{c_{0,d}} - \mPn_N \pd{\Pl_{\alpha}}{c_{0,d}} =
    -4\Pln_{\alpha+e_d}.
  \]
\end{proof}
According to the proof of \Cref{lem:hyper1}, we have 
\begin{corollary}
  \label{cor:diffterm}
  \[ 
    \Pl_{\alpha}- \mPn_N{\Pl_{\alpha}} = \left\{
      \begin{aligned}
        &0,\quad &|\alpha|<N,\\
        &\sum_{d=1}^{3} c_{0,d}\Pln_{\alpha+e_d},\quad &|\alpha|=N.
      \end{aligned}
    \right.,
  \]
  \[ 
    \pd{\Pl_{\alpha}}{c_{0,d}}- \mPn_N{\pd{\Pl_{\alpha}}{c_{0,d}}} =
    \left\{
      \begin{aligned}
        &0,\quad &|\alpha|<N,\\   
        &-4\Pln_{\alpha+e_d} ,\quad &|\alpha|=N.
      \end{aligned}
    \right.,
  \]
\end{corollary}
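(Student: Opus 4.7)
The plan is to handle the cases $|\alpha|<N$ and $|\alpha|=N$ separately, and to reduce the second case directly to the computation already carried out in the proof of \Cref{lem:hyper1}.

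For $|\alpha|<N$ my aim is simply to show that both $\Pl_{\alpha}$ and $\pd{\Pl_{\alpha}}{c_{0,d}}$ lie in $\spaceHn_N$; the conclusion then follows because $(I-\mPn_N)$ annihilates $\spaceHn_N$. Using $\weight=(1+\bc_0\cdot\bOmega)\weightn$ I would write $\Pl_{\alpha}=\weightn\,\pl_{\alpha}(1+\bc_0\cdot\bOmega)$, so the polynomial factor has total degree $|\alpha|+1\le N$. After reducing any power $\Omega_3^2$ through $\Omega_3^2=1-\Omega_1^2-\Omega_2^2$ to keep the multi-indices in $\mI$, this gives $\Pl_{\alpha}\in\spaceHn_N$. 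For the derivative I would use
\[
\pd{\Pl_{\alpha}}{c_{0,d}} \;=\; -4\,\Omega_d\,\weightn\,\pl_{\alpha} + (1+\bc_0\cdot\bOmega)\,\weightn\,\pd{\pl_{\alpha}}{c_{0,d}},
\]
together with the observation that $\pd{\pl_{\alpha}}{c_{0,d}}$ is a polynomial in $\bOmega$ of degree strictly less than $|\alpha|$ (since by \eqref{eq:monicpolynomial} the leading term $\bOmega^{\alpha}$ of $\pl_{\alpha}$ is independent of $\bc_0$). Both summands then have polynomial factor of degree $\le|\alpha|+1\le N$, so the whole expression sits in $\spaceHn_N$.

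For $|\alpha|=N$ I would recycle the two decompositions appearing in the proof of \Cref{lem:hyper1}, namely
\[
\Pl_{\alpha}-\mPn_N\Pl_{\alpha}=(I-\mPn_N)\bigl(\weightn(\bc_0\cdot\bOmega)\pl_{\alpha}\bigr),\qquad
\pd{\Pl_{\alpha}}{c_{0,d}}-\mPn_N\pd{\Pl_{\alpha}}{c_{0,d}}=(I-\mPn_N)\bigl(-4\,\Omega_d\,\weightn\,\pl_{\alpha}\bigr),
\]
and identify the right-hand sides. By the monic property \eqref{eq:monicpolynomial}, $\weightn(\bc_0\cdot\bOmega)\pl_{\alpha}=\sum_{d=1}^{3}c_{0,d}\Pln_{\alpha+e_d}+R$ and $-4\,\Omega_d\,\weightn\,\pl_{\alpha}=-4\,\Pln_{\alpha+e_d}+R_d$, where the remainders $R,R_d$ consist of $\weightn$ times a polynomial of degree at most $N$ and hence lie in $\spaceHn_N$. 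Because quasi-orthogonality \eqref{eq:orthogonalpolynomial} guarantees that each $\Pln_{\alpha+e_d}$ with $|\alpha+e_d|=N+1$ is orthogonal to $\spaceHn_N$, applying $(I-\mPn_N)$ wipes out $R$, $R_d$ and leaves precisely the stated expressions.

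The main obstacle I foresee is the bookkeeping forced by the restricted index set $\mI=\{\alpha_3\le 1\}$: for instance, when $\alpha_3=1$ the product $\bOmega^{\alpha}\Omega_3$ produces $\Omega_3^2$ and must be rewritten via $1-\Omega_1^2-\Omega_2^2$ before one can speak of $\pln_{\alpha+e_3}$. This reduction has to be performed consistently so that each step stays inside the span $\{\bOmega^{\beta}:\beta\in\mI\}$ on which the quasi-orthogonal polynomials are defined; once this is done carefully the argument is essentially a matching of leading homogeneous parts against the monic polynomials $\pln_{\beta}$ with $|\beta|=N+1$.
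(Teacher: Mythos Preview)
Your proposal is correct and follows essentially the same route as the paper: the corollary is stated immediately after \Cref{lem:hyper1} with the phrase ``According to the proof of \Cref{lem:hyper1}, we have\ldots'', and the two displayed identities for $|\alpha|=N$ are already written verbatim inside that proof, while the case $|\alpha|<N$ is the trivial observation that the relevant functions then sit in $\spaceHn_N$. Your write-up merely makes explicit the leading-term/remainder matching and the degree bound for $\pd{\pl_{\alpha}}{c_{0,d}}$ that the paper leaves implicit; the bookkeeping concern about $\alpha_3=1$ is real but is handled in the paper by the same $\Omega_3^2=1-\Omega_1^2-\Omega_2^2$ reduction you describe (cf.\ the remark following \eqref{eq:termR}).
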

Substituting the terms in \Cref{cor:diffterm} to
\eqref{eq:HMPNprojectionform}, the difference between the \HMPN system
and the \MPN system is
\begin{equation}
  \label{eq:diffHMPn}
  \begin{aligned}
    &\sum_{d=1}^{3} \mPn_N \Omega_d \left(\mPn_N\pd{\mP_N I}{x_d}
      -\pd{\mP_N I}{x_d} \right)\\
    =& \sum_{d=1}^{3} \mPn_N \Omega_d \left( \mPn_N \pd{\left(
          \sum_{\alpha\in\mI_{N}} \delta_{|\alpha|,N}
          f_{\alpha}\Pl_{\alpha} \right)}{x_d} -\pd{\left(
          \sum_{\alpha\in\mI_{N}}
          \delta_{|\alpha|,N}f_{\alpha}\Pl_{\alpha} \right)}{x_d}
    \right) \\
    =& \sum_{d=1}^{3} \mPn_N \Omega_d \left( \sum_{\alpha\in\mI_{N}}
      \delta_{|\alpha|,N}\left( \sum_{l=1}^{3} 4f_{\alpha}
        \Pln_{\alpha+e_l} \pd{c_{0,l}}{x_d}-
        \pd{f_{\alpha}}{x_d}\sum_{l=1}^{3} c_{0,l}\Pln_{\alpha+e_l}
      \right) \right),
  \end{aligned}
\end{equation}
where $\delta_{i,j}$ is the Kronecker delta. Then one can consider
the inner product of \eqref{eq:diffHMPn} with $\bm\Omega^{\alpha}$ to
carry out the \HMPN system, written as
\begin{equation}
  \label{eq:HMPNsystem}
  \dfrac{1}{c} \pd{E_{\alpha}}{t} + 
  \sum_{d=1}^{3} \left(\pd{E_{\alpha+e_d}}{x_d} -
    R_{\alpha,d}\right)
  =S_{\alpha},
\end{equation}
where $S_\alpha \triangleq \langle \mS(\mP_N I)\rangle_{\alpha}$ is
the $\alpha$-th order moment of the right hand side $\mS(\mP_N I)$,
and $R_{\alpha,d}$, according to \eqref{eq:diffHMPn}, is
\begin{equation}
  \label{eq:termR}
  R_{\alpha,d}=
  \begin{cases}
    0, \quad &|\alpha|<N,\\
    \sum\limits_{\beta\in\mI_{N},|\beta|=N}
    \sum\limits_{l=1}^{3}\left(\pd{f_{\beta}}{x_d}c_{0,l}-4f_{\beta}\pd{c_{0,l}}{x_d}\right)
    \mKn_{\alpha+e_d,\beta+e_l} , \quad &|\alpha|=N.
  \end{cases}
\end{equation}
\begin{remark}
  If $\alpha+e_d \notin \mI$, i.e. $\alpha_3 = 1$ and $d=3$, then
  $\mKn_{\alpha+e_d,\beta+e_l}$ can be calculated by 
  \[
  \mKn_{\alpha+e_3, \beta+e_l} = \mKn_{\alpha-e_3, \beta+e_l}
  -\mKn_{\alpha-e_3+2e_1, \beta+e_l} - \mKn_{\alpha-e_3+2e_2, \beta+e_l}.
  \]
  Analogously, for $\beta+e_l\notin \mI$,
  $\mKn_{\alpha+e_d,\beta+e_l}$ can also be calculated.
\end{remark}
Therefore, one can conclude that
\begin{property}
  The regularization does not change the moment equations for
  $E_{\alpha}$, $\alpha\in\mI_{N-1}$.
\end{property}
On the other hand, notice that if $N=1$, according to \eqref{eq:termR}
and \eqref{eq:f1}, we have that the \HMPN model is exact the same as
the \MPN model. Thus, as the \HMPN model in slab geometry \cite{HMPN},
we have the following property,
\begin{property}
  The hyperbolic regularization in 3D case does not change the \Mone
  model, in other words, the \HMPN model is the same as the \Mone
  model when $N=1$.
\end{property}

In \Cref{sec:model}, the rotational invariance of the 3D \MPN model
has been proved and in this section we investigate the rotational
invariance of the 3D \HMPN model. According to
\Cref{thm:rotationalinvariance}, the remaining part is $R_{\alpha,d}$
in \eqref{eq:termR}, a simple conclusion is given as
\begin{theorem}
  The 3D \HMPN model is rotational invariant.
\end{theorem}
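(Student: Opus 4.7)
The plan is to leverage the rotational invariance of the \MPN model already established in \Cref{thm:rotationalinvariance}, since the \HMPN system \eqref{eq:HMPNsystem} differs from the \MPN system only by the additional regularization terms $-\sum_{d=1}^{3} R_{\alpha,d}$. Collecting these into a vector $\bm R_N$ whose $\mathcal{N}(\alpha,N)$-th component is $\sum_{d=1}^{3} R_{\alpha,d}$, it suffices to prove
\[
  \overline{\bm R_N} = \bG_N \bm R_N,
\]
so that $\bm R_N$ transforms under rotation in precisely the same covariant fashion as $\mathcal{D}_{\bx}(\bm E_{N+1})$ does in \Cref{lem:rotationalinvariance_lemma}. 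Combined with \Cref{thm:rotationalinvariance}, this forces the full $N$-th order \HMPN equation in the rotated frame to equal $\bG_N$ times its original-frame counterpart, while the equations for $|\alpha|<N$ are untouched by the regularization and are already covered by \Cref{thm:rotationalinvariance}.

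Next I would record the rotation behaviors of the building blocks appearing in \eqref{eq:termR}. Because $\weightn$ depends on $\bc_0\cdot\bOmega$ exactly as $\weight$ does, the substitution argument giving \eqref{eq:weight_rotated} immediately yields $\overline{\weightn}(\bG_1\bOmega)=\weightn(\bOmega)$. Repeating the derivation \eqref{eq:moments_rotated} with $\weightn$ in place of $\weight$ and running the Gram--Schmidt induction inside $\spaceHn_N$ then delivers $\overline{\bm\mKn_{i,j}} = \bG_i \bm\mKn_{i,j} \bG_j^T$. The coefficient vectors already satisfy $\overline{\bm f_k} = \bG_k \bm f_k$ from the proof of \Cref{thm:rotationalinvariance}, together with $\overline{\bc_0} = \bG_1 \bc_0$, and spatial derivatives transform by $\partial/\partial \overline{x_d} = \sum_{s=1}^{3}\bG_{1,d,s}\partial/\partial x_s$ since $\bG_1$ is orthogonal.

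Finally I would substitute these transformations into \eqref{eq:termR} and carry out the contraction. Each factor $c_{0,l}$, $\partial c_{0,l}/\partial x_d$, $f_{\beta}$, $\partial f_{\beta}/\partial x_d$, and $\mKn_{\alpha+e_d,\beta+e_l}$ carries a matrix from $\{\bG_1,\bG_N,\bG_{N+1}\}$. Summing over the regularization indices $d$ and $l$ (which also run through the chain rule for $\partial/\partial\overline{x_d}$) pairs up factors of $\bG_1^T\bG_1 = I$ exactly as in \eqref{eq:RI_lemma_res}, while the $\mathcal{N}(\alpha,N)$ row of $\bm\mKn_{\alpha+e_d,\beta+e_l}$ leaves behind a single outer $\bG_N$. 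This gives $\overline{\bm R_N}=\bG_N\bm R_N$, completing the argument.

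The main obstacle is pure combinatorial bookkeeping: keeping track of the three transformation matrices $\bG_1$, $\bG_N$, $\bG_{N+1}$ across the double sum $\sum_{d,l}$, and in particular handling the boundary cases where $\alpha+e_d$ or $\beta+e_l$ leaves $\mI$ (so that the reduction formula in the remark after \eqref{eq:termR} must be invoked before the $\bG$-contraction can be applied). Once those boundary $\mKn$ entries are re-expressed in terms of indices lying in $\mI$, the contraction step goes through uniformly and the same chain-rule mechanism as in \Cref{lem:rotationalinvariance_lemma} concludes the proof.
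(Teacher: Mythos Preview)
Your proposal is correct and follows essentially the same route as the paper: both reduce the question to showing that the regularization vector transforms as $\overline{\bm R_N}=\bG_N\bm R_N$, establish the transformation laws $\overline{\bc_0}=\bG_1\bc_0$, $\overline{\bm f_k}=\bG_k\bm f_k$, $\overline{\bm\mKn_{i,j}}=\bG_i\bm\mKn_{i,j}\bG_j^T$ (the latter via the same argument as \eqref{eq:moments_rotated} with $\weightn$ in place of $\weight$), and then substitute into \eqref{eq:termR} and contract the $\bG_1$ factors using orthogonality exactly as in \Cref{lem:rotationalinvariance_lemma}. Your worry about indices $\alpha+e_d\notin\mI$ is in fact more careful than the paper's own treatment, which simply writes the $(N{+}1)$-level transformation in the factored form $\bG_1\otimes\bG_N$ acting on $\bOmega^{\alpha}\cdot\Omega_d$ and does not comment on this issue separately.
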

\begin{proof}
  According to \eqref{eq:RI_E}, one only needs to prove that  
  \begin{equation}
    \label{eq:HMPN_RI_target0}
    \sum_{d=1}^{3} \overline{R_{\alpha, d}} = 
    \sum_{d=1}^{3}\sum_{|\beta| = N,
      \beta\in\mI}\bG_{N,\mathcal{N}(\alpha,N),\mathcal(\beta,N)}R_{\beta,
      d},
    \quad |\alpha|=N, \alpha\in\mI.
  \end{equation}
  According to \eqref{eq:termR}, we have 
  \begin{equation}
    \begin{aligned}
      \label{eq:HMPN_RI_target}
      \sum_{d=1}^{3} \overline{R_{\alpha,d}} = \sum_{d=1}^{3}
      \sum\limits_{|\gamma|=N, \gamma\in\mI}
      \sum\limits_{l=1}^{3}\left(\pd{\overline{f_{\gamma}}}
        {\overline{x_d}}\overline{c_{0,l}}-4\overline{f_{\gamma}}
        \pd{\overline{c_{0,l}}}{\overline{x_d}}\right)
      \overline{\mKn_{\alpha+e_d,\gamma+e_l}} , \quad &|\alpha|=N.
    \end{aligned}
  \end{equation}
  Noticing that 
  \[
    \begin{aligned}
      &\overline{f_{\gamma}} = \sum_{|\gamma'|=N, \gamma'\in\mI}
      \bG_{N, \mathcal{N}(\gamma, N), \mathcal{N}(\gamma',N)}
      f_{\gamma'}, \quad \overline{x_d} =
      \sum_{d'=1}^{3}\bG_{1,d,d'}x_d', \quad
      \overline{c_{0,l}} = \sum_{l'=1}^{3}\bG_{1,l,l'}c_{0,l'},\\
      &\overline{\mKn_{\alpha+e_d,\gamma+e_l}} = \sum_{d'',l''=1}^{3}
      \sum_{|\alpha''|=N,\alpha''\in\mI}
      \sum_{|\gamma''|=N,\gamma''\in\mI} \bG_{1,d,d''}
      \bG_{N,\mathcal{N}(\alpha,N),\mathcal{N}(\alpha'',N)}
      \mKn_{\alpha''+e_{d''},\gamma''+e_{l''}}
      \bG^T_{N,\mathcal{N}(\gamma'',N),\mathcal{N}(\gamma,N)}
      \bG^T_{1,l'',l},
    \end{aligned}
  \]
  direct calculations yield that \eqref{eq:HMPN_RI_target} can be
  simplified as
  \[
    \sum_{d=1}^{3} \overline{R_{\alpha,d}} = \sum_{d=1}^{3}
    \sum_{|\alpha''|=N,\alpha''\in\mI}\bG_{N,
      \mathcal{N}(\alpha,N),\mathcal{N}(\alpha'',N)} R_{\alpha'',d},
  \]
  which is equivalent to \eqref{eq:HMPN_RI_target0}.
\end{proof}

According to the calculations before, the ansatz $\hat{I}$ can be
determined by the moments $E_{\alpha}$, $\alpha\in\mI_{N}$.  According
to \eqref{eq:ansatz}, meanwhile, the ansatz can be also determined by
$c_{0,1}, c_{0,2}, c_{0,3}, f_{\alpha}$, $\alpha \in \mI_{N}$. Noticing
that $f_{e_1}=f_{e_2}=f_{e_3}=0$, we can use a vector
$\bw\in\bbR^{(N+1)^2}$, which is defined as
\begin{equation}\label{eq:bwdefine}
  \bw = \left( f_0, c_{0,1}, c_{0,2}, c_{0,3}, f_{\alpha},
    \alpha\in\mI_{N} , |\alpha|\geq 2 \right)^T,
  \end{equation}
to describe the ansatz. Therefore, we can rewrite the \MPN system
\eqref{eq:momentsystem} and the \HMPN system \eqref{eq:HMPNsystem} by
using the variables $\bw$ instead of $E_{\alpha}$.

Notice that $\mPn_N\pd{\mP_N I}{s}$, $s=t,x_1,x_2,x_3$, can be linearly
expressed by $\Pln_{\alpha}$, $\alpha\in\mI_{N}$, thus there exist a
$(N+1)^2\times (N+1)^2$ matrix $\bD$, satisfies that
\begin{equation}
  \label{eq:matrixDdeduce}
  \mPn_N \pd{\mP_N I}{s} = (\bm\Pln)^T \bD \pd{\bw}{s}, \quad s = t,x_1,x_2,x_3, 
\end{equation}
where $\bm\Pln$ is a vector of all quasi-orthogonal functions 
$\Pln_{\alpha}$ for $\alpha\in\mI_N$, whose dimension is $(N+1)^2$.
Therefore, the \HMPN system \eqref{eq:HMPNprojectionform} can be
rewritten as
\begin{equation}
  \label{eq:HMPNsystem_matrix_1}
  \dfrac{1}{c} (\bm\Pln)^T \bD \pd{\bw}{t}  
  + \sum_{d=1}^{3} \mPn_N \Omega_d (\bm\Pln)^T \bD \pd{\bw}{x_d} = S.
\end{equation}
Taking the inner product of \eqref{eq:HMPNsystem_matrix_1} with
$\bm\Pln$, one can obtain the matrix form of the \HMPN system, written
as
\begin{equation}
  \label{eq:HMPNsystem_matrix}
  \dfrac{1}{c} \bm\Lambda \bD \pd{\bw}{t}  
  + \sum_{d=1}^{3} \bM_d \bD \pd{\bw}{x_d} = \tilde{\bm S},
\end{equation}
where $\bLambda_{\alpha,\beta} = \inner{\Pln_{\alpha}}{\Pln_{\beta}}$
is symmetric positive definite, and
$\bM_{d,\alpha,\beta} = \inner{\Pln_{\alpha}}{\Omega_d\Pln_{\beta}}$
are symmetric, for $d=1,2,3$, and
$\tilde{S}_{\alpha} = \int_{\bbS^2} S \Pln_{\alpha} \dd\bOmega$.

\begin{theorem}
  The 3D \HMPN model is globally hyperbolic. Precisely,
  $\forall \bm{n} \in \bbS^2$,
  $\bA\triangleq c\sum_{d=1}^{3} (\bLambda\bD)^{-1}(n_d\bM_d\bD)$ is real
  diagonalizable. Moreover, the eigenvalue of $\bA$ is not greater
  than the speed of light.
\end{theorem}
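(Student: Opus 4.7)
The plan is to reduce the claim to the standard fact that a generalized eigenvalue problem $Mv=(\lambda/c)\bLambda v$, with $M$ symmetric and $\bLambda$ symmetric positive definite, has only real eigenvalues, a complete eigenbasis, and Rayleigh-quotient-valued eigenvalues. The task then splits cleanly into a similarity reduction and a pointwise bound on $\bm n\cdot\bOmega$.

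First I would rewrite
\[
\bA \;=\; c\,\bD^{-1}\bLambda^{-1}\Bigl(\sum_{d=1}^3 n_d\bM_d\Bigr)\bD,
\]
so that $\bA$ is similar, via the (invertible) change-of-variables matrix $\bD$, to $c\,\bLambda^{-1}M$ with $M\triangleq\sum_{d=1}^3 n_d\bM_d$. Since each $\bM_d$ is symmetric, $M$ is symmetric. Since $\bLambda$ is symmetric positive definite, its square root $\bLambda^{1/2}$ exists and is SPD, so
\[
c\,\bLambda^{-1}M \;=\; \bLambda^{-1/2}\bigl(c\,\bLambda^{-1/2}M\bLambda^{-1/2}\bigr)\bLambda^{1/2},
\]
exhibiting $c\,\bLambda^{-1}M$ as similar to the real symmetric matrix $c\,\bLambda^{-1/2}M\bLambda^{-1/2}$. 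Composing the two similarities, $\bA$ is real diagonalizable with the same eigenvalues as this symmetric matrix.

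Second, for the wave-speed bound I would use the Rayleigh-quotient characterisation of the generalized eigenvalue problem. Any eigenvalue $\lambda$ of $\bA$ satisfies $Mv=(\lambda/c)\bLambda v$ for some non-zero $v$, hence
\[
\frac{\lambda}{c} \;=\; \frac{v^T M v}{v^T \bLambda v}.
\]
Using the definitions $\bLambda_{\alpha,\beta}=\inner{\Pln_\alpha}{\Pln_\beta}$ and $\bM_{d,\alpha,\beta}=\inner{\Pln_\alpha}{\Omega_d\Pln_\beta}$, both expressed as integrals against the same strictly positive density $w$ on $\bbS^2$, and setting $\psi\triangleq\sum_{\alpha\in\mI_N} v_\alpha\Pln_\alpha$, one obtains
\[
v^T\bLambda v \;=\; \int_{\bbS^2}\psi^2\,w\,\dd\bOmega,\qquad
v^T M v \;=\; \int_{\bbS^2}(\bm n\cdot\bOmega)\,\psi^2\,w\,\dd\bOmega.
\]
Because $\bm n,\bOmega\in\bbS^2$ imply $|\bm n\cdot\bOmega|\le 1$, the right-hand side of the second integral is dominated in absolute value by the first, giving $|v^T M v|\le v^T\bLambda v$ and hence $|\lambda|\le c$.

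I expect the main obstacle to lie not in the algebra but in being careful about what $\inner{\cdot}{\cdot}$ means for the tilded basis and confirming that $\bLambda$ and the $\bM_d$ are indeed moments against a common positive density, so that the Rayleigh inequality closes cleanly. The other technical point worth verifying in passing is that $\bD$ is invertible at admissible states $\bw$, which is what legitimizes the similarity reduction in the first step and hence the identification of eigenvalues of $\bA$ with those of the symmetric matrix $c\,\bLambda^{-1/2}M\bLambda^{-1/2}$.
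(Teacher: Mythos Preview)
Your proposal is correct and follows essentially the same approach as the paper: reduce $\bA$ by similarity via $\bD$ to $c\,\bLambda^{-1}M$ with $M=\sum_d n_d\bM_d$ symmetric and $\bLambda$ SPD, then read off the eigenvalue bound from the Rayleigh quotient interpreted as $\int(\bm n\cdot\bOmega)\psi^2 w\,\dd\bOmega\big/\int\psi^2 w\,\dd\bOmega$ with $|\bm n\cdot\bOmega|\le 1$. Your explicit use of the $\bLambda^{1/2}$ similarity and your flagging of the inner-product weight and the invertibility of $\bD$ are minor elaborations, not a different route.
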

\begin{proof}
  Notice that, for $\forall \bm{n}\in\bbS^2$,
  \[ 
    \bA = c\sum_{d=1}^{3} n_d\bD^{-1}\bLambda^{-1}\bM_d\bD =
    \bD^{-1}\bLambda^{-1}\bM\bD,
  \]
  where $\bLambda$ is symmetric positive definite, and
  $\bM_{\alpha,\beta} = \innern{\Pln_{\alpha}}
  {\sum_{d=1}^{3}n_d\Omega_d\Pln_{\beta}}$ is symmetric, we have that
  $\bA$ is diagonalisable with real eigenvalues, thus the \HMPN system
  is globally hyperbolic.

  Moreover, denote $\lambda$ as the eigenvalue of $\bA$, then
  $\lambda$ is the eigenvalue of $c\bLambda^{-1}\bM$, and suppose
  $\bxi\in\bbR^{(N+1)^2}$ is the corresponding eigenvector, i.e.
  $ \bM \bxi = \lambda \bLambda\bxi$. Precisely, denote
  $\mathcal{F}=\bxi^T\bm\Pl\in\spaceHn_N$, we have
  \begin{equation}
    \label{eq:eigenvaluededuce}
    \innern{\Pln_{\alpha}}{\left(
        c\sum_{d=1}^{3}n_d\Omega_d-\lambda\right)\mathcal{F}} = 0,\quad \forall
    \alpha\in\mI_{N}.
  \end{equation}
  Notice $\mathcal{F}\in\spaceHn_N$, thus \eqref{eq:eigenvaluededuce}
  implies 
  \[   
    \innern{\mathcal{F}}{\left(
        c\sum_{d=1}^{3}n_d\Omega_d-\lambda\right)\mathcal{F}} = 0.
  \]
  Therefore, 
  \[ 
    \lambda =
    \dfrac{c\innern{\mathcal{F}}{(\bm{n}\cdot\bOmega)\mathcal{F}}}
    {\innern{\mathcal{F}}{\mathcal{F}}}.
  \]
  Notice $|\bm{n}\cdot\bOmega|\leq 1$, we have 
  \[
    |\lambda|\leq c.
  \]
\end{proof}

At the end of this section, we give a summary of the properties of the
3D \HMPN model.
\begin{itemize}
\item The 3D \HMPN model is rotational invariant.
\item The 3D \HMPN model is globally hyperbolic.
\item All the characteristic speeds of the 3D \HMPN model are not
  greater than the speed of light.
\item The hyperbolic regularization adopted in the 3D \HMPN model does
  not change the governing equations of $E_{\alpha}$ for
  $|\alpha|\leq N-1$ in the 3D \MPN model.
\item The hyperbolic regularization vanishes when $N=1$, i.e. the 3D
  \HMPN model is exactly the same as the 3D \MPN model and the 3D \Mone
  model.
\end{itemize}


\section{Numerical Validation} \label{sec:num}

In this section, we try to validate the 3D \HMPN model by numerical
experiments. For this purpose, we first give a preliminary numerical
scheme for the regularized reduced model \eqref{eq:HMPN}, and then
perform numerical simulations on some typical examples to demonstrate
its validity. Due to practical difficulties in a 3D computation and
noticing that our aim is to validate the new model, we currently only
perform our numerical simulations on 2D domains. Actually, 2D numerical
results are enough to demonstrate the major features of our model.

\subsection{Numerical scheme} 
We first collect the equations in the regularized reduced model
\eqref{eq:HMPNsystem} in a matrix formation as
\begin{equation}\label{eq:HMPN}
  \dfrac{1}{c} \pd{\bU}{t} + \sum_{d=1}^{2} \left[
    \pd{\bF_d(\bU)}{x_d} + {\bf R}_d(\bU)\pd{\bU}{x_d}\right] = \bm S,
\end{equation}
where $\bU$, $\bF(\bU)$, ${\bf R}_d \pd{\bU}{x_d}$ and $\bm S$ are
vectors in $\mathbb{R}^{(N+1)^2}$, whose $\alpha$-th element are
$\bU_{\alpha} = E_{\alpha}$, $\bF_{d,\alpha} = E_{\alpha+e_d}$,
$\left({\bf R}_d \pd{\bU}{x_d}\right)_{\alpha} = R_{\alpha, d}$, and
$\bm S_\alpha = S_\alpha$, respectively.

Denote the computational domain by $[x_l, x_r]\times[y_l,y_r]$, where
we can make a finite volume discretization conveniently. The
domain is partitioned uniformly into $N_x\times N_y$ cells.  The
$(i,j)$-th mesh cell is
$[x_{i-1/2}, x_{i+1/2}] \times [y_{j-1/2},y_{j+1/2}]$,
$i = 1, \dots, N_x$, $j = 1, \dots, N_y$ with
$x_{i+1/2}=x_l+i\Delta x$, $y_{j+1/2} = y_l+j\Delta y$, and
$\Delta x=\frac{x_r - x_l}{N_x}$, $\Delta y = \frac{y_r -
  y_l}{N_y}$. Let $\bU_{i,j}^n$ be the approximation of the solution
$\bU$ on the $(i,j)$-th mesh cell at the $n$-th time step $t_n$.

For the purpose to validate the model, the numerical scheme for
\eqref{eq:HMPN} the simpler the better that we adopt abruptly a
splitting scheme in each time step. Precisely, we split it into three
parts: convection terms in two spatial directions and the source term
as
\begin{align}
  \label{eq:convection}
  \text{convection in }x\text{-direction:}\quad
  & \frac{1}{c}\pd{\bU}{t}+
    \pd{\bF_1(\bU)}{x}+\bR_1(\bU)\pd{\bU}{x}=0,   \\
  \text{convection in }y\text{-direction:}\quad
  & \frac{1}{c}\pd{\bU}{t}+
    \pd{\bF_2(\bU)}{y}+\bR_2(\bU)\pd{\bU}{y}=0,   \\
    \label{eq:source_num}
  \text{source part:}\quad & \frac{1}{c}\pd{\bU}{t} = \bm S.
\end{align}
Next, we give the numerical scheme for both parts in our code.

\paragraph{Source term}
The right hand side $\mS(I)$ denotes the actions by the background
medium on the photons. Generally, it contains a scattering term, an
absorption term, and an emission term, and has the form
\cite{Bru02,McClarren2008Semi}
\begin{equation} \label{eq:sourceterm_initial} \mS(I) =
  \dfrac{1}{4\pi}\sigma_s\int_{\bbS^2} I \dd\mu -(\sigma_a + \sigma_s)
  I+ \dfrac{1}{4\pi}\sigma_a acT^4 + \dfrac{s}{4\pi},
\end{equation}
where $a$ is the radiation constant; $T(\bx,t)$ is the material
temperature; $\sigma_a(\bx,T)$, $\sigma_s(\bx,T)$ and
$\sigma_t=\sigma_a+\sigma_s$ are the absorption, scattering, and total
opacity coefficients, respectively; and $s(\bx)$ is an isotropic
external source. The temperature is related to the internal energy
$e$, whose evolution equation is
\begin{equation}\label{eq:internalenergy}
  \pd{e}{t} = \sigma_a \left(\int_{\bbS^2} I\dd\bm\Omega -acT^4\right).
\end{equation}
The relationship between $T$ and $e$ is problem-dependent, and we will
assign it in the numerical examples when necessary.

Noticing the quartic term $a c\sigma_a T^4$ in $\mS(I)$ and the
evolution equation of $e$ \eqref{eq:internalenergy}, we adopt the
implicit Euler scheme on them as
\[
  \frac{\bU_{i,j}^{n+1}-\bU_{i,j}^n}{c\Delta t} = \bm S^{n+1}_{i,j},\quad
  \dfrac{e_{i,j}^{n+1}-e_{i,j}^n}{\Delta t} = \sigma_{a,i,j}^{n+1} 
  \left( E_{0,i,j}^{n+1}-ac(T_{i,j}^{n+1})^4 \right).
\]
One can directly check that in the absence of any external source of
radiation, i.e., $s = 0$, this discretization satisfies the
conservation of total energy as
\[
  \frac{e^{n+1}_{i,j}-e^n_{i,j}}{\Delta t} +
  \frac{E^{n+1}_{0,i,j}-E^{n}_{0,i,j}}{c\Delta t} = 0.
\]

\paragraph{Convection part}
Without loss of generality, we only consider the convection part in
$x$-direction. For a better reading experience below, the subscript of
dimension is suppressed, i.e. we will use $\bF$ and $\bR$ instead of
$\bF_1$ and $\bR_1$. The hyperbolic regularization in \cref{sec:hyper}
modifies the governing equation of $E_{\alpha}$, $\alpha\in\mI_{N}$,
$|\alpha|=N$, such that these equations can not be written into a
conservation form.  Therefore, the classical Riemann solvers for
hyperbolic conservation laws can not be directly applied to solve
\eqref{eq:convection}. In the numerical simulations of the \HMPN
system in slab geometry \cite{HMPN}, we adopt the DLM theory
\cite{Maso} to deal with the non-conservation terms, and we continue
to use this method here. Precisely, the key is introducing a path
$\Gamma(\tau;\cdot,\cdot)$, $\tau\in[0,1]$ to connect two states
$\bU^L$ and $\bU^R$ beside the Riemann problem such that
\[
  \Gamma(0;\bU^L,\bU^R) = \bU^L,\quad
  \Gamma(1;\bU^L,\bU^R) = \bU^R.
\]
The path allows a generalization of the Rankine-Hugoniot condition to
the non-conservation system as
\begin{equation} \label{eq:generalizedRH} \bF(\bU^L) - \bF(\bU^R) +
  \int_0^1 [v_s{\bf I}-\bR(\Gamma(\tau;\bU^L,\bU^R))]
  \pd{\Gamma}{\tau}(\tau;\bU^L,\bU^R)\dd \tau = 0,
\end{equation}
if the two states $\bU^L$ and $\bU^R$ are connected by a shock with
shock speed $v_s$.  Then the weak solution of the non-conservation
system can be defined.  Readers can find more details of the
constrained path and the theory results in \cite{Maso}.  We then
introduce the finite volume scheme in \cite{Rhebergen} to discretize
the non-conservation system \eqref{eq:convection}. This scheme can be
treated as a non-conservation version of the HLL scheme and has been
successfully applied to some non-conservation models
\cite{Microflows1D, Qiao}.

Applying the finite volume scheme in \cite{Rhebergen} yields
\begin{equation}
  \label{eq:numericalscheme}
  \dfrac{\bU^{n+1}_{i,j}-\bU_{i,j}^n}{c\Delta t} + \dfrac{
    \hat{\bF}_{i+1/2,j}^{n}-\hat{\bF}_{i-1/2,j}^n}{\Delta x}
  +\dfrac{\hat{\bR}_{i+1/2,j}^{n-}-\hat{\bR}_{i-1/2,j}^{n+}}{\Delta x} = 0.
\end{equation} 
Here the flux $\hat{\bF}^n_{i+1/2,j}$ is the HLL numerical flux for
the conservation term $\pd{\bF(\bU)}{x}$, given by
\begin{equation} \label{eq:HLLflux_con}
  \hat{\bF}_{i+1/2,j}^{n} =
  \begin{cases}
    \bF(\bU_{i,j}^n), &  \lambda^L_{i+1/2,j}\geq 0,\\
    \dfrac{\lambda^R_{i+1/2,j}
      \bF(\bU_{i,j}^n)-\lambda^L_{i+1/2,j}\bF(\bU_{i+1,j}^n) +
      \lambda^L_{i+1/2,j}\lambda^R_{i+1/2,j}(\bU^n_{i+1,j}-\bU^n_{i,j})}
    {\lambda^R_{i+1/2,j}-\lambda^L_{i+1/2,j}}, &
    \lambda^L_{i+1/2,j}<0<\lambda^R_{i+1/2,j},\\
    \bF(\bU_{i+1,j}^n),   &  \lambda^R_{i+1/2,j}\leq 0,\\
  \end{cases}
\end{equation}
where $\lambda^L_{i+1/2,j}$ and $\lambda^R_{i+1/2,j}$ are defined as
\[
  \lambda^L_{i+1/2,j} = \min(\lambda^L_{i,j},\lambda^L_{i+1,j}),\quad
  \lambda^R_{i+1/2,j} = \max(\lambda^R_{i,j},\lambda^R_{i+1,j}).
\]
Here $\lambda^{L}_{i,j}$ and $\lambda^R_{i,j}$ are the minimum and
maximum characteristic speeds of $\bU_{i,j}^{n}$, respectively.  
flux $\hat{\bR}_{i+1/2,j}^{n\pm}$ is the special treatment of the
finite volume scheme in \cite{Rhebergen} for the non-conservation term
$\bR(\bU)\pd{\bU}{x}$, given by
\begin{equation} \label{eq:HLLflux_nonconm}
  \hat{\bR}_{i+1/2,j}^{n-}=
  \begin{cases}
    0,  &  \lambda^L_{i+1/2,j}\geq 0,\\
    -\dfrac{\lambda^L_{i+1/2,j}\bg_{i+1/2,j}^n}
    {\lambda^R_{i+1/2,j}-\lambda^L_{i+1/2,j}},  &
    \lambda^L_{i+1/2,j}<0<\lambda^R_{i+1/2,j},\\
    \bg_{i+1/2,j}^n,   &  \lambda^R_{i+1/2,j}\leq 0,\\
  \end{cases}
\end{equation}
and 
\begin{equation} \label{eq:HLLflux_nonconp}
  \hat{\bR}_{i+1/2,j}^{n+}= 
  \begin{cases}
    -\bg^{n}_{i+1/2,j},    &  \lambda^L_{i+1/2,j}\geq 0,\\
    -\dfrac{\lambda^R_{i+1/2,j}\bg_{i+1/2,j}^n}
    {\lambda^R_{i+1/2,j}-\lambda^L_{i+1/2,j}},  &
    \lambda^L_{i+1/2,j}<0<\lambda^R_{i+1/2,j},\\
    0,  &  \lambda^R_{i+1/2,j}\leq 0,\\
  \end{cases}
\end{equation}
where  
\begin{equation} \label{eq:integrateR}
  \bg_{i+1/2,j}^n=\int_{0}^{1}{\bf
    R}(\Gamma(\tau;\bU_{i,j}^n,\bU_{i+1,j}^n))
  \pd{\Gamma}{\tau}(\tau;\bU_{i,j}^n,\bU_{i+1,j}^n)\dd \tau.
\end{equation}

Since the implicit scheme is adopted in the discretization of the
source term, one can easily check that the discretization is
unconditionally stable. Thus the time step is constrained by the
convection term and complies with the CFL condition
\[
  \text{CFL} := \max_{i,j} |\lambda(\bU_{i,j}^n)| \frac{\Delta
  t}{\Delta x} < 1.
\]

Notice $\bm w$ and $\bm U$ are uniquely determined by each other, therefore the
path $\Gamma(\tau;\bm U^L, \bm U^R)$ is equivalent to the path $\gamma(\tau;\bm
w^L, \bm w^R)$, where $\bw$ is defined in \eqref{eq:bwdefine}, and $\bm w^L$ and $\bm w^R$ 
are the value of $\bw$ when $\bU$ is equal to $\bU^L$ and $\bU^R$, respectively.
In \cite{HMPN}, we verified that the choice of the path $\Gamma(\tau;\cdot,\cdot)$
is not essential, thus in this paper we simply choose the path as a linear path
between $\bw^L$ and $\bw^R$, i.e. 
\[
\gamma(\tau;\bw^L,\bw^R) = \bw^L + \tau (\bw^R-\bw^L), \quad 0\leq \tau\leq 1.
\]
\paragraph{Boundary condition}
We adopt the method in \cite{MPN} and \cite{HMPN} to deal with the
boundary condition.  In \cref{sec:model}, one can determine an
injective function between the distribution function $\hat{I}$ and the
moments $E_{\alpha}$, thus we can construct the boundary condition of
the reduced model based on the boundary condition of the RTE. Without
loss of generality, we take the left boundary, i.e. $x=x_l$ and $i=0$
in $U_{i,j}$ as an example.

On the left boundary, the specific intensity is given by
\[
  I^B(t, \bx; \bOmega) = \left\{
  \begin{aligned}
    &I(t, \bx; \bOmega),\quad &\bOmega \cdot \be_{x}<0,\\
    &I_{\text{out}}(t, \bx; \bOmega),\quad &\bOmega\cdot\be_{x}>0,
  \end{aligned}
  \right.
\]
where $I_{\text{out}}$ is the specific intensity outside of the
domain, which depends on the specific problem and the intensity inside
the domain on the boundary $I(t, \bx; \bOmega)$, where $\bx_1 =
x_l$. Here we list some of the commonly used boundary conditions and the
choices of the intensity $I_{\text{out}}$, for later usage.

\begin{itemize}
  \item Infinite boundary condition:
    \[
      I_{\text{out}}(t, \bx; \bOmega) = I(t, \bx; \bOmega), \quad
      \bOmega\cdot\be_{x} > 0.
    \]
  \item Reflective boundary condition:
    \[
      I_{\text{out}}(t, \bx; \bOmega) = I(t, \bx; \bOmega -
      2(\bOmega\cdot\be_x)\be_x), \quad \bOmega\cdot\be_{x} > 0.
    \]
  \item Vacuum boundary condition:
    \[
      I_{\text{out}}(t, \bx; \bOmega) = 0, \quad \bOmega\cdot\be_{x} >
      0.
    \]
  \item Inflow boundary condition:
    \[
      I_{\text{out}}(t, \bx; \bOmega) = I_{\text{inflow}}(t, \bx;
      \bOmega), \quad \bOmega\cdot\be_x > 0.
    \]
    where $I_{\text{inflow}}$ is the specific intensity of the
    external inflow.
\end{itemize}

Furthermore, we replace the intensity $I(t, \bx; \bOmega)$ by the
specific intensity constructed by the moments in the cell near the
left boundary. Precisely,
\[
  I(t, \bx; \bOmega) = \hat{I}\left(\bOmega; \bU(t, \bx)\right),
\]
where $\bU(t, \bx)$ is the moments in the cell near the left boundary,
i.e. $[x_{1/2}=x_l,x_{3/2}=x_l+\Delta x]\times[y_{j-1/2}, y_{j+1/2}]$
at time $t$. Then one can directly obtain the flux across the left
boundary. Precisely, the $\alpha$-th flux at $t$ and $\bx$ is given by
\[
  \int_{\bbS^2}\bOmega^{\alpha+e_1}I^B(t, \bx; \bOmega)\dd\bOmega =
  \int_{\bOmega\cdot\bm{e}_x<0} \bOmega^{\alpha+e_1}
  \hat{I}(\bOmega;\bU(t, \bx))\dd\bOmega
  +\int_{\bOmega\cdot\bm{e}_x>0}
  \bOmega^{\alpha+e_1}I_{\text{out}}(t,\bx; \bOmega)\dd\bOmega.
\]

\subsection{Numerical examples}
Below, we present some numerical examples to show different features of
the 3D \HMPN model.

\paragraph{Inflow problem}
This example is used to study the behaviour of the solution of the
\HMPN model, hence the right hand side vanishes, i.e. the RTE
degenerates into
\begin{equation}
  \dfrac{1}{c}\pd{I}{t} + \bm\Omega \cdot \nabla_{\bx} I = 0.
  \label{eq:RTE_norhs}
\end{equation}
The initial state is chosen as
\[
  I_0(\bx; \bOmega) =\left\{
\begin{aligned}
  &ac\delta(\bOmega-\bOmega_0),\quad & x<0 \text{ and } y<0,\\
  &\dfrac{10^{-3}}{4\pi}ac,\quad &\text{otherwise},
\end{aligned}
\right.
\]
where $\Vert \bOmega_0\Vert=1$. 
The distribution function is a Dirac delta function, 
which is extremely anisotropic and hard to approximate 
with the \PN model, which approximates the specific intensity with polynomials. 
On the other hand, due to the fact the \SN model only considers the
particles along with a discrete set of angular directions, when the
$\bm\Omega_0$ does not belong to this set, the \SN model can not get a
good approximation. 

The computational domain is $[-0.2, 0.2]\times[-0.2, 0.2]$, and the infinite
boundary conditions are prescribed at the boundaries. 

\begin{figure}[htbp]
  \centering 
  \subfloat[$\cos\theta_0 = \dfrac{3}{10}$, $ct_{\text{end}}=0.05$]{
      \includegraphics[width=0.33\textwidth
      ,trim={10mm 8mm 15mm 8mm}, clip
      ]{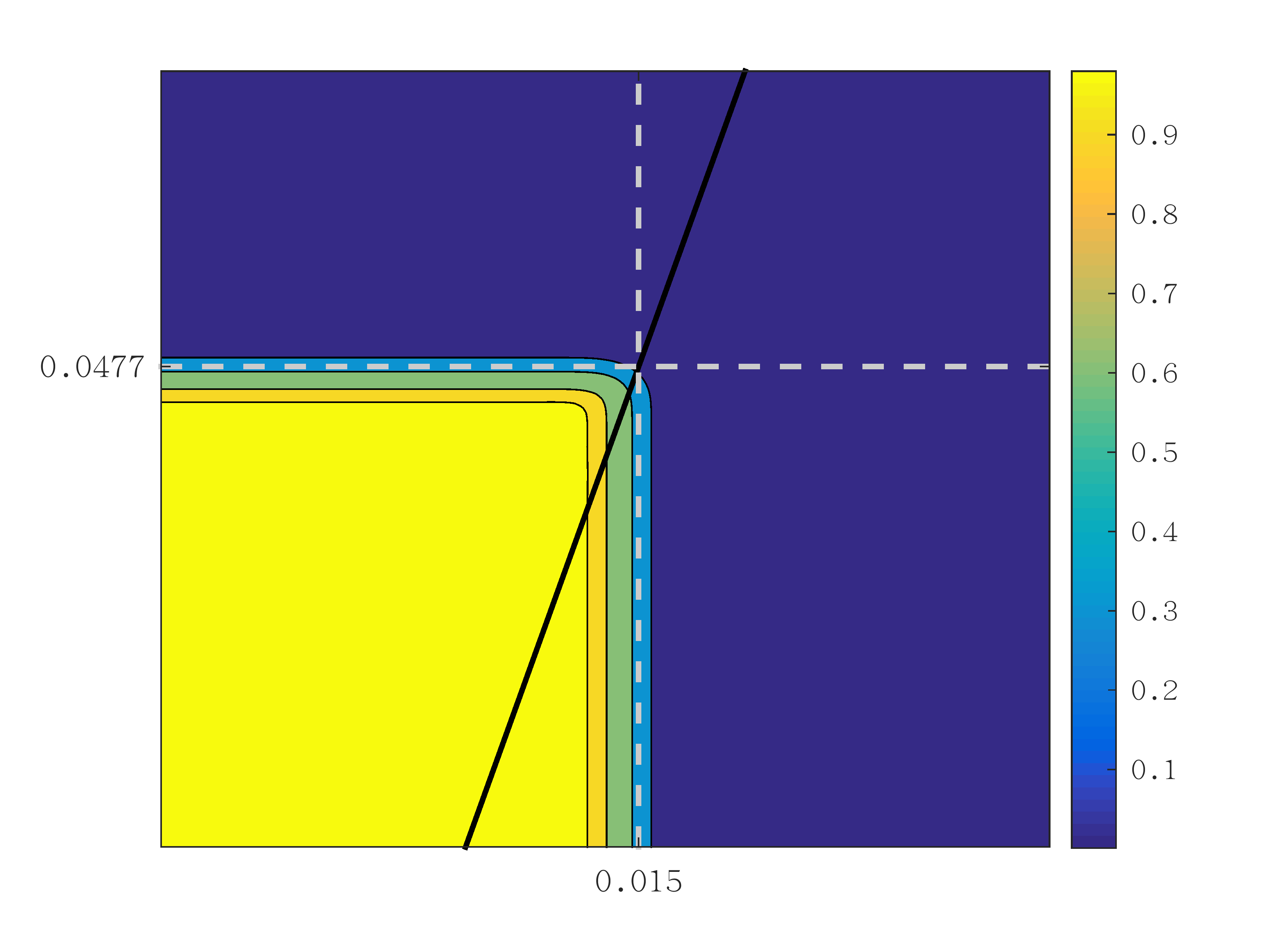}
  }
  \subfloat[$\cos\theta_0 = \dfrac{3}{10}$, $ct_{\text{end}}=0.1$]{
      \includegraphics[width=0.33\textwidth
      ,trim={10mm 8mm 15mm 8mm}, clip
      ]{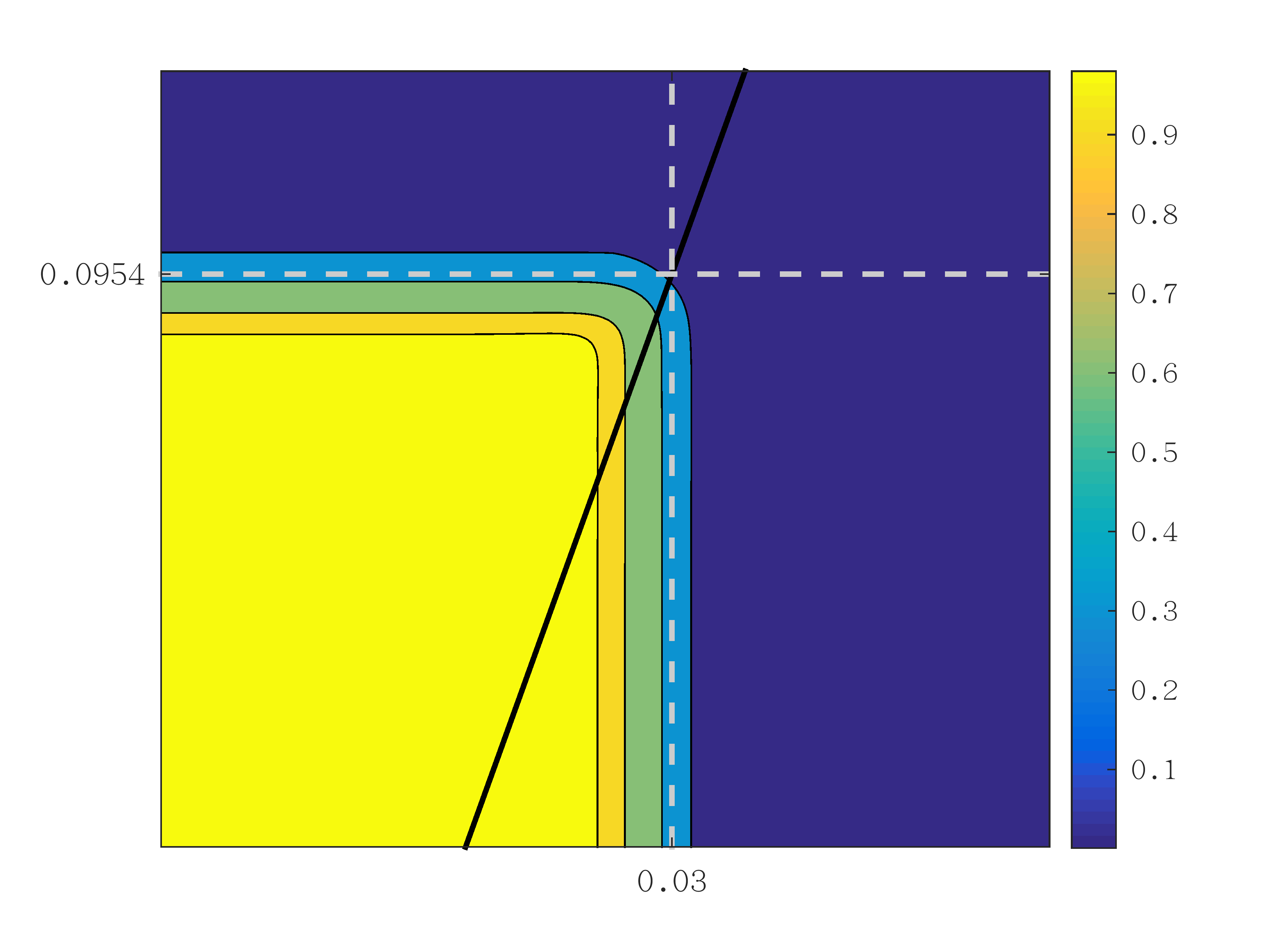}
  }
  \subfloat[$\cos\theta_0 = \dfrac{3}{10}$, $ct_{\text{end}}=0.2$]{
      \includegraphics[width=0.33\textwidth
      ,trim={10mm 8mm 15mm 8mm}, clip
      ]{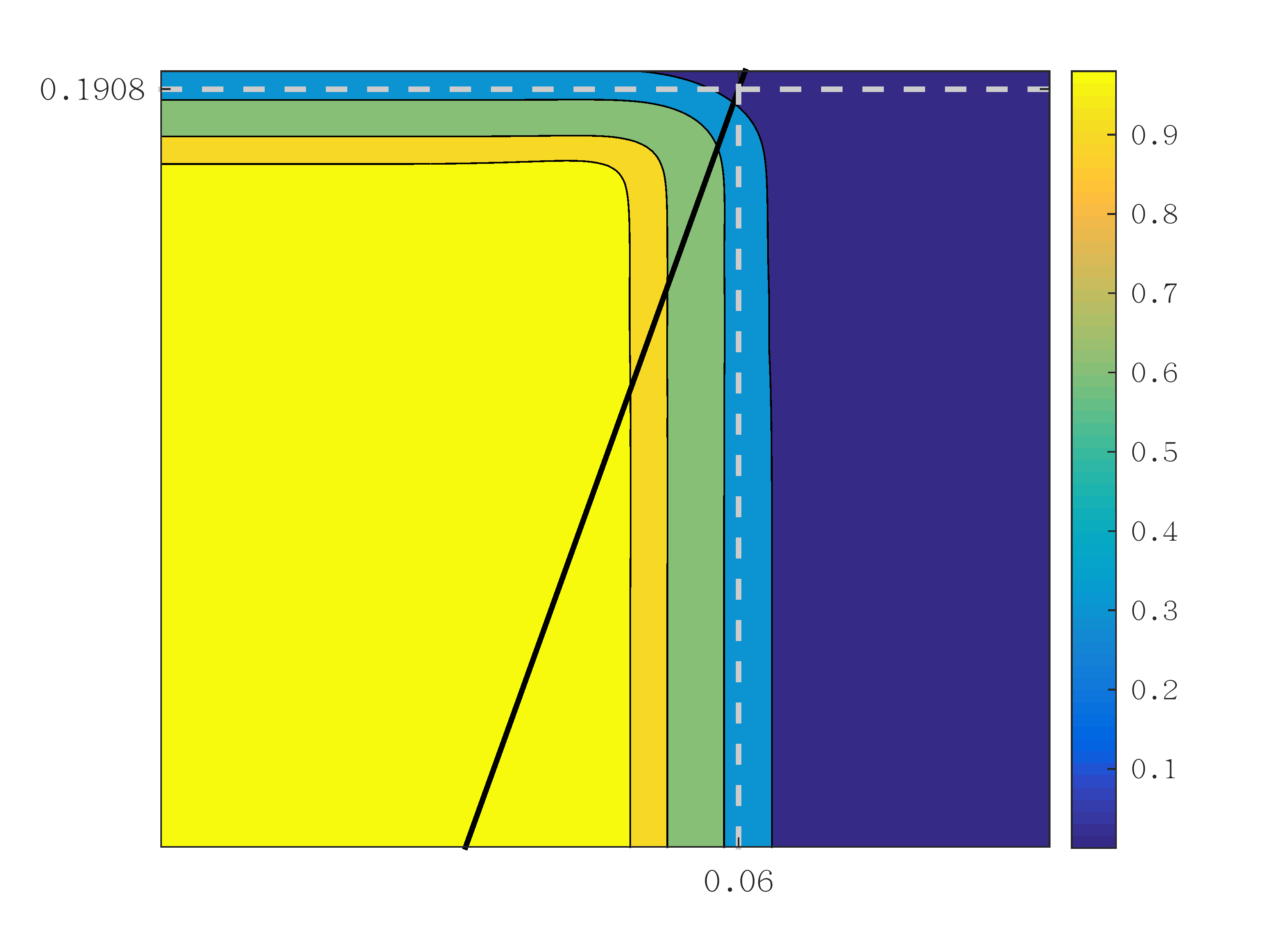}
  }\\
  \subfloat[$\cos\theta_0 = \dfrac{6}{10}$, $ct_{\text{end}}=0.05$]{
      \includegraphics[width=0.33\textwidth
      ,trim={10mm 8mm 15mm 8mm}, clip
      ]{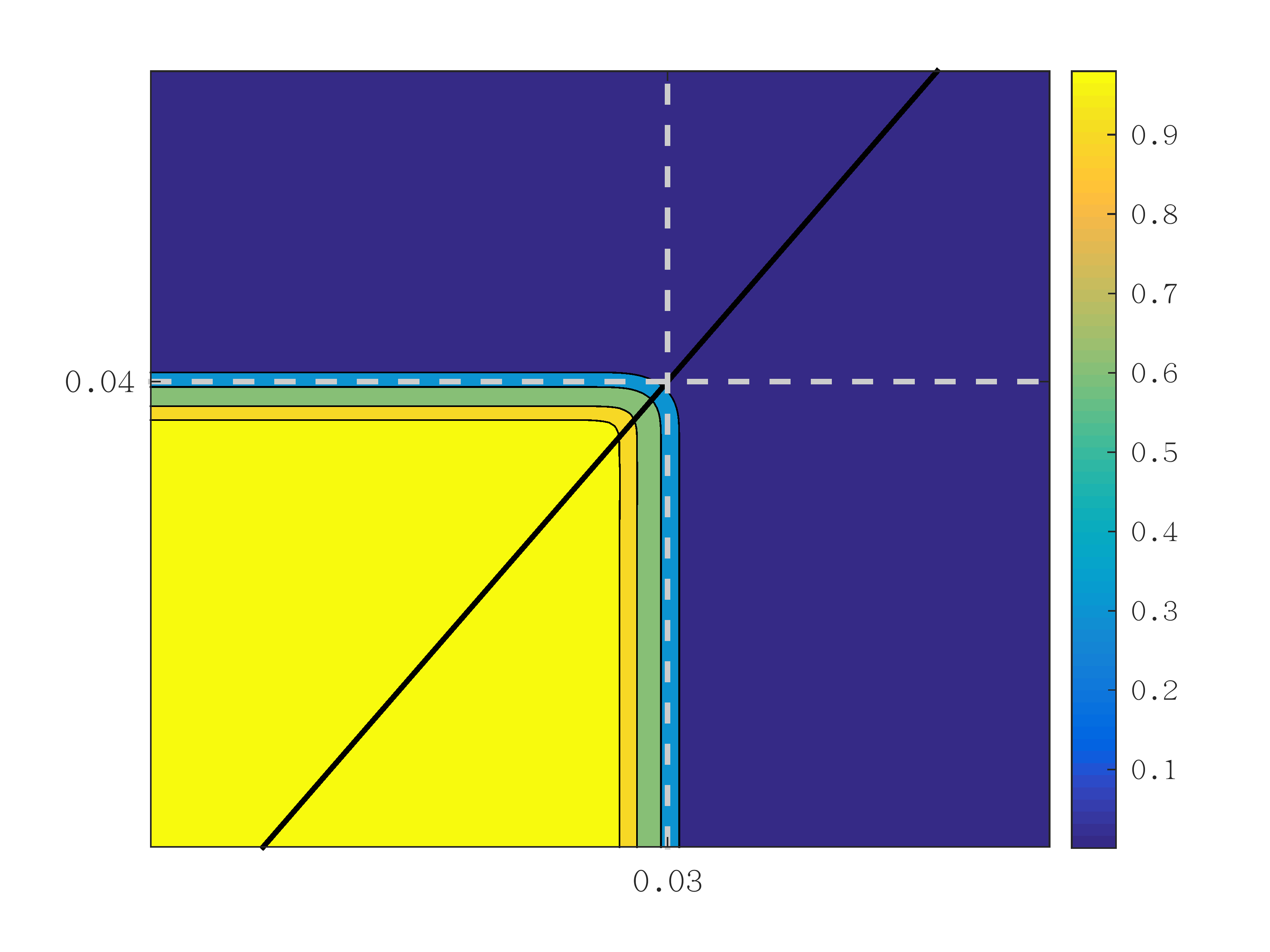}
  }
  \subfloat[$\cos\theta_0 = \dfrac{6}{10}$, $ct_{\text{end}}=0.1$]{
      \includegraphics[width=0.33\textwidth
      ,trim={10mm 8mm 15mm 8mm}, clip
      ]{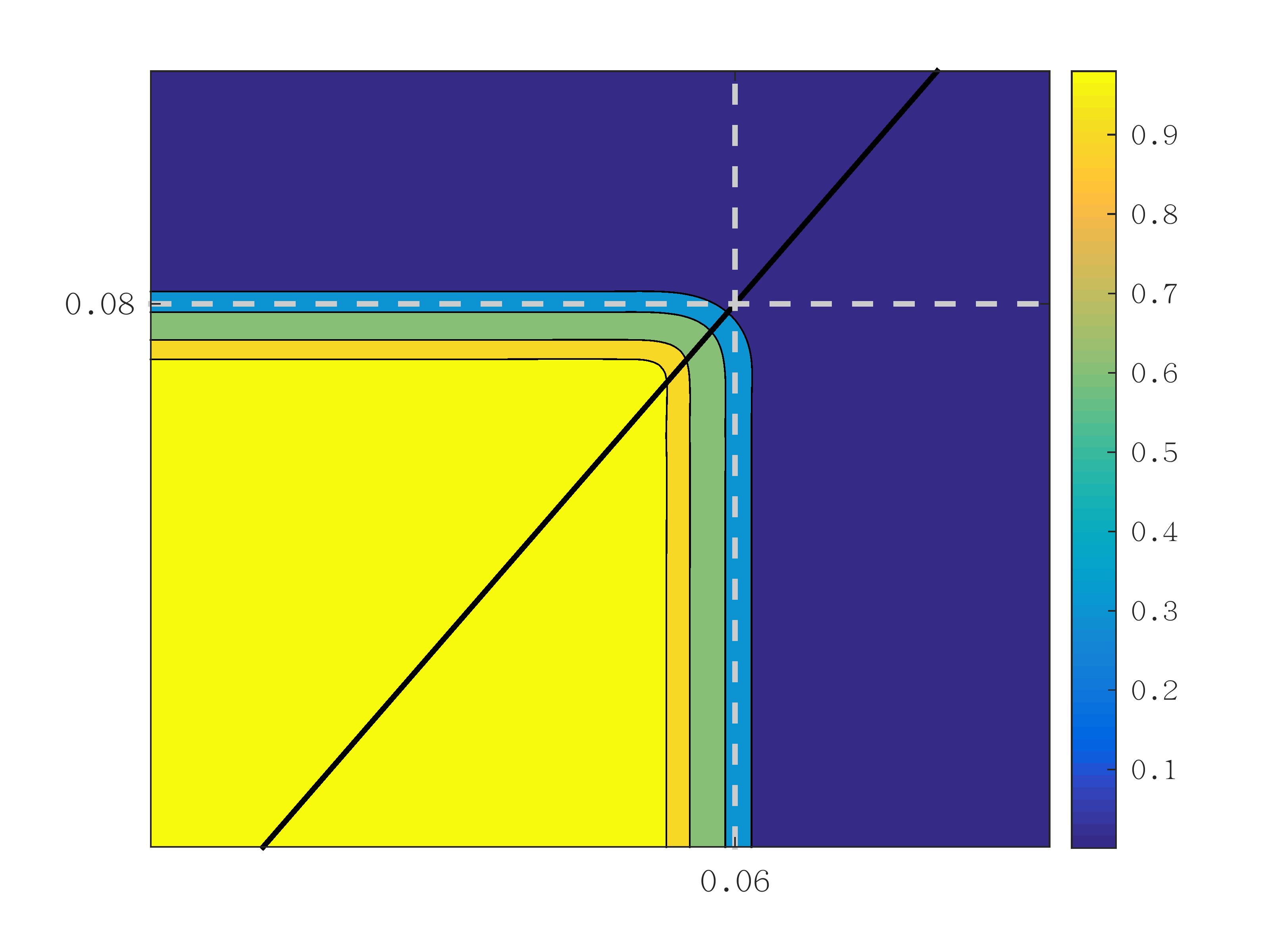}
  }
  \subfloat[$\cos\theta_0 = \dfrac{6}{10}$, $ct_{\text{end}}=0.2$]{
      \includegraphics[width=0.33\textwidth
      ,trim={10mm 8mm 15mm 8mm}, clip
      ]{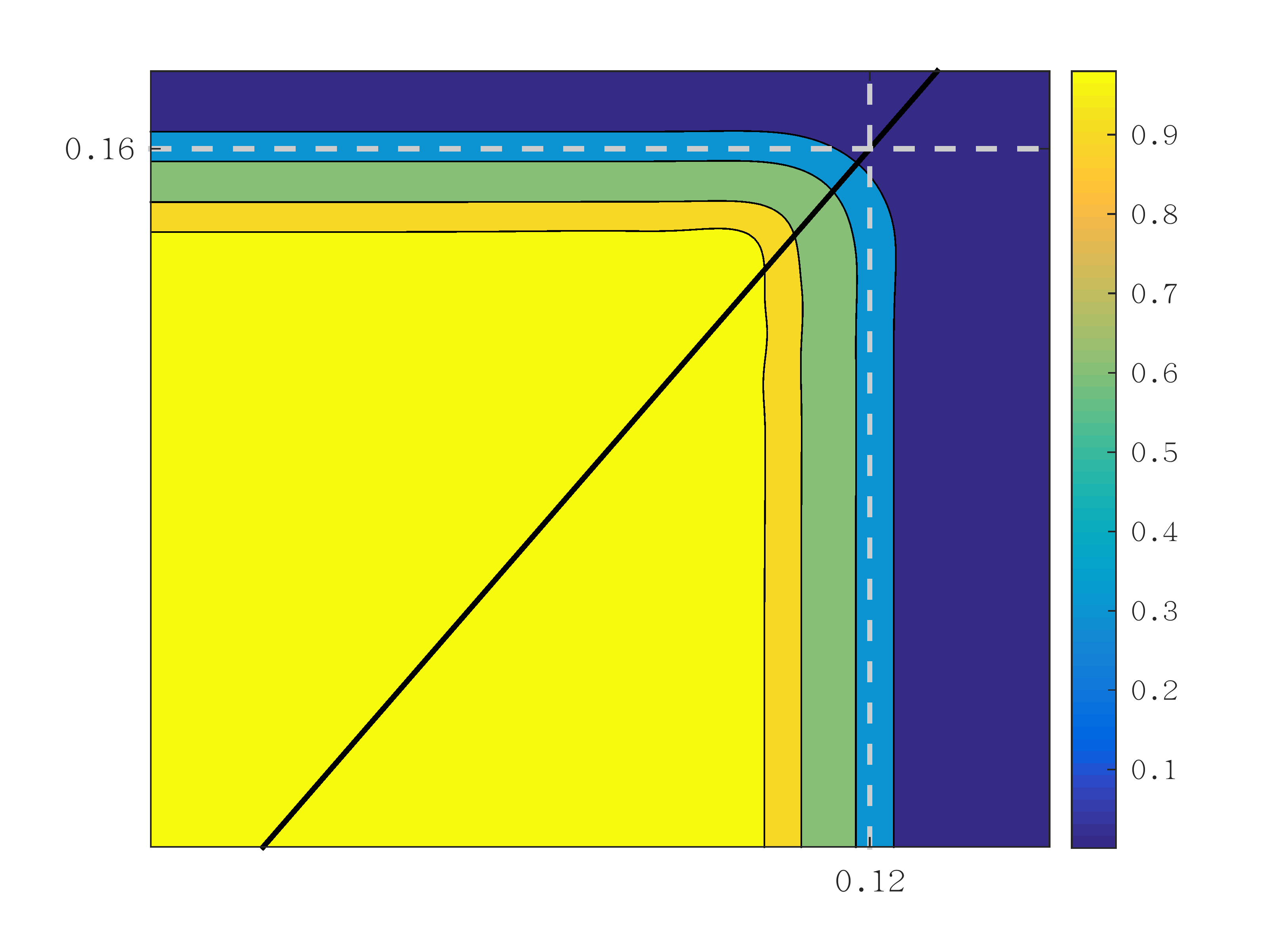}
  }\\
  \subfloat[$\cos\theta_0 = \dfrac{9}{10}$, $ct_{\text{end}}=0.05$]{
      \includegraphics[width=0.33\textwidth
      ,trim={10mm 8mm 15mm 8mm}, clip
      ]{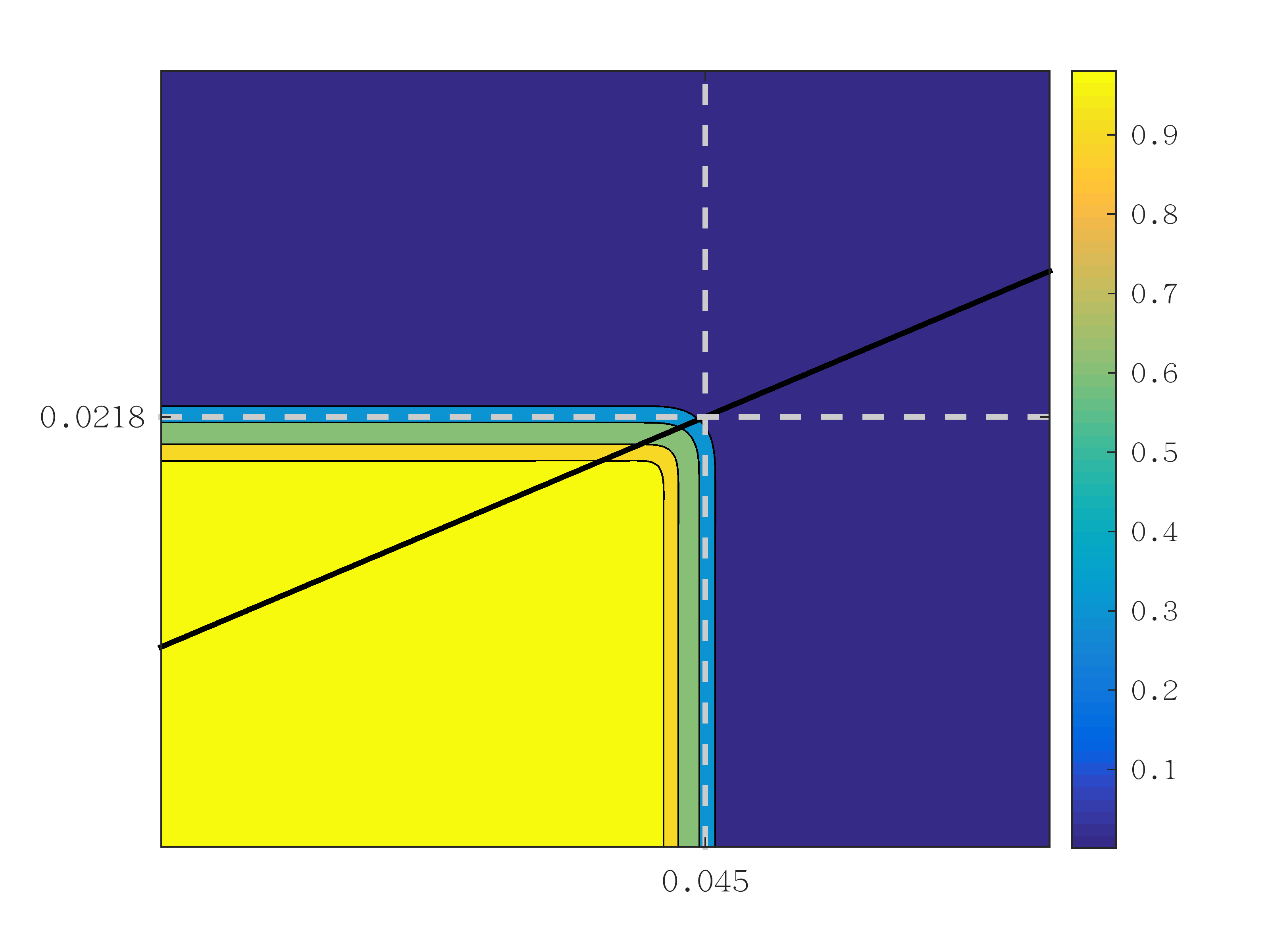}
  }
  \subfloat[$\cos\theta_0 = \dfrac{9}{10}$, $ct_{\text{end}}=0.1$]{
      \includegraphics[width=0.33\textwidth
      ,trim={10mm 8mm 15mm 8mm}, clip
      ]{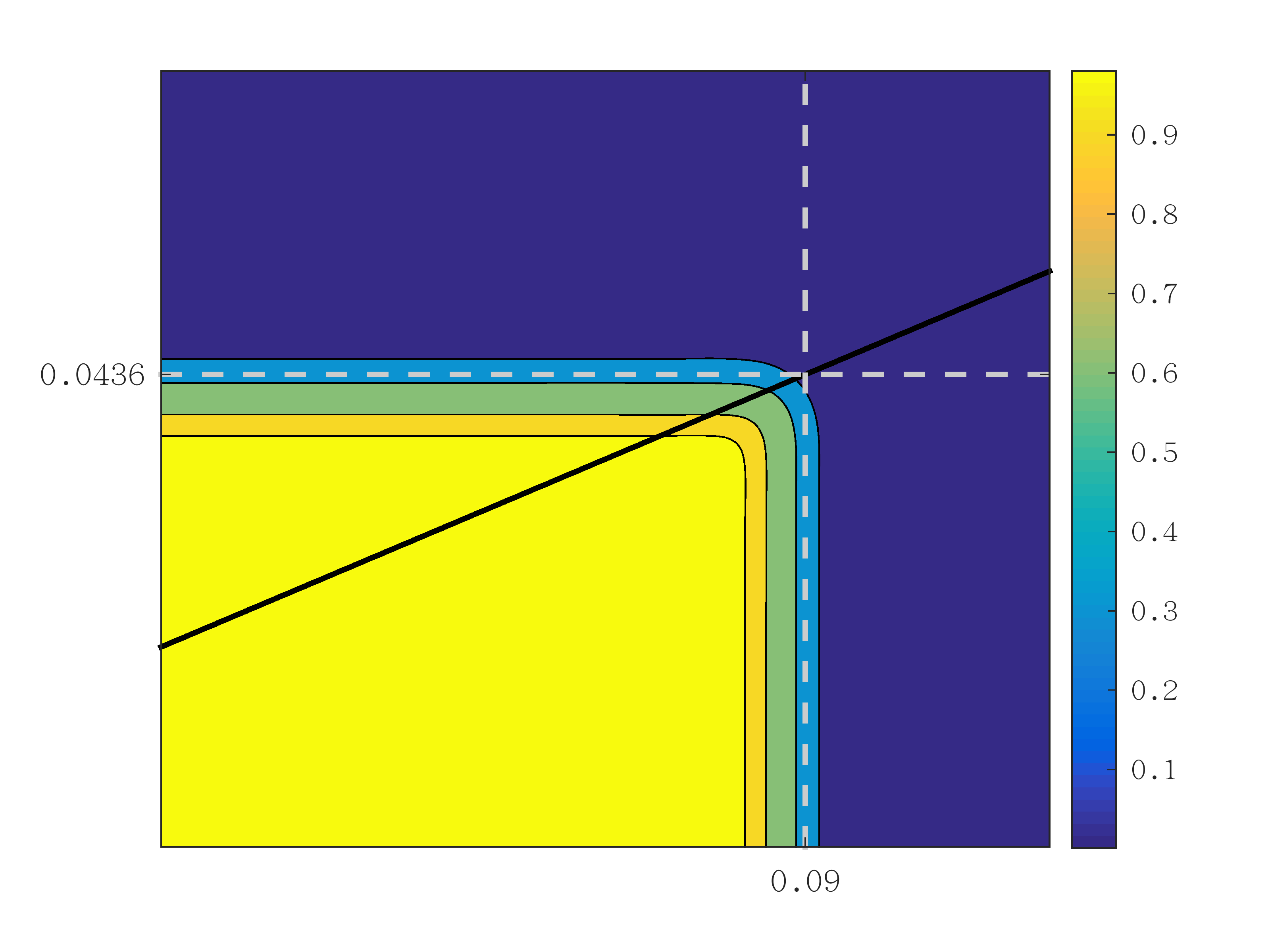}
  }
  \subfloat[$\cos\theta_0 = \dfrac{9}{10}$, $ct_{\text{end}}=0.2$]{
      \includegraphics[width=0.33\textwidth
      ,trim={10mm 8mm 15mm 8mm}, clip
      ]{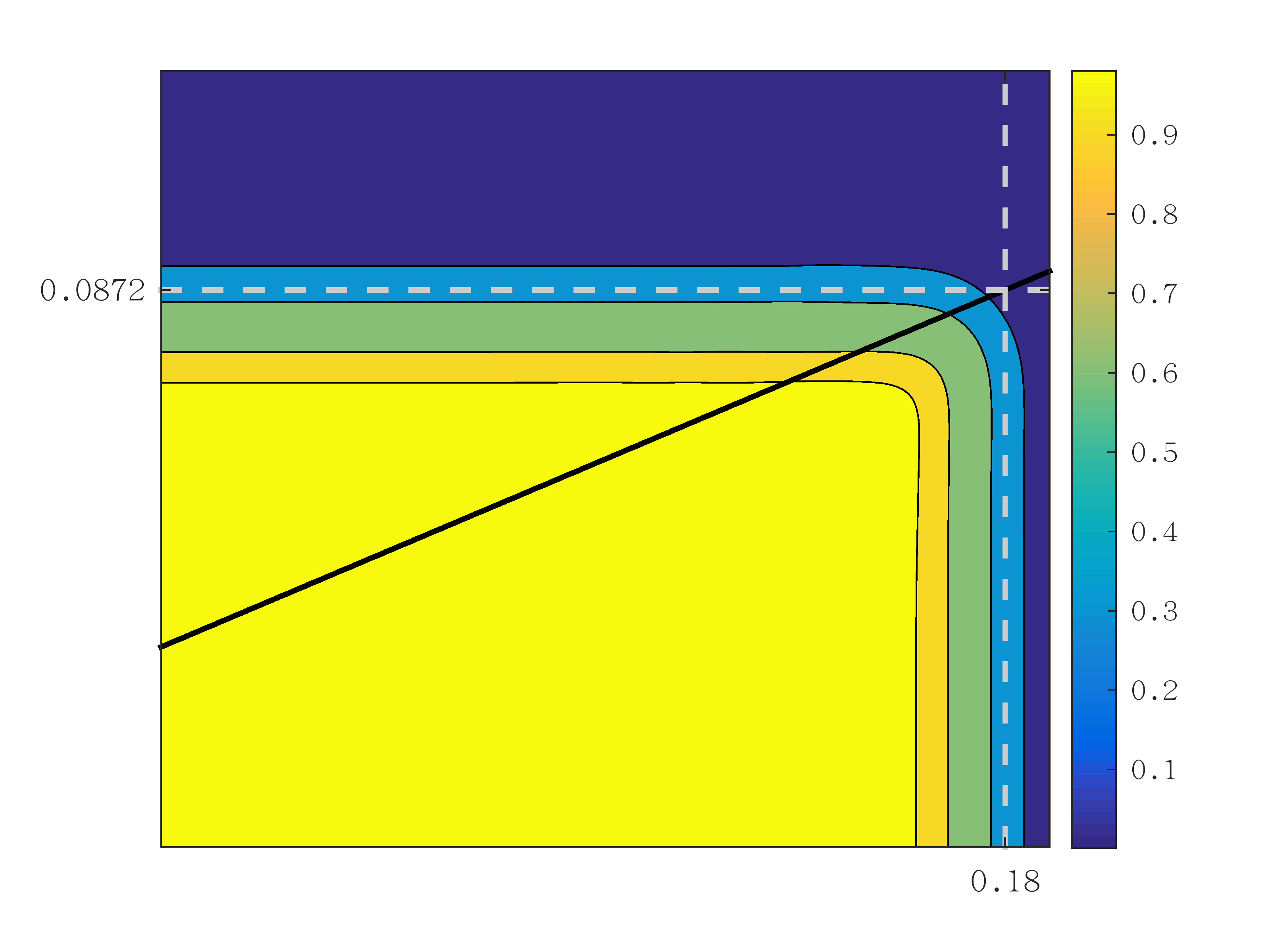}
  }
  \caption{Profile of $E_0$ of the inflow problem with the \HMP{2} model, where the
  directions are different}
  \label{fig:testBeam}
\end{figure}

In order to validate the capability of the 3D \HMPN model to simulate
Dirac delta functions in any direction, we choose some different 
$\bOmega_0$, with $\bOmega_0 = (\cos\theta_0, \sin\theta_0)$, and
$\cos\theta_0=\dfrac{3}{10}$, $\dfrac{6}{10}$, and $\dfrac{9}{10}$.
We simulate this
problem with the 3D \HMPN model till $ct_{\text{end}}=0.05, 0.1$, and
$0.2$, with $N=2$. $N_x=N_y=160$ is applied in this example.
The results of the contours of $\dfrac{E_0}{ac}$ are presented in
\Cref{fig:testBeam}. 
The black line represents the direction of
$\bOmega_0$, i.e. the slope of each line is $\tan\theta_0$, and the
dashed lines ($x=ct_{\text{end}}\cos\theta_0$ and
$y=ct_{\text{end}}\sin\theta_0$) figure out the theoretical results 
of the wave in
$x$-direction and $y$-direction, respectively.
From the
results, one can conclude that the 3D \HMPN model can capture the
wave well. 
Therefore, the approximation of the \HMPN to the
delta function in any direction is satisfying.

\paragraph{Gaussian source problem}
This example is to show the rotational invariance of the \MPN
model. The initial specific intensity is a Gaussian distribution in
space \cite{Frank2012Perturbed}:
\begin{align}
  I_0(\bx; \bOmega) =
  \dfrac{1}{4\pi}\dfrac{ac}{\sqrt{2\pi\theta}}e^{-\frac{\Vert\bx\Vert^2}{2\theta}},
  \quad \theta = \frac{1}{100},\quad  \bx \in (-L,L)\times(-L,L).
\end{align}
Here the computational domain is set by $L=1$, and
$ct_{\text{end}}=0.5$ so that that the energy reaching the boundaries
is negligible, and vacuum boundary conditions are prescribed at all
the boundaries.  The medium is purely scattering with
$\sigma_s=\sigma_t=1$, thus the material coupling term vanishes. We
also set the external source to be zero.

\begin{figure}[htbp]
  \centering 
  \subfloat[$N=2$]{
      \includegraphics[width=0.33\textwidth,
  trim={15mm 8mm 20mm 10mm}, clip]{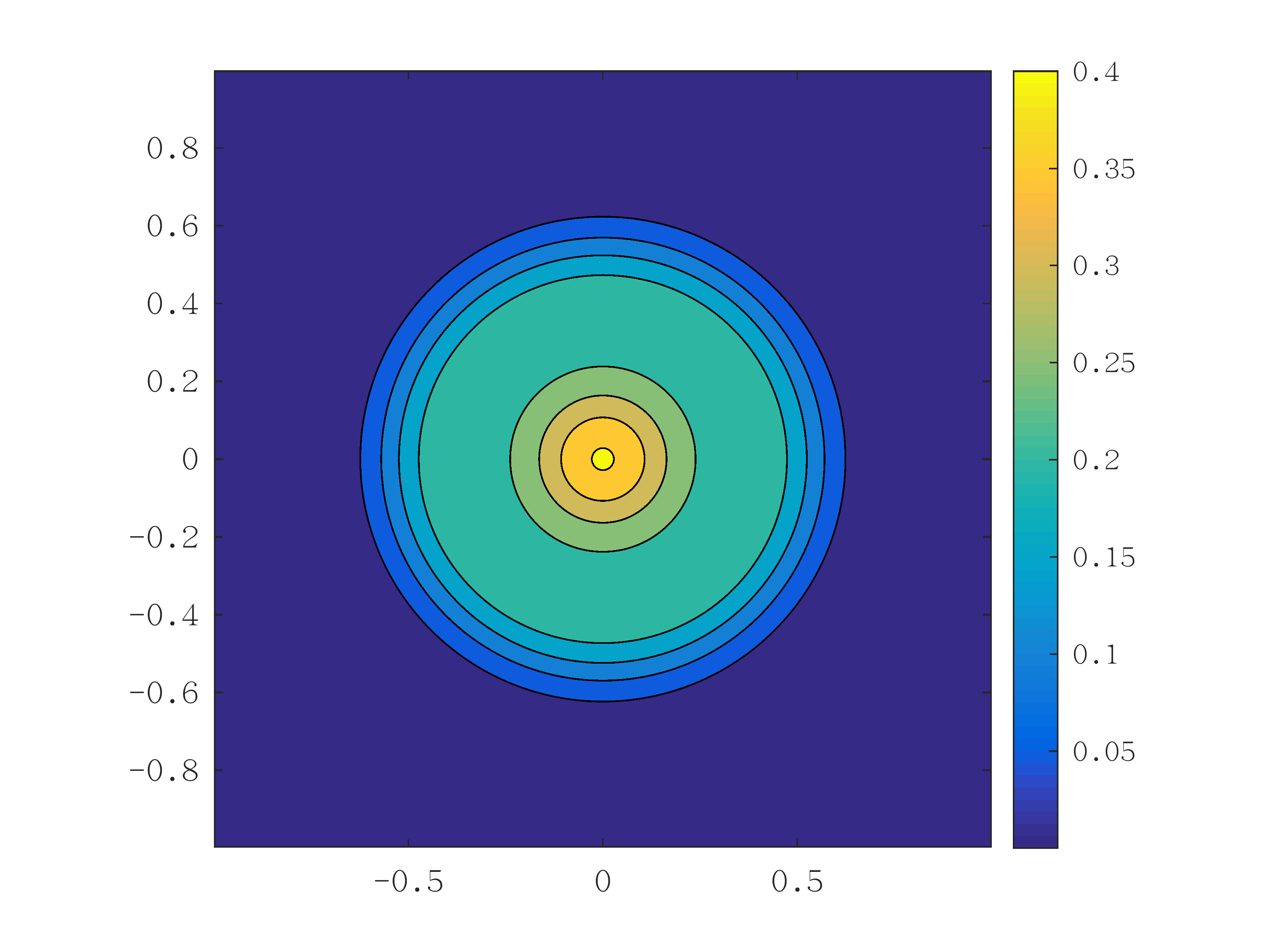}
  }
  \subfloat[$N=3$]{
      \includegraphics[width=0.33\textwidth,
  trim={15mm 8mm 20mm 10mm}, clip]{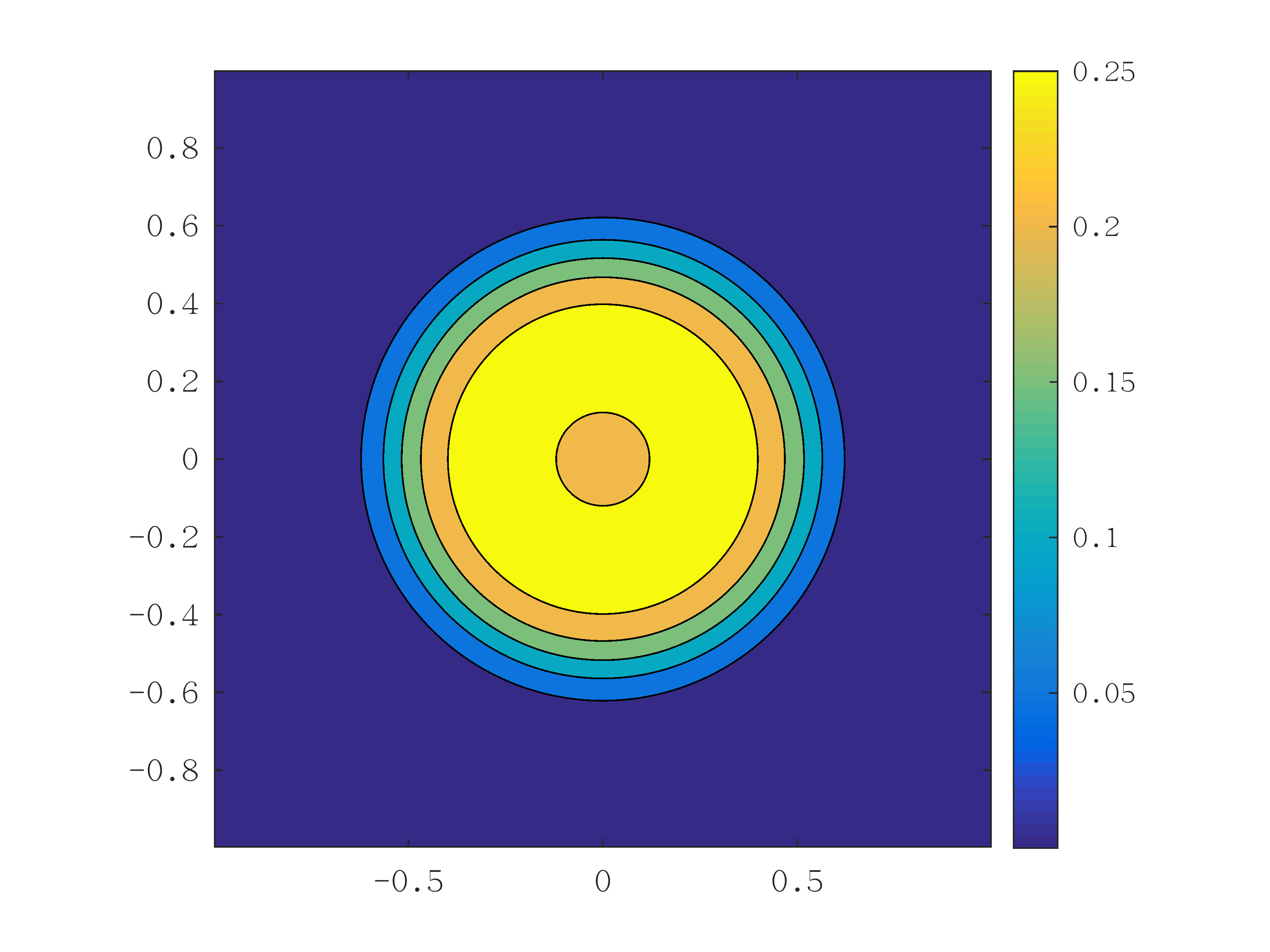}
  }
  \subfloat[$N=4$]{
      \includegraphics[width=0.33\textwidth,
  trim={15mm 8mm 20mm 10mm}, clip]{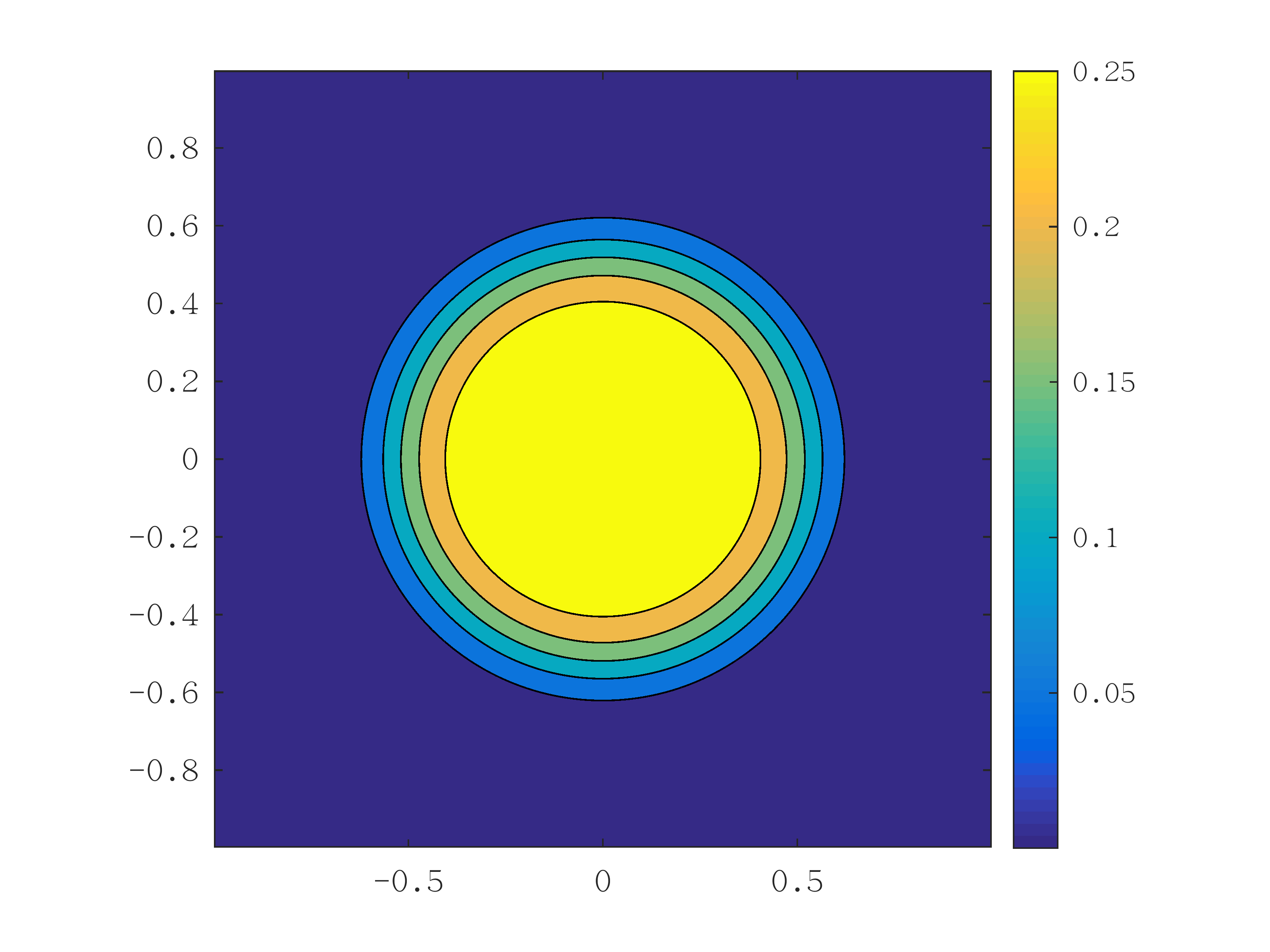}
  }
  \caption{Profile of $E_0$ of the Gassian source problem 
  with the \HMPN model}
  \label{fig:testLS}
\end{figure}

We use $400\times400$ cells to simulate this problem, with the \HMP{2}
model, the \HMP{3} model, and the \HMP{4} model, and the results are
presented in \Cref{fig:testLS}, from which one can conclude the
contours of the $\dfrac{E_0}{ac}$ are circles and there is no ray effect. This
validates the rotational invariance of the 3D \HMPN model.

\paragraph{Bilateral inflow problem}
In the previous numerical example, we validate the capability of the 
3D \HMPN model to simulate Dirac delta function. In this example, we
show that the 3D \HMPN model can also approximate isotropic
distribution function. 

We use the RTE without right hand side \eqref{eq:RTE_norhs}, and 
the initial state is chosen as
\[
  I_0(\bx; \bOmega) =\left\{
\begin{aligned}
  &ac\delta(\bOmega-\bOmega_0),\quad & x<-\dfrac{2}{10} 
  \text{ and } y<-\dfrac{2}{10} ,\\
  &\dfrac{ac}{4\pi},\quad & x>\dfrac{2}{10} \text{ and } y > \dfrac{2}{10},\\
  &\dfrac{10^{-3}}{4\pi}ac,\quad &\text{otherwise},
\end{aligned}
\right.
\]
where $\bOmega_0 = (\frac{\sqrt{2}}{2}, \frac{\sqrt{2}}{2})^T$. 
In the bottom-left region, the distribution function is a Dirac delta 
function, which is extremely anisotropic and hard to approximate 
with the \PN model, which approximates the specific intensity with polynomials. 
Meanwhile, in the upper-right region, the distribution function is a
constant with respect to the velocity direction $\bOmega$, which is 
isotropic. Generally, it is challenging to get a good approximation 
on a Dirac delta function and a constant function. 

The computational domain is $[-1, 1]\times[-1, 1]$, and the infinite
boundary conditions are prescribed at the boundaries.  We simulate this
problem with the \MPN model and the \PN model till $ct_{\text{end}}=0.1$ 
with $N=2$, $3$, and $4$. $N_x=N_y=400$ is applied in this example.

\begin{figure}[htbp]
  \centering 
  \subfloat[\HMP{2} model]{
  \includegraphics[width=0.33\textwidth,trim={15mm 5mm 20mm 10mm}, 
  clip]{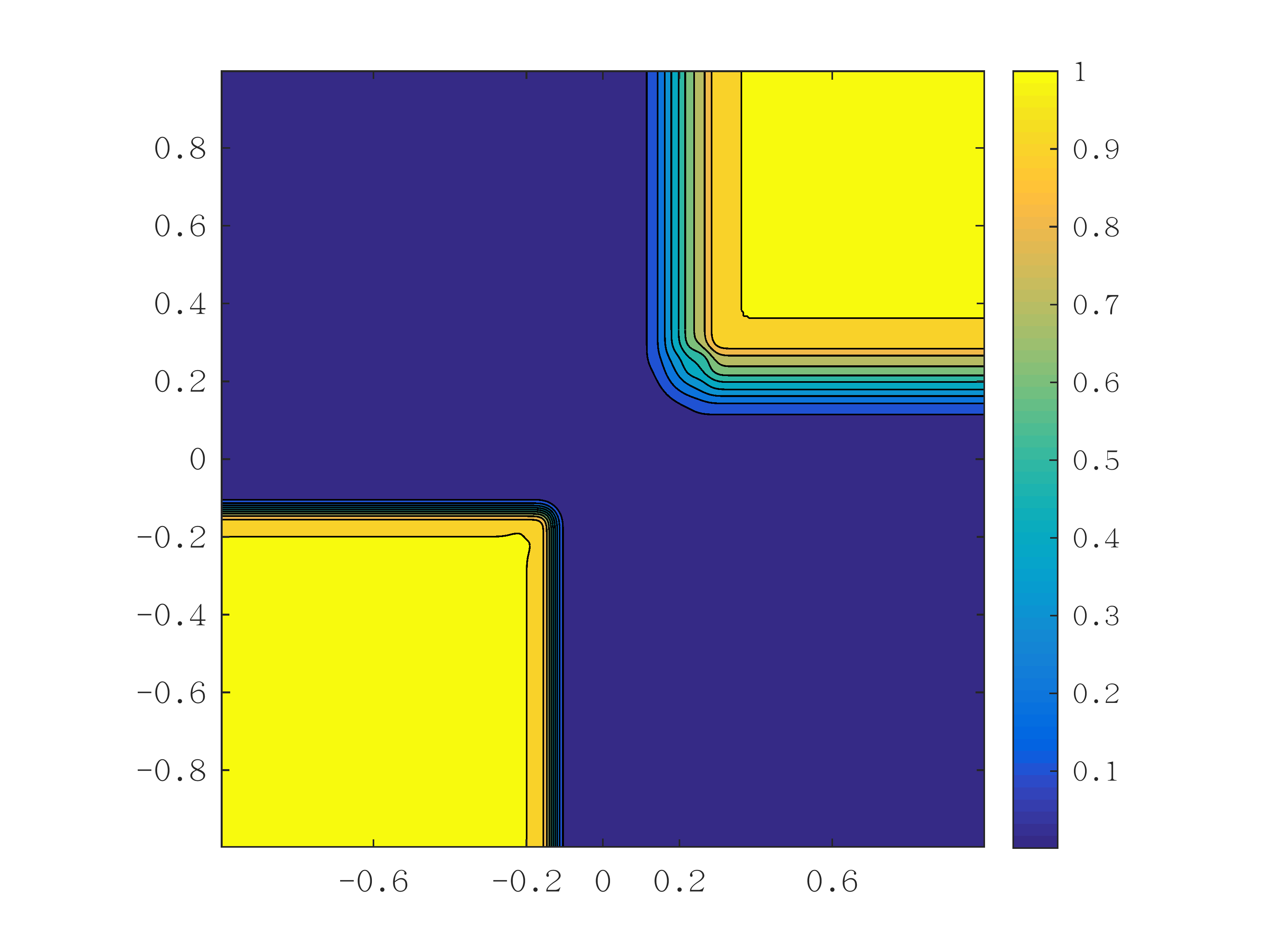}
  }
  \subfloat[$P_2$ model]{
  \includegraphics[width=0.33\textwidth,trim={15mm 5mm 20mm 10mm},
  clip]{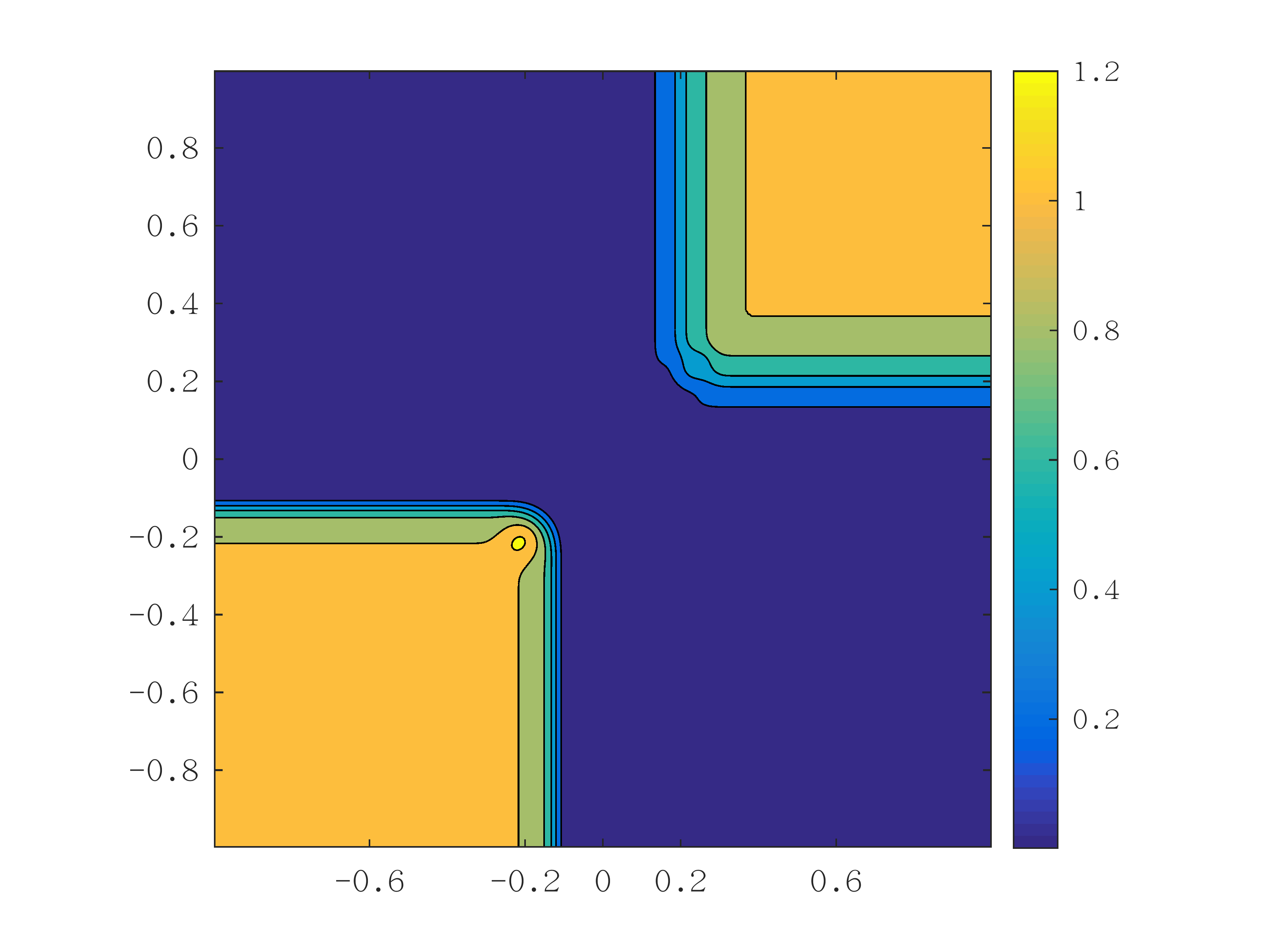}
  }
  \subfloat[$E_0$ when $y=x$]{
      \includegraphics[width=0.33\textwidth]{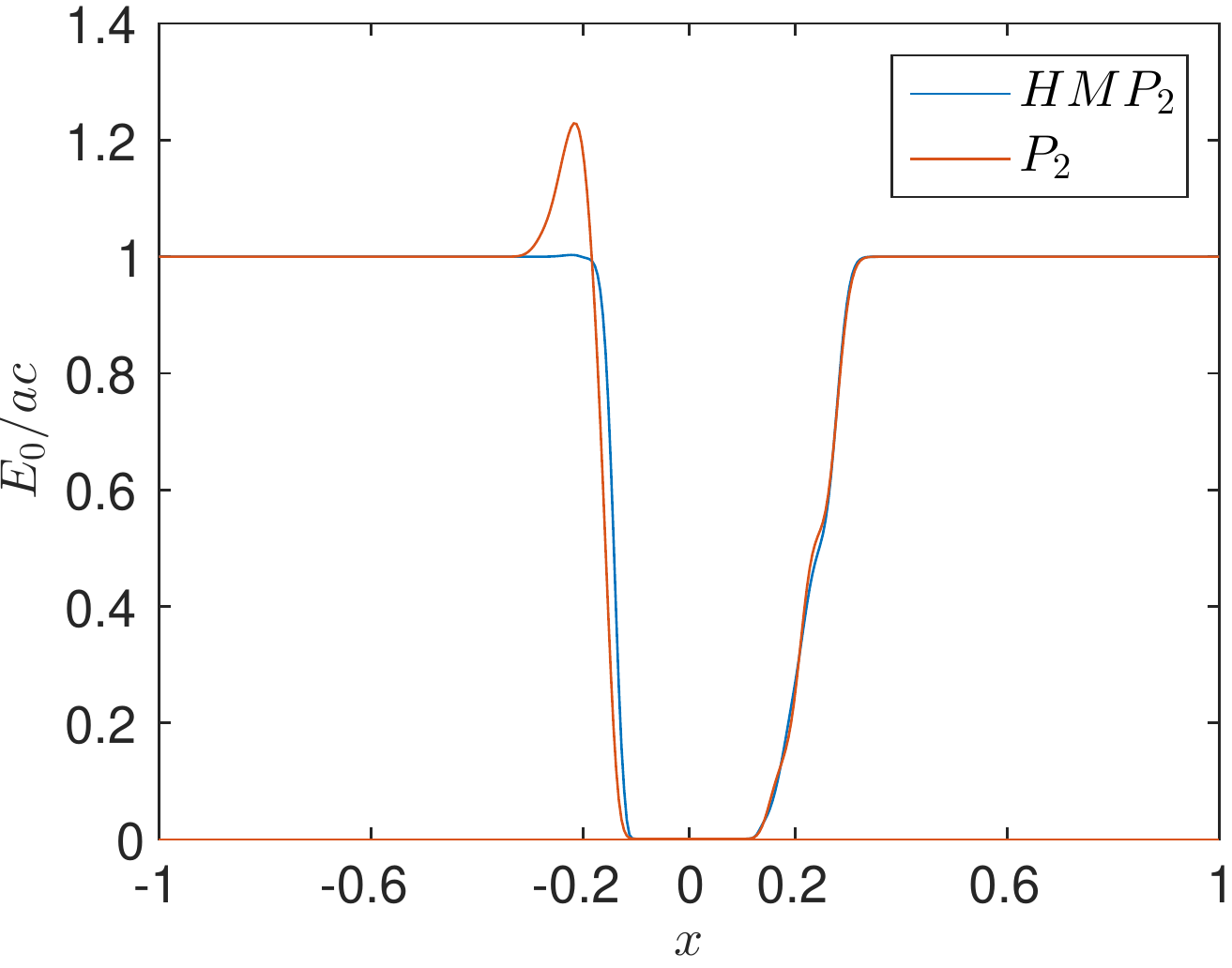}
  }\\
  \subfloat[\HMP{3} model]{
  \includegraphics[width=0.33\textwidth,trim={15mm 5mm 20mm 10mm},
  clip]{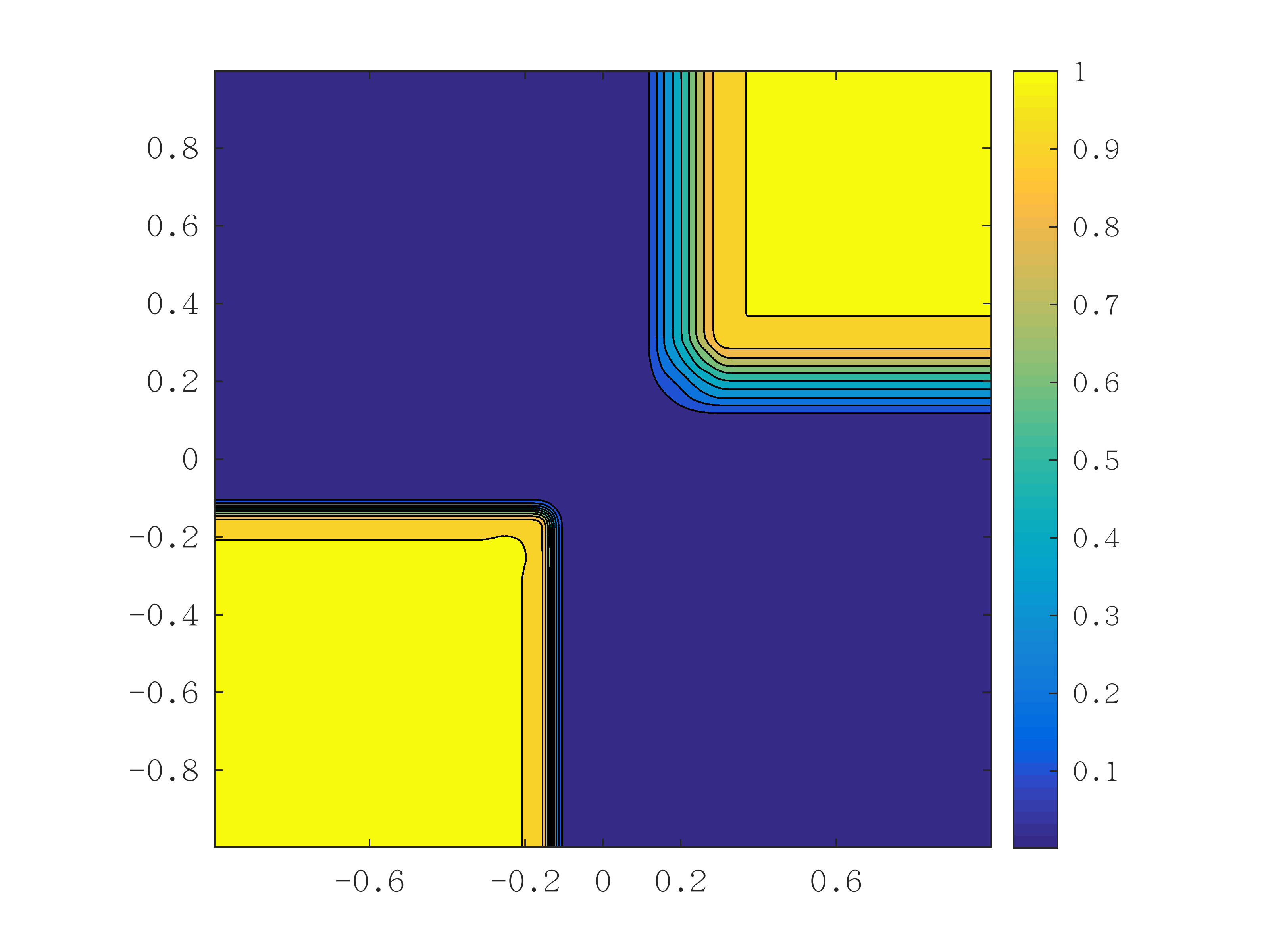}
  }
  \subfloat[$P_3$ model]{
  \includegraphics[width=0.33\textwidth,trim={15mm 5mm 20mm 10mm},
  clip]{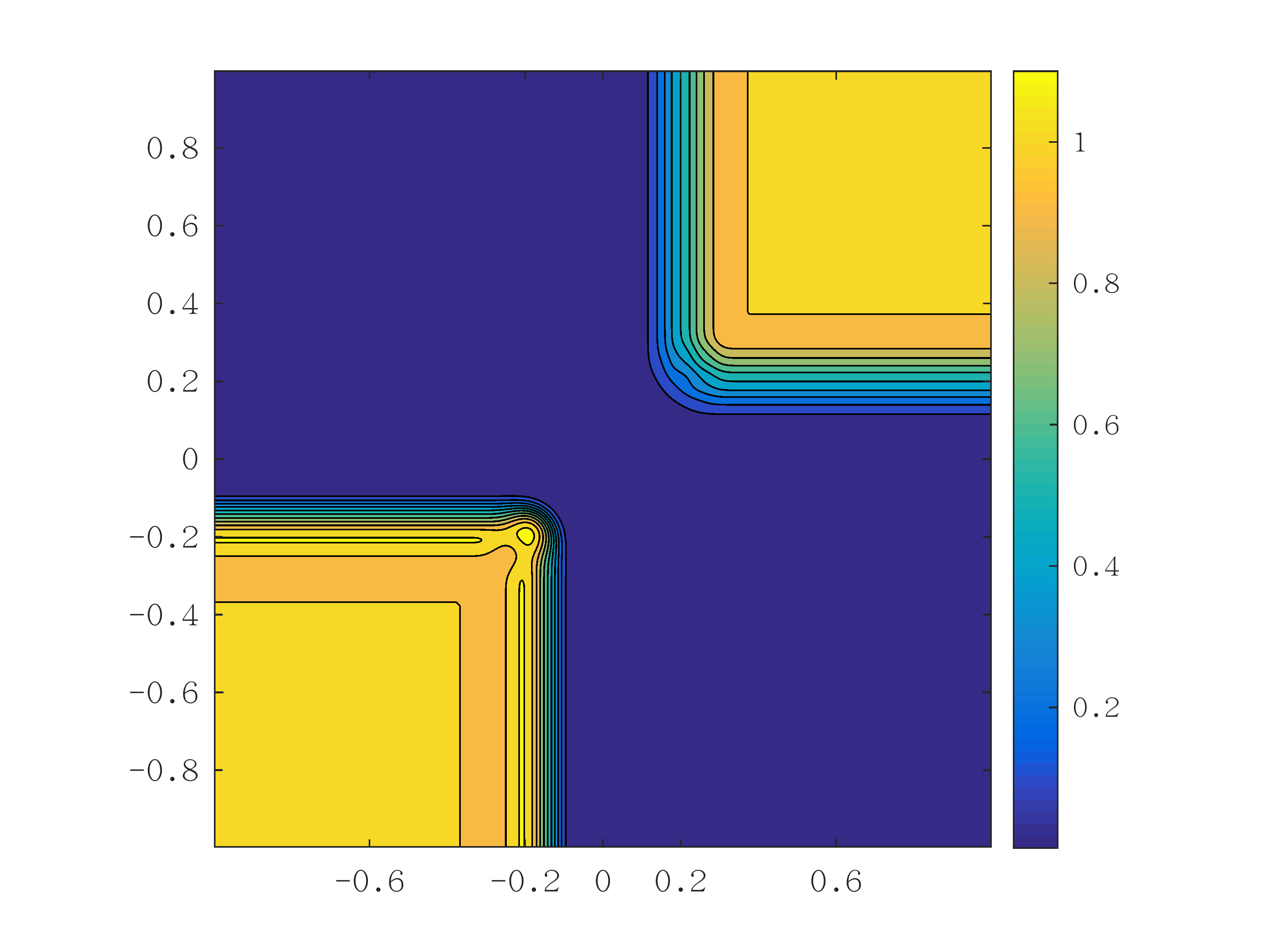}
  }
  \subfloat[$E_0$ when $y=x$]{
      \includegraphics[width=0.33\textwidth]{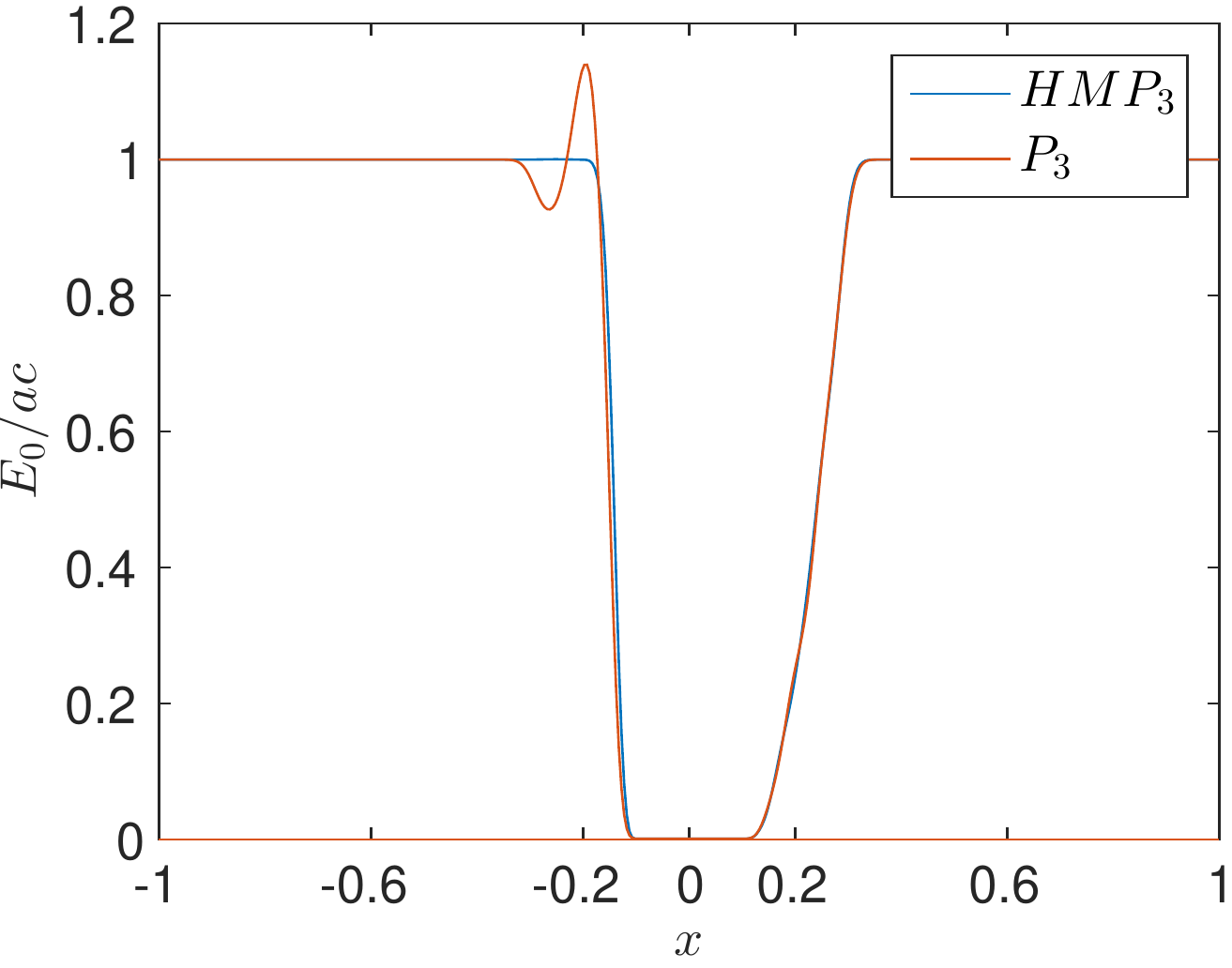}
  }\\
  \subfloat[\HMP{4} model]{
  \includegraphics[width=0.33\textwidth,trim={15mm 5mm 20mm 10mm},
  clip]{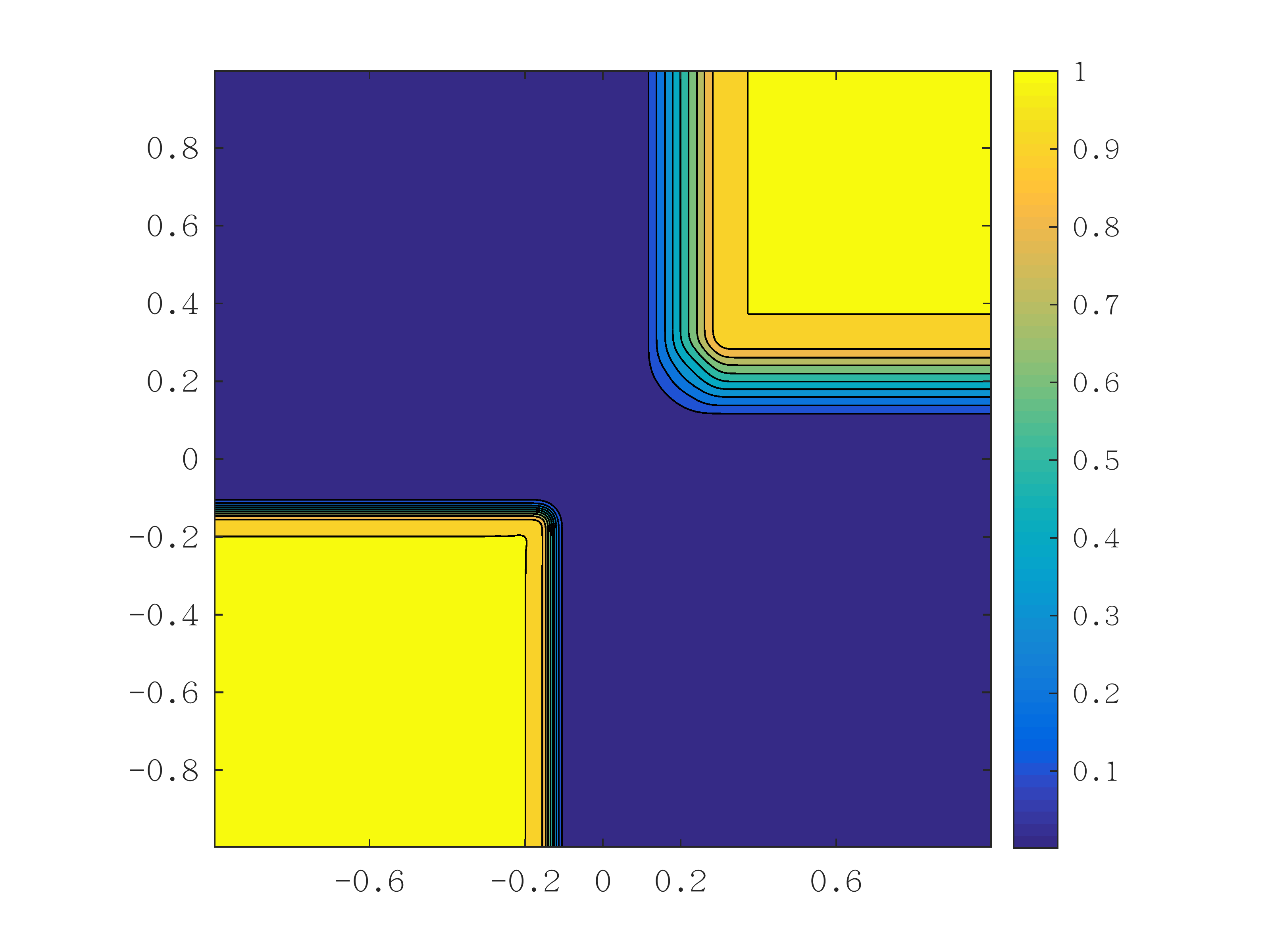}
  }
  \subfloat[$P_4$ model]{
  \includegraphics[width=0.33\textwidth,trim={15mm 5mm 20mm 10mm},
  clip]{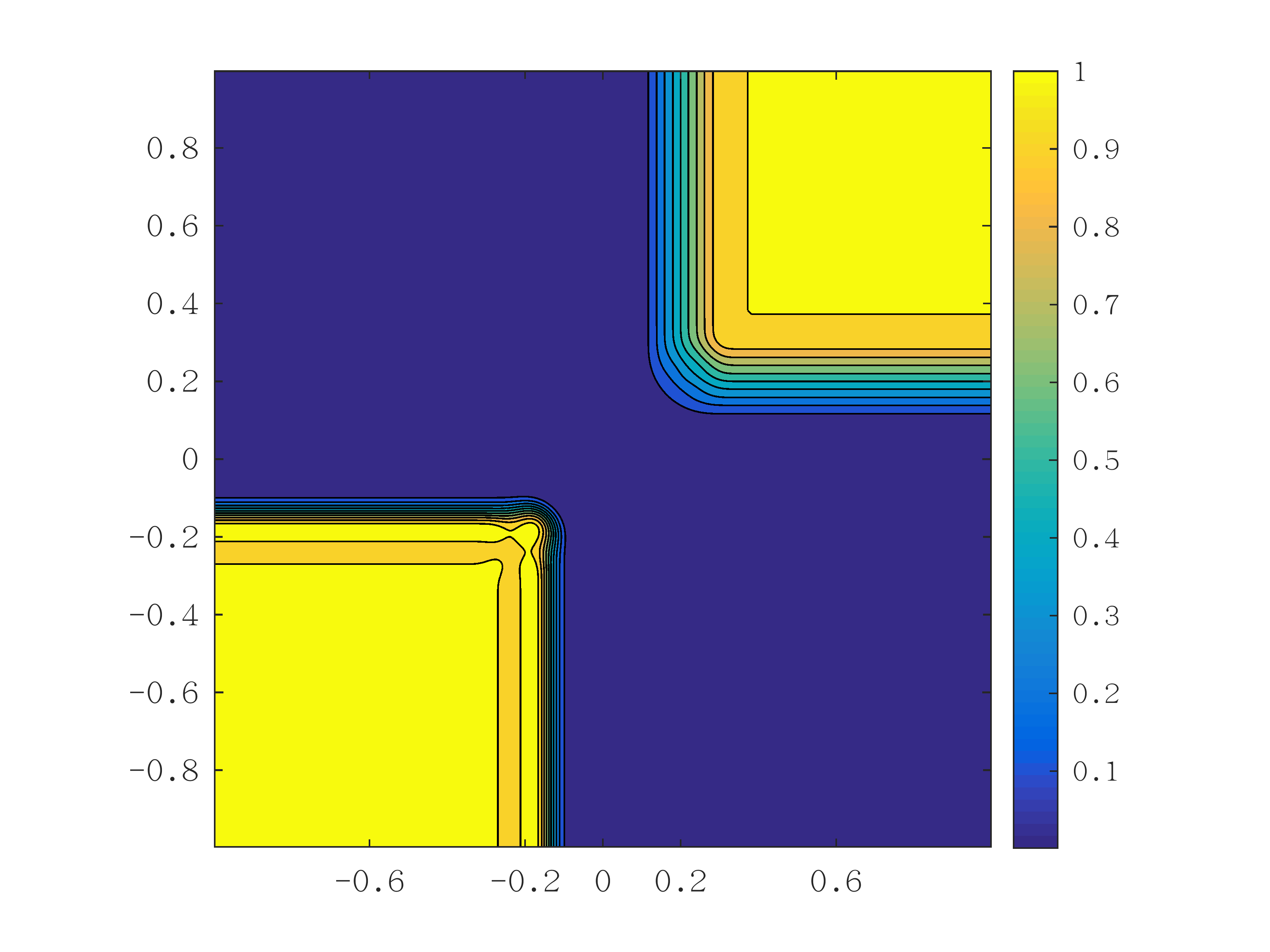}
  }
  \subfloat[$E_0$ when $y=x$]{
      \includegraphics[width=0.33\textwidth]{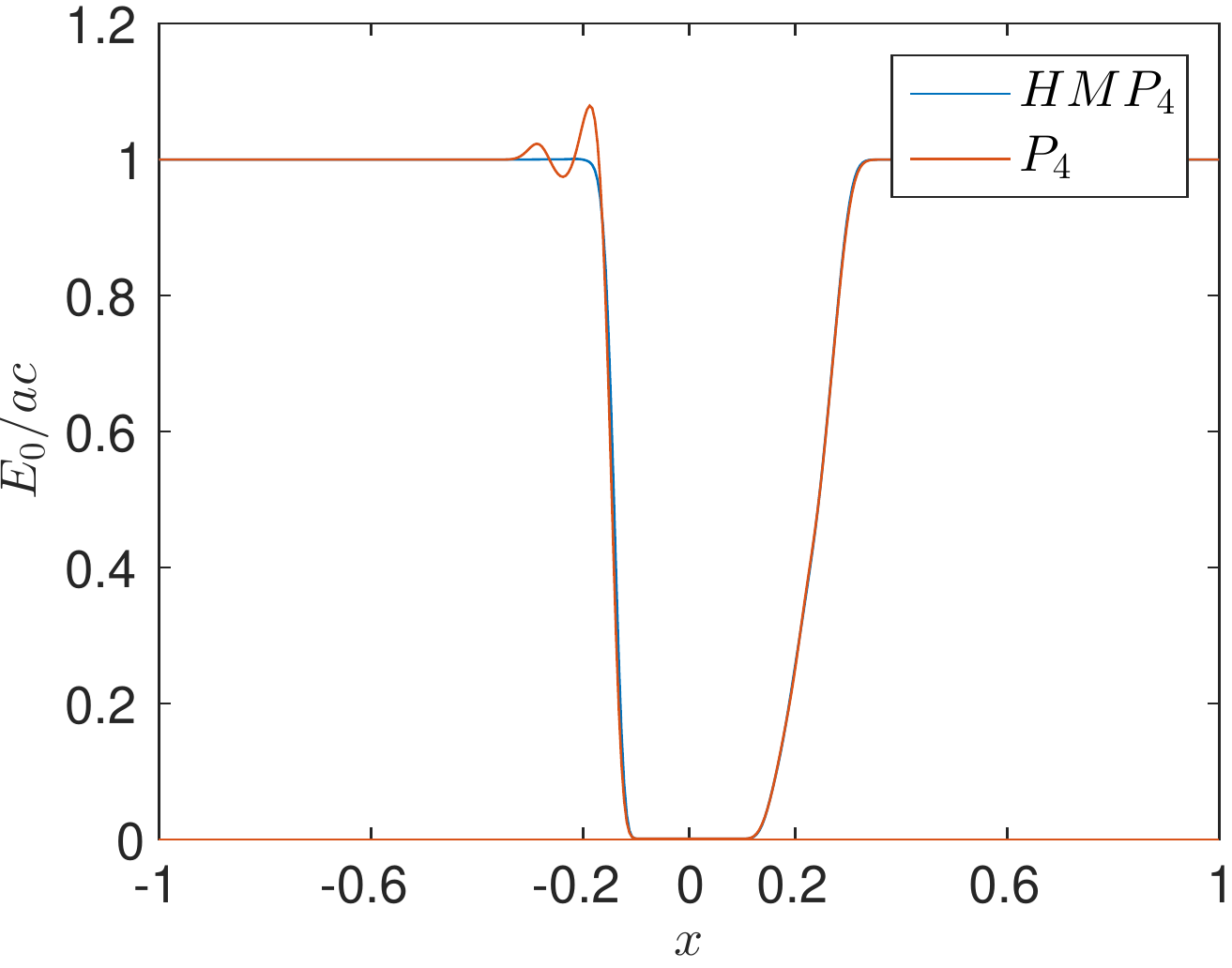}
  }

  \caption{Profile of $E_0$ of the bilateral inflow with the \HMP{2} model
  and the $P_2$ model} 
  \label{fig:testDelta}
\end{figure}
In \cref{fig:testDelta}, we present the $E_0$ of the 3D \HMPN model and
the \PN model for $N=2$, $3$, and $4$, 
and additionally, we present the $\dfrac{E_0}{ac}$ of these two
models along the line $y=x$. According to the results in
\cref{fig:testDelta}, we can conclude that in the upper-right region,
for this constant function, the \HMPN model and the \PN model gets
similar results, i.e. these two models can approximate this kind of
isotropic function well. However, in the bottom-left region,
for this Dirac delta function, the \HMPN model gets a good
approximation, but there are unphysical oscillations in the results of 
the \PN model. In particular, due to these unphysical oscillations,
along the line $y=x$, $\dfrac{E_0}{ac}$ of
the \PN model can be even greater than $1$.
Therefore, one can conclude that the \HMPN model can not only get a
good approximate on an isotropic distribution function, but also
approximate the Dirac delta function well. 

\paragraph{Lattice problem}
The lattice problem is a checkerboard of highly scattering and highly
absorbing regions loosely based on a small part of a lattice core.
The computational domain is $[0,7]\times[0,7]$, divided into 49 grids
in \Cref{fig:Lattice_problem}. In the red grids and the black grid,
the absorbing and scattering coefficients are 0 and 1 respectively. In
the white grids, the absorption and scattering are 10 and 0,
respectively. The external source $s$ is set as $ac$ in the black grid,
and 0 in other grids.  The initial state is set as $I=10^{-8}ac$, and
vacuum boundary condition are prescribed on all boundaries.
\begin{figure}[htbp]
  \centering
\begin{tikzpicture}
  \draw[step=1, help lines] (0,0) grid (7,7);
  \foreach \i in {1,3,5}
  \foreach \j in {1,3,5}
  \filldraw[red] (\i, \j) -- (\i, \j+1) -- (\i+1, \j+1) -- (\i+1,\j)
  -- cycle ;
  \foreach \i in {2,4}
  \foreach \j in {2,4}
  \filldraw[red] (\i, \j) -- (\i, \j+1) -- (\i+1, \j+1) -- (\i+1,\j)
  -- cycle ;
  
  \filldraw[black] (3, 3) -- (3, 3+1) -- (3+1, 3+1) -- (3+1,3)
  -- cycle ;
  \filldraw[white] (3, 5) -- (3, 5+1) -- (3+1, 5+1) -- (3+1,5)
  -- cycle ;
  \draw[step=1, help lines] (0,0) grid (7,7);
\end{tikzpicture}
\caption{Computational domain of the lattice problem}
\label{fig:Lattice_problem}
\end{figure}
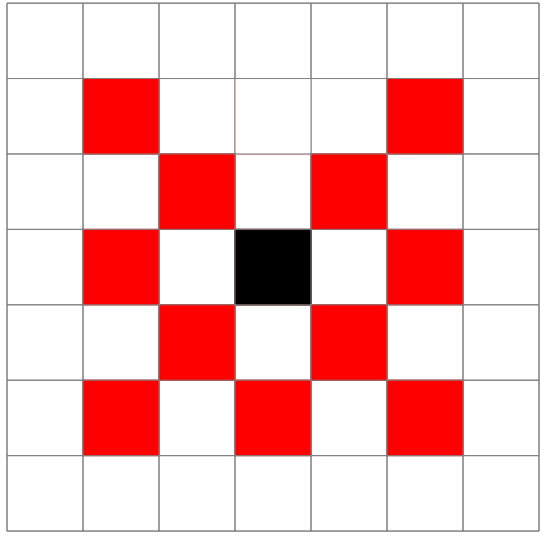

\begin{figure}[htbp]
  \centering 
  \subfloat[$N=2$]{
      \includegraphics[width=0.33\textwidth,
  trim={13mm 3mm 15mm 10mm}, clip]{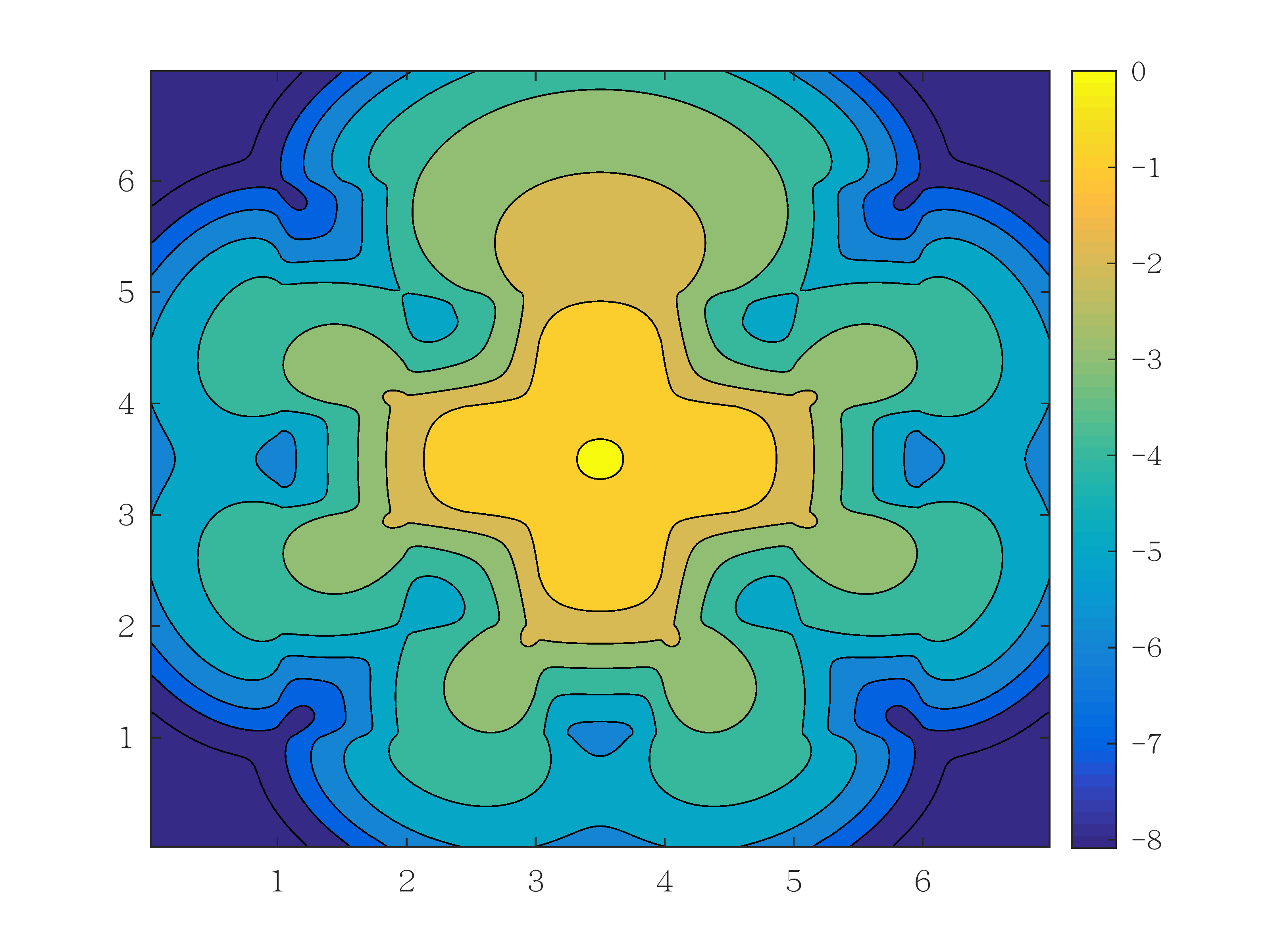}
  }
  \subfloat[$N=4$]{
      \includegraphics[width=0.33\textwidth,
  trim={13mm 3mm 15mm 10mm}, clip]{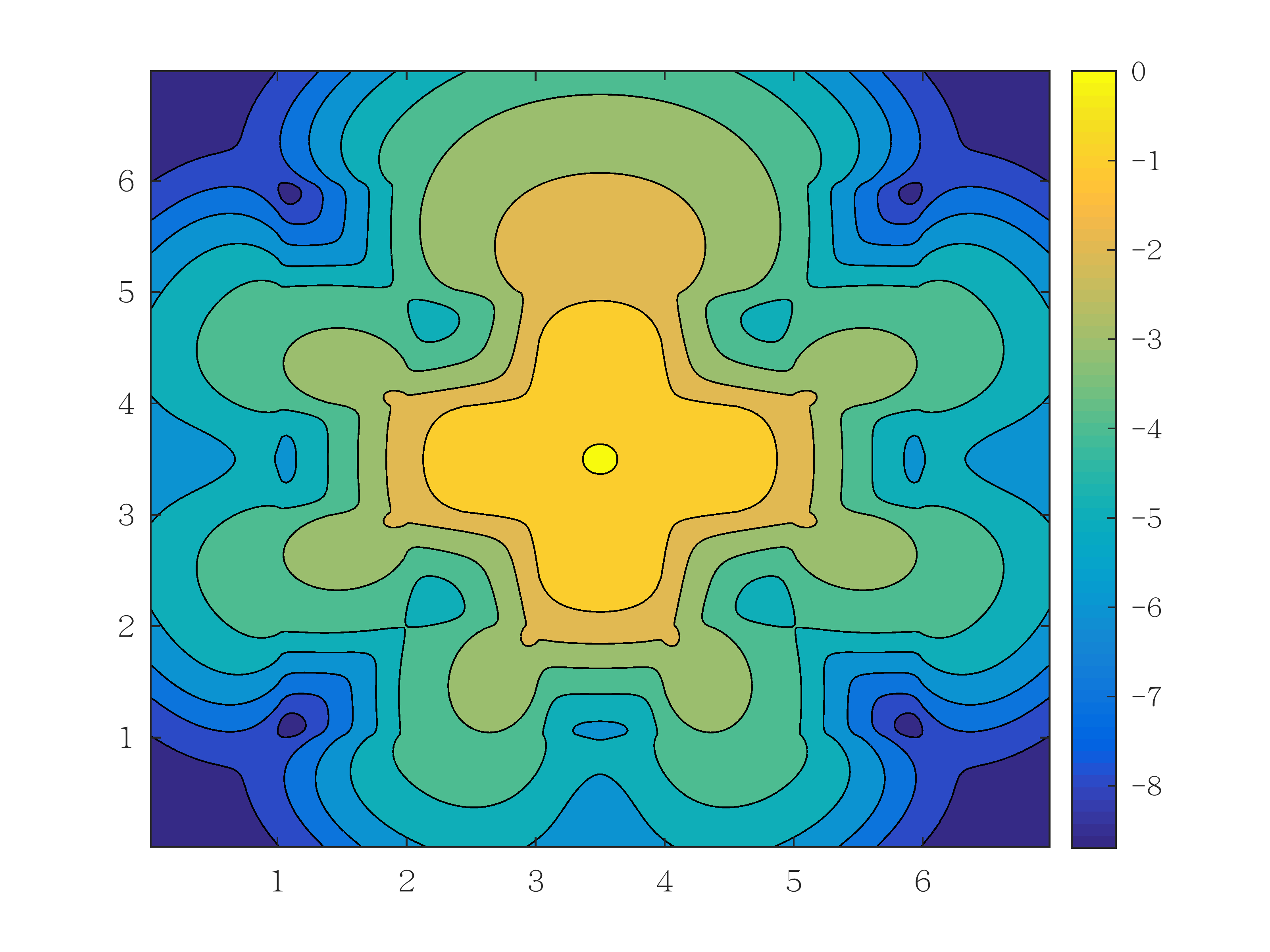}
  }
  \subfloat[$N=6$]{
      \includegraphics[width=0.33\textwidth,
  trim={13mm 3mm 15mm 10mm}, clip]{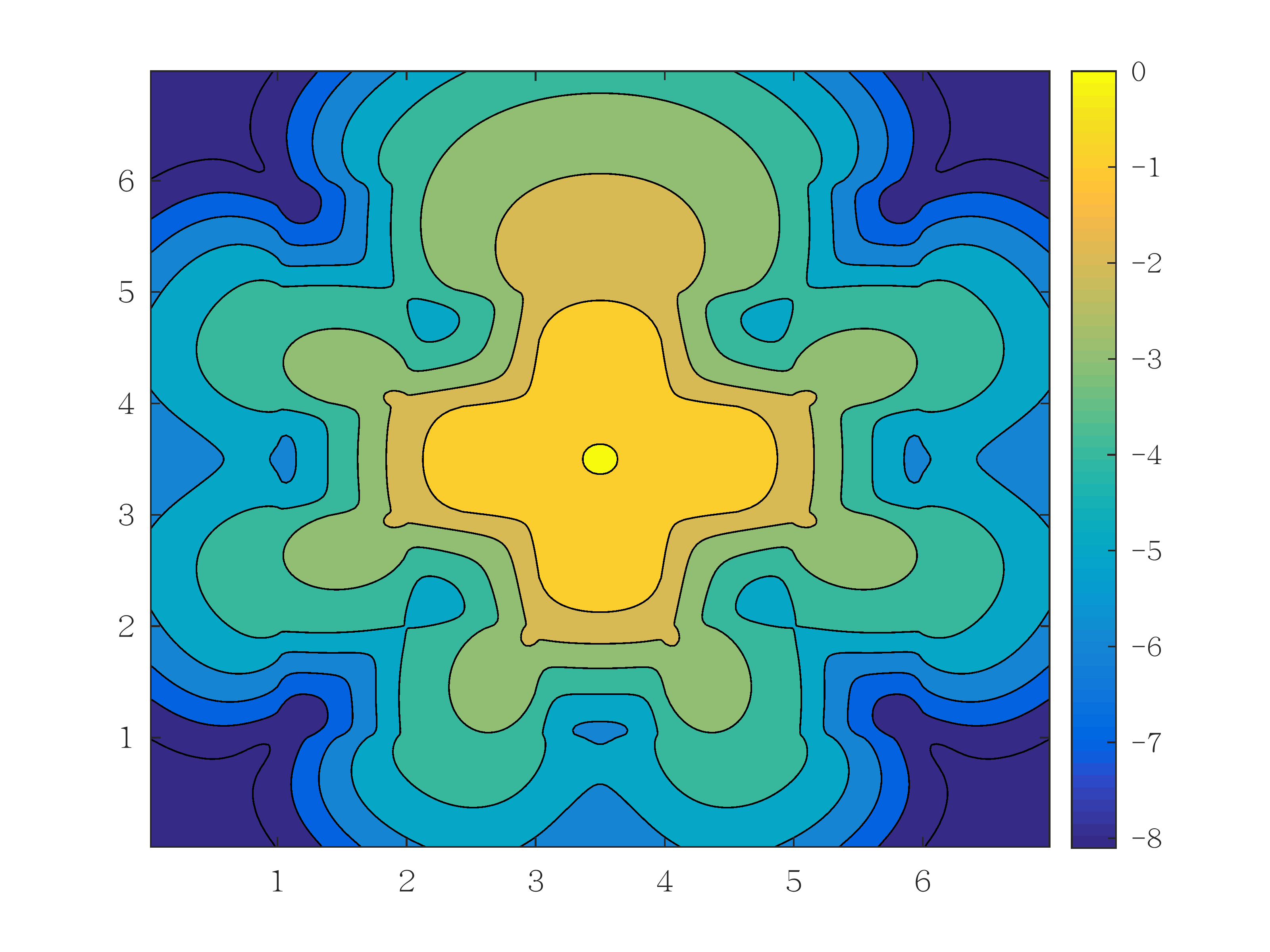}
  }
  \caption{Profile of $E_0$ of the lattice problem 
  with the \HMPN model}
  \label{fig:Lattice_problem_result}
\end{figure}
We simulate this problem with the \HMPN model with $N=2$, $4$, and $6$
till $ct_{\text{end}}=3.2$, the number of grids is $200\times 200$, 
and the results of $\log_{10}\dfrac{E_0}{ac}$ are presented in \Cref{fig:Lattice_problem_result}. 
According to the results, the \HMPN model gives a satisfying result, 
and the result of $E_0$ is always
positive in this problem.


\section{Conclusion} \label{sec:conclusion}
The 3D \HMPN model was derived as a reduced nonlinear model for RTE in
3D space. The model is hopeful to capture both very singular
specific intensity as Dirac delta function and very regular specific
intensity as constant function. The model has some mathematical
advantages, including global hyperbolicity, rotational invariance,
physical wave speeds, spectral accuracy, and correct higher-order
Eddington approximation. We validated the new model by some
preliminary numerical results. In the following, we will try to apply
the model to some practical problems.

\section*{Acknowledgements}
The authors are partially supported by Science Challenge Project,
No. TZ2016002, the CAEP foundation (No. CX20200026),
and the National Natural Science Foundation of China
(Grant No. 91630310 and 11421110001, 11421101).


\bibliographystyle{abbrv}
\bibliography{../../article,../references}
\end{document}